\documentclass[journal, two column]{IEEEtran}
\usepackage{amsmath, amssymb, amsthm, bm}
\usepackage{amssymb}
\usepackage{graphicx}
\usepackage{color, colortbl} 
\usepackage[dvipsnames,svgnames,x11names]{xcolor}
\makeatother
\usepackage{tabularx, boldline, multirow}
\usepackage[boxed,ruled,vlined,linesnumbered]{algorithm2e}
\usepackage{enumerate}
\usepackage[list=true]{subcaption}

\usepackage[font={footnotesize}]{caption}
\captionsetup{font=footnotesize, skip=5pt, position = bottom} 
\usepackage{cite}
\usepackage[normalem]{ulem}
\usepackage{enumitem}
\usepackage{pgfplots}
\usepgfplotslibrary{fillbetween}
\pgfplotsset{compat=1.17}
\usetikzlibrary{plotmarks}
\usepackage{tikz}
\usetikzlibrary {arrows.meta} 
\usepackage{pgfgantt}
\usepackage{mathtools}
\usepackage{accents}
\usepackage{balance}
\usepackage{acro}
\usepackage{arydshln}
\usepackage{scalerel}

\setlength{\textfloatsep}{10pt }
\setlength{\abovecaptionskip}{2pt} 
\setlength{\belowcaptionskip}{2pt} 
\setlength{\abovedisplayskip}{2pt}
\setlength{\belowdisplayskip}{2pt}

\usepackage{titlesec}
\titlespacing*{\section}{0pt}{10pt plus 2pt minus 2pt}{2pt plus 2pt minus 2pt}
\titlespacing*{\subsection}{0pt}{8pt plus 1pt minus 1pt}{2pt plus 1pt minus 1pt}
\titlespacing*{\subsubsection}{0pt}{5pt plus 1pt minus 1pt}{3pt plus 1pt minus 1pt}

\newtheorem{theorem}{Theorem}[]
\newtheorem{lemma}[theorem]{Lemma}
\newtheorem{proposition}{Proposition}
\newtheorem{remark}{Remark}

\newcommand{\rev}[1]{\textcolor{black}{#1}}

\usepackage[left=0.61 in,right=0.61 in,top=0.7 in, bottom=1 in]{geometry}
\DeclareAcronym{los}{
short=LoS,
long= line-of-sight,
}
\DeclareAcronym{dl}{
short=DL,
long= deep learning,
}
\DeclareAcronym{nn}{
short=NN,
long= neural network,
}
\DeclareAcronym{cnn}{
short=CNN,
long= convolutional neural network,
}
\DeclareAcronym{mmimo}{
short=mMIMO,
long= massive multiple-input  multiple-output,
}
\DeclareAcronym{siso}{
short=SISO,
long= single-input single-output,
}
\DeclareAcronym{ap}{
short=AP,
long= access point,
}
\DeclareAcronym{5g}{
short=5G,
long= fifth generation network,
}
\DeclareAcronym{ofdm}{
short=OFDM,
long= orthogonal frequency division multiplexing,
}
\DeclareAcronym{cpu}{
short=CPU,
long= central processing unit,
}
\DeclareAcronym{pn}{
short=PN,
long= phase noise,
}
\DeclareAcronym{ue}{
short=UE,
long= user equipment,
}
\DeclareAcronym{lo}{
short=LO,
long= local oscillator,
}
\DeclareAcronym{se}{
short=SE,
long= spectral efficiency,
}
\DeclareAcronym{sinr}{
short=SINR,
long= signal-to-interference-plus-noise ratio,
}
\DeclareAcronym{uatf}{
short=UatF,
long= use-and-then-forget,
}
\DeclareAcronym{lmmse}{
short=LMMSE,
long= linear minimum mean square error,
}
\DeclareAcronym{mse}{
short=MSE,
long= mean square error,
}
\DeclareAcronym{mmse}{
short=MMSE,
long= minimum mean square error,
}
\DeclareAcronym{hwi}{
short=HWI,
long= hardware impairment,
}
\DeclareAcronym{mr}{
short=MR,
long= maximum ratio,
}
\DeclareAcronym{pmmse}{
short=P-MMSE,
long= partial MMSE,
}
\DeclareAcronym{ha}{
short=HA,
long= hardware-aware,
}
\DeclareAcronym{fir}{
short=FIR,
long= finite impulse response,
}
\DeclareAcronym{dft}{
short=DFT,
long= discrete Fourier transform,
}
\DeclareAcronym{idft}{
short=IDFT,
long= inverse discrete Fourier transform,
}
\DeclareAcronym{rb}{
short=RB,
long= resource block,
}
\DeclareAcronym{isi}{
short=ISI,
long= intersymbol interference,
}
\DeclareAcronym{ici}{
short=ICI,
long= intercarrier interference,
}
\DeclareAcronym{cp}{
short=CP,
long= cyclic prefix,
}
\DeclareAcronym{cpe}{
short=CPE,
long= common phase error,
}
\DeclareAcronym{hu}{
short=HU,
long= hardware-unaware,
}
\DeclareAcronym{pna}{
short=PN-aware,
long= PN-aware,
}
\DeclareAcronym{pnu}{
short=PN-unaware,
long= PN-unaware,
}
\DeclareAcronym{csi}{
short=CSI,
long= channel state information,
}
\DeclareAcronym{lpmmse}{
short=LP-MMSE,
long= local-partial MMSE,
}
\DeclareAcronym{dcc}{
short=DCC,
long= dynamic cooperation clustering,
}


\begin{document}

\title{Uplink Cell-Free Massive MIMO OFDM with Phase Noise-Aware Channel Estimation: Separate and Shared \rev{Local Oscillators}
\thanks{
A limited subset of initial results was presented at  IEEE GLOBECOM 2023, Kuala Lumpur, Malaysia~\cite{wu2023phasenoise}.

Yibo~Wu is with Ericsson Research, Ericsson, and the Department of Electrical Engineering, Chalmers University of  Technology, Gothenburg, Sweden (email: yibo.wu@ericsson.com).

Luca~Sanguinetti is with Dipartimento di Ingegneria dell’Informazione, University of Pisa, Pisa, Italy (email: luca.sanguinetti@unipi.it)

Ulf~Gustavsson is with Ericsson Research, Ericsson, Gothenburg, Sweden (e-mail: ulf.gustavsson@ericsson.com).

Musa~Furkan~Keskin, Alexandre~Graell~i~Amat, and Henk~Wymeersch are with the Department of Electrical Engineering, Chalmers University of Technology, Gothenburg, Sweden (emails: \{furkan, alexandre.graell, henkw\}@chalmers.se).

This work was supported by the Swedish Foundation for Strategic Research (SSF grant ID19-0021),  the Swedish Research Council (VR grants 2022-03007 \rev{and 2024-04390}), and scholarships from Chalmersska forskningsfonden and S.o.KG. Eliassons minnes- och tilläggsfonder. The computations were enabled by resources provided by the National Academic Infrastructure for Supercomputing in Sweden (NAISS), partially funded by the Swedish Research Council through grant agreement no. 2022-06725.  L. Sanguinetti was partially supported by the Italian Ministry of Education and Research (MUR) in the framework of the FoReLab project (Departments of Excellence), and by the Resilience Plan (NRRP) of NextGenerationEU, partnership on “Telecommunications of the Future” (PE00000001 – Program “RESTART”, Structural Project 6GWINET, Cascade Call SPARKS).
  }
}
\author{Yibo~Wu,~\IEEEmembership{Graduate~Student~Member,~IEEE},
Luca~Sanguinetti,~\IEEEmembership{Fellow,~IEEE},
Musa~Furkan~Keskin,~\IEEEmembership{Member,~IEEE},
    Ulf~Gustavsson,
Alexandre~Graell~i~Amat,~\IEEEmembership{Senior~Member,~IEEE} and
    Henk~Wymeersch~\IEEEmembership{Fellow,~IEEE}
        }
    \maketitle
\thispagestyle{empty}

\begin{abstract}
Cell-free \ac{mmimo} networks enhance coverage and spectral efficiency (SE) by distributing antennas across access points (APs) with phase coherence between APs. However, the use of cost-efficient local oscillators (LOs)  introduces phase noise (PN) that compromises phase coherence, even with centralized processing. Sharing an LO across APs can reduce costs in specific configurations  but cause correlated PN between APs, leading to correlated interference that affects centralized combining. This can be improved by exploiting the PN correlation in channel estimation.  This paper presents an uplink \ac{ofdm} signal model for PN-impaired cell-free \ac{mmimo}, addressing gaps in single-carrier signal models. We evaluate mismatches from applying single-carrier methods to OFDM systems, showing how they underestimate the impact of PN and produce over-optimistic achievable SE predictions. Based on our OFDM signal model, we propose two PN-aware channel and \acl{cpe} estimators: a distributed estimator for uncorrelated PN with separate LOs and a centralized estimator with shared LOs. We introduce a deep learning-based channel estimator to enhance the performance and reduce the number of iterations of the centralized estimator. The simulation results show that the distributed estimator outperforms mismatched estimators with separate LOs, whereas the centralized estimator enhances distributed estimators with shared LOs.
\end{abstract}
\begin{IEEEkeywords}
    Cell-free massive MIMO, OFDM, phase noise, channel estimation, spectral efficiency.
\end{IEEEkeywords}

\section{Introduction}
\Acf{mmimo} networks enhance network capacity and reliability by coherently transmitting signals across multiple antennas, which improves \acf{se}~\cite{bjornson2016massive}. In cell-free \ac{mmimo} networks, antennas are distributed across multiple \acp{ap}, which collaborate to serve multiple \acp{ue}, effectively eliminating traditional cell boundaries. This coordination reduces inter-cell interference, enhances coverage, and significantly improves \ac{se}~\cite{cell_free_book}. These gains require coherent transmission, demanding time, frequency, and phase synchronization between distributed APs, which poses challenges like high overhead and limited scalability~\cite{bjornson2020scalable}. In particular, \ac{pn} introduced by imperfect \acp{lo} degrades system performance if not properly managed.

To address the synchronization challenge in cell-free \ac{mmimo} networks, groups of APs can be connected via cables, as seen in the radio stripes scenario~\cite{interdonato2019ubiquitous,shaik2021mmse}. This setup allows groups of APs to share a common oscillator, thereby simplifying the synchronization process. However, a common \ac{lo} introduces correlated \ac{pn} across APs, leading to correlated PN interference during centralized uplink combining or downlink precoding. Studies have shown that such correlated PN significantly degrades system performance, even with distributed mitigation techniques~\cite{bjornson2015massive,papazafeiropoulos2021scalable}. Despite this correlation challenge, the correlation of PN also opens the opportunity for centralized mitigation in the \ac{cpu}, a potential not yet fully explored in cell-free \ac{mmimo} studies.

\subsection{Motivation}
This paper seeks to investigate the impact of PN on cell-free \ac{mmimo} \ac{ofdm} networks, considering scenarios with both separate and shared LOs among APs. While PN effects in both cellular and cell-free \ac{mmimo} systems have been widely studied~\cite{bjornson2015massive, pitarokoilis2016performance, pitarokoilis2014uplink, papazafeiropoulos2021scalable, zheng2023asynchronous, jin2020spectral, ozdogan2019performance}, they primarily use single-carrier models. The achievable rates derived from these models tend to be overly optimistic when applied to \ac{ofdm} systems, due to underestimating the actual impact of \ac{pn}. Additionally, applying their methods, such as channel estimators, uplink combiners, and downlink precoders, to \ac{ofdm} models leads to inaccuracies and results in performance degradation. The effect of these inaccuracies on network design—including the number of \acp{ap} and \acp{ue}, \ac{lo} quality, and power allocation—is unclear and requires thorough investigation for practical OFDM models.

Specifically, among the works with the single-carrier PN model, \cite{bjornson2015massive} and~\cite{papazafeiropoulos2021scalable} analyze the effects of PN for both co-located and distributed antenna scenarios, highlighting the issue of correlated PN interference with shared LOs in co-located antenna scenarios. However, their works do not offer solutions to the correlated PN interference issue. \rev{This correlated PN phenomenon can also arise in distributed massive MIMO systems, such as radio stripe networks~\cite{interdonato2019ubiquitous,shaik2021mmse}, where APs are connected by cables and can share one or a small set of LOs to reduce hardware expenses. Although correlated PN leads to correlated interference, the shared LOs structure can be leveraged in a centralized way for enhanced channel and PN estimation, helping to alleviate the resulting interference.} Although~\cite{krishnan2014impact,pitarokoilis2016performance} study the impact of \ac{pn} in OFDM systems, they assume perfect channel knowledge, limiting practical use.  Other studies have limited scope or employ suboptimal techniques, such as overlooking the statistical characteristics of correlated PN across the APs~\cite{jin2020spectral,zheng2023asynchronous,ozdogan2019performance}, or using only maximum likelihood channel estimation~\cite{pitarokoilis2014uplink} and \ac{mr} combining~\cite{pitarokoilis2014uplink,jin2020spectral}, which perform poorly compared to the \ac{mmse} combiner in cell-free \ac{mmimo}~\cite{bjornson2019making}. Additionally, \cite{zheng2023asynchronous} uses rate-splitting techniques to mitigate PN effects, but their techniques become infeasible with a large number of UEs and APs. The work~\cite{ozdogan2019performance} only considers deterministic phase shifts rather than stochastic PN. 

Although PN mitigation in \ac{ofdm} systems is not new~\cite{petrovic2007effects}, no systematic study focuses on accurately modeling and mitigating PN in cell-free \ac{mmimo} OFDM systems, considering both separate and shared LO scenarios. In response to this research gap, this paper examines the impact of \ac{pn} on uplink cell-free \ac{mmimo} OFDM systems. Specifically, we aim to investigate the effects of mismatched PN-aware solutions from single-carrier models applied to OFDM models in cell-free \ac{mmimo} networks, quantifying the impact on achievable rate predictions and network parameters. Moreover, we propose new PN-aware channel estimation algorithms for both separate and common LO scenarios to mitigate uncorrelated and correlated \ac{pn} interference, where the latter is missing in the cell-free \ac{mmimo} literature.

\subsection{Contributions}
Our specific contributions include:\footnote{This paper significantly expands upon our previous conference work~\cite{wu2023phasenoise} by incorporating the shared common LO scenario, proposing a novel centralized estimator to address the PN correlation issue, and providing extensive comparisons with the mismatched single-carrier estimator, demonstrating their limitations and overly optimistic performance predictions in OFDM systems.}
\begin{itemize}
    \item \textbf{Uplink OFDM signal model with both correlated and uncorrelated PN in cell-free \ac{mmimo}:} We develop an uplink cell-free \ac{mmimo} OFDM signal model that incorporates both uncorrelated and correlated PN, by extending the uncorrelated PN model from \ac{siso} systems~\cite{petrovic2007effects}. This OFDM model with PN fills a gap in existing single-carrier signal model \ac{mmimo} studies with PN~\cite{papazafeiropoulos2021scalable,bjornson2015massive}. Our model accurately captures the impact of PN on channel aging.

    \item \textbf{Uplink achievable SE expression:}
     We derive a novel uplink achievable SE expression for cell-free \ac{mmimo} OFDM networks under PN, which accurately reflectes the actual impact of PN in OFDM models compared to the single-carrier achievable SE expressions  in~\cite{papazafeiropoulos2021scalable}. This provides realistic performance predictions that guide network design decisions about the number of APs and UEs, LO quality, and power allocation.
     
    \item \textbf{Centralized Channel and \acf{cpe} Estimator for Shared LO:}
    We propose a centralized channel and \ac{cpe} estimator to address the correlated PN problem in the case of a shared common LO among APs. This estimator alternates between distributed channel estimation and centralized LMMSE CPE estimation, improving the accuracy by exploiting the PN correlation.
    \item \textbf{Two Distributed PN-Aware Channel 
 and CPE Estimators for Separate LOs:}
    We propose two distributed channel estimators for scenarios with separate LOs among APs. The first estimator, based on LMMSE, jointly estimates the channel and \ac{cpe}, demonstrating enhanced performance compared to the single-carrier estimator~\cite{papazafeiropoulos2021scalable,bjornson2015massive} across various OFDM scenarios.  The second is a \ac{dl}-based estimator, serving as an initializer for the centralized estimator. This estimator improves the accuracy of channel estimation and decreases the number of iterations needed for the centralized estimator.
\end{itemize}

\subsection{{Paper Outline and Notation}}
The remainder of this paper is organized as follows. Section II covers preliminaries, including the channel model and pilot assignment in the cell-free \ac{mmimo} OFDM network. Section III details the system model, encompassing the PN model and both accurate and mismatched signal models under PN. Section IV describes the proposed distributed PN-aware LMMSE joint channel and CPE estimator and the deep learning channel estimator. Section V introduces the centralized alternative channel and CPE estimator. Section VI derives a novel achievable uplink SE expression. Section VII provides simulation results, comparing the proposed estimators with existing methods, highlighting improvements in SE and channel estimation accuracy, and exposing the optimistic bias of the mismatched signal model. Finally, Section VIII concludes with key findings and future research directions.

Lowercase and uppercase boldface letters, e.g., $\boldsymbol{x}$ and $\boldsymbol{X}$, denote column vectors and matrices. The superscripts $^{\mathsf{T}}$, $^*$, $^{\mathsf{H}}$, and $^{\dagger}$ denote transpose, conjugate, conjugate transpose, and pseudo-inverse, respectively. Variables with a check mark $\check{}$, e.g., $\check{x}$, are in the time-domain. $\check{x}_n$ is the $n$-th time sample in the time-domain, and ${x}_n$ is the $n$-th subcarrier in the frequency-domain. The $N \times N$ identity matrix is $\mathbf{I}_N$, and $\text{diag}(\boldsymbol{x})$ is a diagonal matrix with $\boldsymbol{x}$ on the diagonal. The expected value of $\boldsymbol{x}$ is $\mathbb{E}\{\boldsymbol{x}\}$.

\section{Preliminaries}
We consider a cell-free  \ac{ofdm} network comprising $L$ randomly distributed single-antenna \acp{ap}, connected to a \ac{cpu} via a fronthaul network and serving $K$ single-antenna \acp{ue}.\footnote{\rev{A single-antenna configuration for both APs and UEs is valid for cell-free mMIMO, as analyzed in~\cite{bjornson2019making}. In this setup, the many distributed antennas at the APs collectively form a large virtual array, providing array gains comparable to those in conventional cellular \ac{mmimo} systems, e.g.~\cite{bjornson2019making}.} } Each \ac{ofdm} symbol consists of $N$ subcarriers with spacing $\Delta_f$. A \ac{cp} length of $N_{\text{CP}}$ is considered. The signal bandwidth is $W= N\Delta_f$ so that the sampling time is $T_s= 1/W$. The \ac{ofdm} symbol time is $T=(N+N_{\text{CP}})T_s$.

\begin{figure}[t]
    \centering
    \begin{tikzpicture}
[font=\scriptsize, draw=black!100, x=0.27cm,y=0.3cm
]
\usetikzlibrary {arrows.meta} 

\definecolor{color1}{rgb}{0.83921568627451,0.152941176470588,0.156862745098039}
\definecolor{color2}{rgb} {0.12156862745098,0.466666666666667,0.705882352941177}
\definecolor{color3}{rgb}{1,0.498039215686275,0.0549019607843137}
\definecolor{color4}{rgb}{0.172549019607843,0.627450980392157,0.172549019607843}

\draw[thick,-Stealth] (-2.5,-5.5) -- (25,-5.5) node[anchor=north ] {Time};
\draw[thick,-Stealth] (-2.5,-5.5) -- (-2.5,9) node[anchor=north east] {Frequency};

\foreach \x in {0,1,...,19.5}
    \foreach \y in {-1,-.5,...,4.5}
        \fill[black!15!white] (\x,\y) rectangle (\x+0.95,\y+0.45);

\draw[xstep=20,ystep=0.5,black, thick] (0,-1) grid (20,5);

\foreach \x in {0,1,...,11}
        \fill[color4] (\x,-\x/2+4.5) rectangle (\x+0.95,-\x/2+4.95);
\foreach \x in {12,13,...,19}
        \fill[color4] (\x,\x/2-6.5) rectangle (\x+0.95,\x/2-6.95);


\foreach \x in {-2,-1,...,9}
\fill[color2] (0,\x/2) rectangle (0+0.95,\x/2+0.45);
\foreach \x in {2,3,...,9}
\fill[color2] (1,\x/2) rectangle (1+0.95,\x/2+0.45);

\fill[color4](0,4.72)rectangle (0+0.95,4.95);
\fill[color4](1,4.22)rectangle (1+0.95,4.45);

 
\fill[color4](5,7.5) rectangle (5+0.95,7.95) node[midway, above] {PP1};
\fill[color2](12,7.5) rectangle (12+0.95,7.95)node[midway, above] {PP2};

  \draw[Stealth-Stealth,thick] (0,5.5)   -- (20,5.5) node[midway, above] {$\tau_c=20$ OFDM Symbols};
    \draw[Stealth-Stealth,thick] (20.5,-1)   -- (20.5,5.) node[midway,right=-0.5] {\begin{tabular}{l}
         Coherence\\
         bandwidth\\
         $W_c$ 
    \end{tabular}};

\foreach \ysub in{4.75,-1.1}
    \draw[-Stealth,thick] (-2,1.5)   -- (0,\ysub) node[midway, left] {};
\draw[-Stealth] (-2,1.5)   -- (0,4.75) node[midway, below left=0. and 0.15] {\begin{tabular}{c}
       $N_c=12$\\
     Subcarriers 
\end{tabular} };



\filldraw[color2,draw=black,thick] (0, 0) rectangle (0.95,-.5);
\filldraw[black!15!white,draw=black,thick] (18., -.5) rectangle (19,-1.);

\draw[-Stealth,thick] (.5,-.5)   -- (.5,-2) node[near end, pos=1.8] {\begin{tabular}{c}
   Pilot sample  \\
    $s_{t_k,10}^{(1)}$
\end{tabular}};
\draw[-Stealth,thick] (18.5,-1)   -- (18.5,-2.2) node[midway, below =0.] {\begin{tabular}{c}
     Data sample \\
     $s_{t_k,9}^{(19)}$
\end{tabular}};

\node[]()at (-0.45,2) {\begin{tabular}{c} .\\ \end{tabular}};
\node[]()at (-0.45,1.7) {\begin{tabular}{c} .\\ \end{tabular}};
\node[]()at (-0.45,1.4) {\begin{tabular}{c} .\\ \end{tabular}};


\node[]()at (10,-2.8) {\begin{tabular}{c} \vdots \end{tabular}};
\node[]()at (10,-5) {\begin{tabular}{c} coherence block $r+1$ \end{tabular}};



\end{tikzpicture}
    \caption{Illustration of coherence block $r$ on the time-frequency plane with $\tau_c N_c$ channel uses, where $\tau_p=20$ are for pilot transmission using pilot sequence $\boldsymbol{s}^\text{p}_{t_k} \in \mathbb{C}^{20}$ for UE $k$, following pilot pattern PP1 and PP2 with a pilot sample $s^{(\tau)}_{t_k,n}$ at subcarrier $n$ and OFDM symbol $\tau$. }
    \label{fig:coherent_block}
\end{figure}
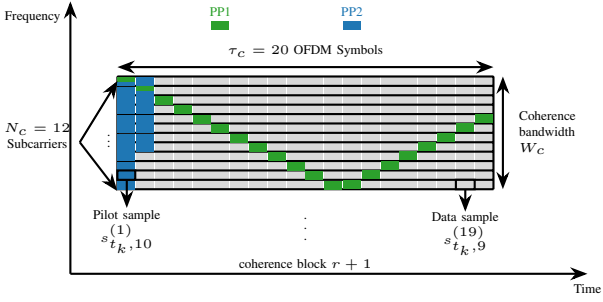

\nopagebreak
\subsection{Block Fading Channel Model}
The time-domain channel $\check{\boldsymbol{h}}_{k,l} = [\check{h}_{k,l,0}, \cdots, \check{h}_{k,l,Q-1}]^{\mathsf{T}}$ between \ac{ue} $k$ and \ac{ap} $l$ is modeled as a $Q$-tap \acl{fir} filter.
The frequency-domain channel is obtained by an $N$-point \ac{dft} on $\check{\boldsymbol{h}}_{k,l}$. We adopt the \ac{mmimo} TDD protocol from~\cite{bjornson2017massive}, which assumes that the time-frequency resources are divided into coherence blocks with time-invariant and frequency-flat channels within each block. This is shown in Fig.~\ref{fig:coherent_block} for an arbitrary coherence block $r$ with two different pilot patterns. Each coherence block has coherence time $T_c = \tau_c T$ and coherence bandwidth $W_c=N_{c}\Delta f$, \rev{spanning over $\tau_c$ OFDM symbol time and $N_c$ subcarriers.  The total number of samples in a coherence block is $(\tau_c N_c)$, including $\tau_p$ pilot samples and  $(\tau_c N_c- \tau_p)$ data samples.} Subcarriers in each \ac{ofdm} symbol are divided into $R=\lceil N/N_c \rceil$ coherence blocks, with the subcarrier set in block $r \in \{1,\cdots, R\}$ given by $\mathcal{R}_r=\{(r-1)N_{c},\cdots, rN_{c}-1\}$. The frequency-domain channel between UE $k$ and AP $l$ over subcarrier $n \in \mathcal{R}_r$ is denoted by $h_{k,l,n} \sim \mathcal{N}_{\mathbb{C}}(0,\beta_{k,l})$, where $\beta_{k,l}$ is the large-scale fading coefficient.

Under the coherence block fading channel model, we have $h_{k,l,n_1} = h_{k,l,n_2}$ for ${n_1, n_2} \in \mathcal{R}_r$ and all $\tau_c$ OFDM symbols. Channels of different \acp{ue}, APs, and coherence blocks are uncorrelated~\cite{bjornson2017massive,bjornson2020scalable}, i.e.,
\begin{align}
    \mathbb{E}\{h_{k_1,l,n} h_{k_2,l,n}\}= \mathbb{E}\{h_{k,l_1,n} h_{k,l_2,n}\} =0
    \label{eq:channel_uncorr}
\end{align}
for  $k_1\neq k_2$, or $l_1\neq l_2$, and $\mathbb{E}\{h_{k,l,n_1} h_{k,l,n_2}\} = 0$
for $n_1, n_2 \notin \mathcal{R}_r$. 



\subsection{Uplink Pilot Assignment} \label{sec:Uplink_pilot}

We use a widely adopted cell-free \ac{mmimo} pilot book setup as in~\cite{bjornson2020scalable}, which includes $\tau_p$ orthogonal pilot sequences of length $\tau_p$ in the frequency domain. These pilot sequences are grouped into a pilot set $\mathcal{S}^\text{p} = \{{\boldsymbol{s}}_{1}^\text{p}, {\boldsymbol{s}}_{2}^\text{p}, \ldots, {\boldsymbol{s}}_{\tau_p}^\text{p}\}$, where ${\boldsymbol{s}}_{t}^\text{p}\in \mathbb{C}^{\tau_p}$, $\|{\boldsymbol{s}}_{t}^\text{p}\|^2=\tau_p$, and ${\boldsymbol{s}}_{t_1}^{^\text{p}\mathsf{H}} {\boldsymbol{s}}_{t_2}^\text{p}=0$ for $t_1 \neq t_2$. Each \ac{ue} distributes its pilot sequence using the same pilot pattern in a coherence block as shown in Fig.~\ref{fig:coherent_block}. When $K>\tau_p$, UEs have to share pilots, which leads to pilot contamination~\cite{bjornson2017massive}.  For an arbitrary coherence block $r$, the set of subcarrier and OFDM symbol indices used for pilot transmission is denoted by $\mathcal{N}_{p}  =\{n_1,\cdots,n_{\tau_p}\} \subseteq \mathcal{R}_r$, and $\mathcal{T}_{p}=\{\tau_1,\cdots,\tau_{\tau_p}\} \subseteq  \{1,\cdots,\tau_c\}$, respectively. The set of remaining subcarrier indices for data transmission is denoted by $\mathcal{N}_{d} = \mathcal{R}_r \backslash \mathcal{N}_{p}$.  \ac{ue} $k$ is assigned with pilot sequence ${\boldsymbol{s}}_{t_k}^\text{p} \in \mathcal{S}_p$, which is distributed in a coherence block over pilot subcarriers $ n\in \mathcal{N}_{p}$ and OFDM symbols $\tau \in \mathcal{T}_p$, i.e., $\boldsymbol{s}_{t_k}^\text{p} =[{s}_{t_k,n_1}^{(\tau_1)},\cdots,{s}_{t_k,n_{\tau_p}}^{(\tau_p)}]^{\mathsf{T}}$. The $\tau$-th OFDM symbol of UE $k$ is ${\boldsymbol{s}}_{t_k}^{(\tau)} \in \mathbb{C}^{N}$, which consists of both pilot and data samples.  

\section{Phase Noise-Impaired Cell-free \ac{mmimo} OFDM}
\subsection{Phase Noise Model}
Imperfect hardware in \acp{ap} and \acp{ue} prevents phase synchronization between APs. Even with perfect synchronization, \ac{pn} would persist, varying over time due to \acp{lo} disparities. We model this PN in both time and frequency domains, considering separate and common LOs across APs.

\subsubsection{Phase Noise in the Time-Domain}
The \ac{pn} $\check{\phi}_{l,n}^{(\tau)}$ and $\check{\varphi}_{k,n}^{(\tau)}$ from the free-running \acp{lo} of AP $l$ and UE $k$, respectively, of time sample $n$ and OFDM symbol $\tau$, can be modeled as discrete-time Wiener processes~\cite[Eq. (7)]{petrovic2007effects}:
\begin{align}
\check{\phi}_{l, n}^{(\tau)}  &= \check{\phi}_{l, n-1}^{(\tau)}+ \check{\delta}_n^{\phi_l}, \\ 
\check{\varphi}_{k, n}^{(\tau)}  &=\check{\varphi}_{k, n-1}^{(\tau)}+\check{\delta}_n^{\varphi_k},
\end{align}
where $\check{\delta}_n^{\phi_l}\sim \mathcal{N}(0,\sigma_{\phi_l}^2)$ and $\check{\delta}_n^{\varphi_k}\sim \mathcal{N}(0,\sigma_{\varphi_k}^2)$ with variance
\begin{align}
    \sigma_i^2=4 \pi^2 f_{\mathrm{c}}^2 \gamma_i T_{\mathrm{s}}, \text{ for } i\in \{\phi_l, \varphi_k\}. \label{eq:LO_coe_define}
\end{align}
Here, $f_{\mathrm{c}}$ and $\gamma_i$ denote the carrier frequency and a constant describing the oscillator quality. Note that different APs and UEs may have different qualities. For simplicity, from now on we assume that all APs have the same quality LOs, and all UEs also have the same quality LOs.  The uplink received \ac{pn} from UE $k$ and AP $l$ at time-domain sample $m$ of OFDM symbol $\tau$ are combined as
\begin{align}
    \check{\theta}_{k,l,n}^{(\tau)} = \check{\varphi}_{k,n}^{(\tau)} + \check{\phi}_{l,n}^{(\tau)}.
\end{align}
The vector form for the whole OFDM symbol $\tau$ is denoted by $\check{\boldsymbol{\theta}}_{k,l}^{(\tau)}=[\check{\theta}_{k,l,0}^{(\tau)},\cdots,\check{\theta}_{k,l,N-1}^{(\tau)}]^{\mathsf{T}}$. Considering the \ac{cp}, the combined time-domain PN of UE $k$ and AP $l$ can be modeled as $\check{\theta}_{k,l,n_1}^{(\tau_1)} =  \check{\theta}_{k,l,n_2}^{(\tau_2)} + \mathcal{N}(0,|(\tau_1-\tau_2)(N+N_{\text{CP}})+n_1 -n_2| (\sigma_{\varphi_k}^2+\sigma_{\phi_l}^2))$. The time-domain multiplicative PN that affects the carrier signal is $\exp({j\check{\theta}^{(\tau)}_{k,l,n}})$, and its vector form for OFDM symbol $\tau$ is denoted by  $\exp{(j\check{\boldsymbol{\theta}}^{(\tau)}_{k,l})}$.

\begin{proposition}
    The auto-correlation of the \ac{pn} from UE $k$ and AP $l$ is
\begin{align}
    \mathbb{E}\{ e^{j\check{\theta}_{k,l,n_1}^{(\tau_1)}} e^{ -j\check{\theta}_{k,l,n_2}^{(\tau_2)}}\} = e^{ -\frac{\sigma_{\varphi_k}^2+\sigma_{\phi_l}^2}{2}\left(|(\tau_1-\tau_2)(N+N_{\text{CP}}) + n_1- n_2 |\right)}.
    \label{eq:PN_correlation_TD_same}
\end{align} 
\end{proposition}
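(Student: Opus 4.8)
The plan is to reduce the whole statement to the evaluation of a Gaussian characteristic function. Since the product $e^{j\check{\theta}_{k,l,n_1}^{(\tau_1)}} e^{-j\check{\theta}_{k,l,n_2}^{(\tau_2)}} = e^{j(\check{\theta}_{k,l,n_1}^{(\tau_1)} - \check{\theta}_{k,l,n_2}^{(\tau_2)})}$ depends only on the phase \emph{difference}, I would set $\Delta = \check{\theta}_{k,l,n_1}^{(\tau_1)} - \check{\theta}_{k,l,n_2}^{(\tau_2)}$ and observe that the left-hand side of \eqref{eq:PN_correlation_TD_same} is exactly $\mathbb{E}\{e^{j\Delta}\}$. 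Everything then hinges on showing that $\Delta$ is a zero-mean Gaussian with the variance appearing in the target exponent.

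First I would establish that $\Delta$ is Gaussian and determine its variance. The combined PN $\check{\theta}_{k,l,n}^{(\tau)} = \check{\varphi}_{k,n}^{(\tau)} + \check{\phi}_{l,n}^{(\tau)}$ is the sum of two independent discrete-time Wiener processes, each accumulating independent zero-mean Gaussian increments with variances $\sigma_{\varphi_k}^2$ and $\sigma_{\phi_l}^2$. Fixing an ordering of the two time instants and telescoping, $\Delta$ is a finite sum of such increments; since a sum of independent zero-mean Gaussians is again a zero-mean Gaussian whose variance is the sum of the individual variances, $\Delta \sim \mathcal{N}(0,\sigma_\Delta^2)$. Because the UE and AP oscillators are independent, each elapsed sample contributes $\sigma_{\varphi_k}^2 + \sigma_{\phi_l}^2$ to the variance, so $\sigma_\Delta^2 = M(\sigma_{\varphi_k}^2 + \sigma_{\phi_l}^2)$, where $M$ is the number of samples separating the two instants. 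This matches the Gaussian relation $\check{\theta}_{k,l,n_1}^{(\tau_1)} = \check{\theta}_{k,l,n_2}^{(\tau_2)} + \mathcal{N}(0,\,|(\tau_1-\tau_2)(N+N_{\text{CP}})+n_1-n_2|(\sigma_{\varphi_k}^2+\sigma_{\phi_l}^2))$ stated just above the proposition.

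Next I would count $M$ explicitly. Because a CP of length $N_{\text{CP}}$ is inserted between consecutive OFDM symbols, advancing by one full OFDM symbol corresponds to $N+N_{\text{CP}}$ elapsed time samples, while moving within a symbol adds $n_1 - n_2$ samples; the absolute value $M = |(\tau_1-\tau_2)(N+N_{\text{CP}}) + n_1 - n_2|$ then accounts for either temporal ordering of the two instants and guarantees $M \ge 0$. Finally, I would invoke the characteristic function of a zero-mean Gaussian, $\mathbb{E}\{e^{j\Delta}\} = e^{-\sigma_\Delta^2/2}$, and substitute $\sigma_\Delta^2$ to recover the claimed expression.

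The main obstacle is the index bookkeeping in the second and third steps, namely justifying that the elapsed-sample count across OFDM symbol boundaries is $(N+N_{\text{CP}})$ per symbol and that the absolute value correctly captures both orderings; the probabilistic core (independence of increments and the Gaussian characteristic function) is routine once $\Delta$ has been identified as a zero-mean Gaussian with the stated variance.
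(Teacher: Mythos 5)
Your proposal is correct and follows essentially the same route as the paper, which simply cites the single-symbol result of Petrovic et al.\ (their Eq.~(19)) and generalizes it to multiple OFDM symbols; that underlying result is exactly the Gaussian characteristic-function computation $\mathbb{E}\{e^{j\Delta}\}=e^{-\sigma_\Delta^2/2}$ applied to the Wiener-increment difference that you spell out. Your version is actually more self-contained than the paper's one-line citation, and your sample-count bookkeeping $M=|(\tau_1-\tau_2)(N+N_{\text{CP}})+n_1-n_2|$ agrees with the relation the paper states just before the proposition.
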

\begin{proof}
    It follows by generalizing equation (19) in~\cite{petrovic2007effects} to the case with multiple OFDM symbols.
\end{proof}

\begin{proposition}
    The correlation of PN between different UEs ($k_1\neq k_2$) and \acp{ap} ($l_1\neq l_2$), for APs with \textit{separate} \acp{lo}, is
\begin{align}
     \mathbb{E}\{ e^{j\check{\theta}_{k_1,l_1,n_1}^{(\tau_1)}} e^{-j\check{\theta}_{k_2,l_2,n_2}^{(\tau_2)}}\} = \mathbb{E}\{ e^{j\check{\theta}_{k_1,l_1,n_1}^{(\tau_1)}}\} \mathbb{E} \{e^{-j\check{\theta}_{k_2,l_2,n_2}^{(\tau_2)}}\},
\label{eq:PN_correlation_TD_diff}
\end{align}
and for APs with a \textit{shared common} \ac{lo} is
\begin{align}
     \mathbb{E}\{ e^{j\check{\theta}_{k_1,l_1,n_1}^{(\tau_1)}} & e^{-j\check{\theta}_{k_2,l_2,n_2}^{(\tau_2)}}\} = e^{ - \frac{\sigma_{\rev{\phi_l}}^2}{2} \big( |(\tau_1-\tau_2)N + n_1   - n_2|\big)} \nonumber \\
&  \times e^{ -\frac{\sigma_{\varphi_{k_1}}^2}{2}( \tau_1 N + n_1)  -\frac{\sigma_{\varphi_{k_2}}^2}{2}(\tau_2 N + n_2)},  \label{eq:PN_correlation_TD_colocated}
\end{align}
where $\mathbb{E}\{ e^{j\check{\theta}_{k_1,l_1,n_1}^{(\tau_1)}}\}$ and $\mathbb{E}\{ e^{j\check{\theta}_{k_2,l_2,n_2}^{(\tau_2)}}\}$ in~\eqref{eq:PN_correlation_TD_diff} are given by~\cite{chorti2006spectral}
\begin{align}
    \mathbb{E}\{e^{j\check{\theta}^{(\tau)}_{k,l,n}}\} =  e^{-\frac{\sigma_{\varphi_k}^2+\sigma_{\phi_l}^2}{2} (\tau (N+N_{\text{CP}})+n)}, \label{eq:E_e_jtheta}
\end{align}
and the subscripts $l_1$ and $l_2$ in~\eqref{eq:PN_correlation_TD_colocated} are omitted \rev{since AP $l_1$ and $l_2$ share the same PN process when they share a common LO}.
\end{proposition}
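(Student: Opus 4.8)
The plan is to reduce both identities to elementary computations with the Gaussian characteristic function, after exploiting the additive decomposition $\check{\theta}_{k,l,n}^{(\tau)} = \check{\varphi}_{k,n}^{(\tau)} + \check{\phi}_{l,n}^{(\tau)}$ and the independence structure of the underlying Wiener processes. Writing $e^{j\check{\theta}_{k_1,l_1,n_1}^{(\tau_1)}} = e^{j\check{\varphi}_{k_1,n_1}^{(\tau_1)}}\, e^{j\check{\phi}_{l_1,n_1}^{(\tau_1)}}$ and likewise for the conjugated term, the product inside the expectation splits into UE-driven and AP-driven factors. Since each of $\check{\varphi}_{k}$ and $\check{\phi}_l$ is a sum of zero-mean Gaussian increments, any linear combination of their samples is again zero-mean Gaussian, so each surviving factor has the form $\mathbb{E}\{e^{jX}\}=e^{-\tfrac12\mathrm{Var}(X)}$; the whole proof then amounts to bookkeeping the accumulated increment variance.

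For the separate-LO identity \eqref{eq:PN_correlation_TD_diff}, the key observation is that when $k_1\neq k_2$ and $l_1\neq l_2$ the four processes $\check{\varphi}_{k_1}$, $\check{\phi}_{l_1}$, $\check{\varphi}_{k_2}$, $\check{\phi}_{l_2}$ are mutually independent, as they are driven by distinct free-running oscillators. The product $e^{j\check{\theta}_{k_1,l_1,n_1}^{(\tau_1)}}e^{-j\check{\theta}_{k_2,l_2,n_2}^{(\tau_2)}}$ therefore factors over two disjoint groups of independent variables, and the expectation of the product equals the product of expectations. Each resulting marginal is given by \eqref{eq:E_e_jtheta}, which settles this part with essentially no computation.

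For the shared-LO identity \eqref{eq:PN_correlation_TD_colocated}, I would set $\check{\phi}_{l_1}=\check{\phi}_{l_2}=\check{\phi}_l$ (a single common AP oscillator), while keeping $\check{\varphi}_{k_1}$ and $\check{\varphi}_{k_2}$ independent for $k_1\neq k_2$. Grouping terms then yields three independent factors,
\[
\mathbb{E}\big\{e^{j(\check{\phi}_{l,n_1}^{(\tau_1)}-\check{\phi}_{l,n_2}^{(\tau_2)})}\big\}\;
\mathbb{E}\big\{e^{j\check{\varphi}_{k_1,n_1}^{(\tau_1)}}\big\}\;
\mathbb{E}\big\{e^{-j\check{\varphi}_{k_2,n_2}^{(\tau_2)}}\big\}.
\]
The two UE factors are marginal characteristic functions, and counting the Wiener increments accumulated from the reference sample up to $(\tau,n)$ gives the exponents $-\tfrac{\sigma_{\varphi_{k_1}}^2}{2}(\tau_1 N+n_1)$ and $-\tfrac{\sigma_{\varphi_{k_2}}^2}{2}(\tau_2 N+n_2)$. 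The common-LO factor is the autocorrelation of a single Wiener process: the increments up to the earlier of the two samples are shared and cancel in the difference $\check{\phi}_{l,n_1}^{(\tau_1)}-\check{\phi}_{l,n_2}^{(\tau_2)}$, leaving a Gaussian of variance $|(\tau_1-\tau_2)N+n_1-n_2|\,\sigma_{\phi_l}^2$, which reproduces the first exponential in \eqref{eq:PN_correlation_TD_colocated}.

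The main obstacle is the increment bookkeeping rather than any probabilistic subtlety: one must fix a consistent sample-index origin across OFDM symbols and carefully count how many Wiener steps separate $(\tau_1,n_1)$ from $(\tau_2,n_2)$, while tracking the cyclic-prefix samples. In particular, the asymmetry between the AP term, a \emph{difference} of samples in which shared increments cancel so that only the signed gap $|(\tau_1-\tau_2)N+n_1-n_2|$ survives, and the UE terms, which depend on the absolute positions $\tau N + n$ measured from the origin, must be reconciled with the per-step variances fixed in \eqref{eq:LO_coe_define}. This step parallels the single-process computation already invoked for Proposition~1 and merely generalizes it across distinct users.
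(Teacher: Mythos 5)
Your proposal is correct and follows the same route the paper takes: the paper's own proof is just the two-sentence observation that the expectation factorizes under independence for separate LOs and that the shared-LO case reduces to the autocorrelation of a single Wiener process plus two independent UE marginals, which is exactly the decomposition you carry out in detail via the Gaussian characteristic function $\mathbb{E}\{e^{jX}\}=e^{-\frac{1}{2}\mathrm{Var}(X)}$. Your closing remark about reconciling the cyclic-prefix bookkeeping is well taken, since the marginal in~\eqref{eq:E_e_jtheta} counts $\tau(N+N_{\text{CP}})+n$ increments while~\eqref{eq:PN_correlation_TD_colocated} uses $\tau N+n$, but that is an inconsistency in the paper's stated formulas rather than a gap in your argument.
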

\begin{proof}
   Equation~\eqref{eq:PN_correlation_TD_diff} follows since the PN between APs and UEs are uncorrelated. Equation~\eqref{eq:PN_correlation_TD_colocated} arises from the correlated PN at APs with a shared LO.
\end{proof}

\subsubsection{Phase Noise in the Frequency-Domain}
For any OFDM symbol $\tau$, the multiplicative \ac{pn} exp(${j\check{\boldsymbol{\theta}}_{k,l}^{(\tau)}}) \in \mathbb{C}^N$ in the time-domain is equivalent to the convolutional frequency-domain phase-drift vector
$\boldsymbol{J}_{k,l}^{(\tau)}\in \mathbb{C}^{N}$, whose $i$-th entry $J_{k,l,i}^{(\tau)}$, for $i=-N/2,\cdots,N/2-1$, is given by~\cite[Eq. (6)]{petrovic2007effects}
\begin{align}
	J_{k,l,i}^{(\tau)} = \frac{1}{N}\sum\limits_{n=0}^{N-1}e^{j\check{\theta}^{(\tau)}_{k,l,n}} e^{-j2\pi ni/N}. \label{eq:J_k_l_i_definition}
\end{align} 
For $i=0$, the phase-drift 
\begin{align}
J_{k,l,0}^{(\tau)}=\frac{1}{N}\sum_{n}e^{j\check{\theta}^{(\tau)}_{k,l,n}}
\end{align} 
is known as the \ac{cpe}~\cite{petrovic2007effects} since it is common to all subcarriers. The other non-zero phase-drifts $J_{k,l,i}^{(\tau)}$ for $ i\neq 0$ lead to \ac{ici}. Using~\eqref{eq:E_e_jtheta}, the expected value of $J_{k,l,i}^{(\tau)}$ can be calculated as
\begin{align}
     \mathbb{E}\{J_{k,l,i}^{(\tau)}\} \triangleq \bar{J}_{k,l,i}^{(\tau)} = \frac{1}{N}\sum\limits_{n=0}^{N-1}e^{-\frac{\sigma_{\varphi_k}^2+\sigma_{\phi_l}^2}{2} (\tau N+n)} e^{\frac{-j2\pi ni}{N}}.
     \label{eq:E_J}
\end{align}
For the same UE $k$ and AP $l$, the correlation between $J_{k,l,i_1}^{(\tau_1)}$ and $J_{k,l,i_2}^{(\tau_2)}$ is calculated as
\begin{align}
&\mathbb{E}\{J_{k,l,i_1}^{(\tau_1)} J_{k,l,i_2}^{*(\tau_2)}\} \triangleq {B}_{k,l,i_1,i_2}^{(\tau_1 -\tau_2)}= \nonumber\\
& \frac{1}{N^2} \sum_{n_1=0}^{N-1}\sum_{n_2=0}^{N-1}  \mathbb{E}\{ e^{j\check{\theta}_{k,l,n_1}^{(\tau_1)}} e^{-j\check{\theta}_{k,l,n_2}^{(\tau_2)}}\}  e^{\frac{-j2\pi(n_1i_1- n_2i_2)}{N}},\label{eq:B_matrix_compute}
\end{align}
where $\mathbb{E}\{ e^{j\check{\theta}_{k,l,n_1}^{(\tau_1)}} e^{j\check{\theta}_{k,l,n_2}^{(\tau_2)}}\}$ is obtained from~\eqref{eq:PN_correlation_TD_same}.
For different UEs ($k_1\neq k_2$) and APs ($l_1\neq l_2$), we can calculate the correlation between $J_{k_1,l_1,i_1}^{(\tau_1)}$ and $J_{k_2,l_2,i_2}^{*(\tau_2)}$ as
\begin{align}
    &\mathbb{E}\{J_{k_1,l_1,i_1}^{(\tau_1)} J_{k_2,l_2,i_2}^{*(\tau_2)}\} \triangleq \breve{B}_{k_1,k_2,l_1,l_2,i_1,i_2}^{(\tau_1 -\tau_2)}
   \nonumber \\& \frac{1}{N^2} \sum_{n_1=0}^{N-1}\sum_{n_2=0}^{N-1}  \mathbb{E}\{ e^{j\check{\theta}_{k_1,l_1,n_1}^{(\tau_1)}} e^{-j\check{\theta}_{k_2,l_2,n_2}^{(\tau_2)}}\}  
    e^{\frac{-j2\pi(n_1i_1- n_2i_2)}{N}}, \label{eq:B_matrix_compute_diff}
\end{align}
where $\mathbb{E}\{ \check{\theta}_{k_1,l_1,n_1}^{(\tau_1)} \check{\theta}_{k_2,l_2,n_2}^{(\tau_2)}\}$ can be obtained from~\eqref{eq:PN_correlation_TD_diff} for separate LOs and from~\eqref{eq:PN_correlation_TD_colocated} for a shared LO, respectively.
\subsection{Signal Model with Phase Noise}
\label{sec:signal_model}
\rev{We begin by presenting the PN-impaired uplink OFDM signal model for a single symbol in Section~\ref{sec:sig_model_OFDM_symbol} and then extend it to a vector form in Section~\ref{sec:sig_model_all_pilots} that covers pilot symbols across multiple OFDM symbols.}

\subsubsection{Signal Model for an OFDM Symbol}\label{sec:sig_model_OFDM_symbol}
For UE $k$, the $m$-th time-domain sample of $\tau$-th OFDM symbol is obtained by an IDFT on ${\boldsymbol{s}}_{t_k}^{(\tau)}$ as
		$\check{s}_{t_k,n}^{(\tau)} = \frac{1}{\sqrt{N}}\sum_{n=0}^{N-1} {s}_{t_k,n}^{(\tau)} \exp{({j2\pi nm}/{N})}$,
and its vector form for the whole OFDM symbol $\tau$ is denoted by $\check{\boldsymbol{s}}^{(\tau)}_{t_k} \in \mathbb{C}^{N}$.  
With the time-domain multiplicative \ac{pn} $\text{diag}(\exp{(j\check{\boldsymbol{\theta}}_{k,l}^{(\tau)}}))$, the time-domain received  signal $\check{\boldsymbol{y}}_{l}^{(\tau)} \in \mathbb{C}^{N}$ at AP $l$ for the $\tau$-th OFDM symbol is
\begin{align}
	\check{\boldsymbol{y}}_{l}^{(\tau)} = \sum\limits_{k=1}^{K} \sqrt{p_k} \text{diag}(e^{j\check{\boldsymbol{\theta}}_{k,l}^{(\tau)}})(\check{\boldsymbol{h}}_{k,l} \circledast \check{\boldsymbol{s}}^{(\tau)}_{t_k} ) + \check{\boldsymbol{w}}_{l}^{(\tau)},\label{eq:y_kl_TD}
\end{align}
where $p_k \geq 0$ is the transmit power of UE $k$, $\circledast$ denotes the circular convolution, and $\check{\boldsymbol{w}}_{k,l}^{(\tau)} \sim \mathcal{N}_{\mathbb{C}}(\mathbf{0},\sigma^2\boldsymbol{I}_{N}) $ denotes the thermal noise. By applying a $N$-point DFT to both sides of \eqref{eq:y_kl_TD}, \rev{the frequency-domain received signal at AP $l$,~${\boldsymbol{y}}_{l}^{(\tau)}$, is}
\begin{align}
		{\boldsymbol{y}}_{l}^{(\tau)} = \sum\limits_{k=1}^{K} \sqrt{p_k} \boldsymbol{J}_{k,l}^{(\tau)} \circledast ( {\boldsymbol{h}}_{k,l} \odot {\boldsymbol{s}}^{(\tau)}_{t_k} ) + {\boldsymbol{w}}_{l}^{(\tau)},\label{eq:y_kl_FD}
\end{align}
where $\odot$ denotes the Hadamard product, and the thermal noise follows the same distribution, ${\boldsymbol{w}}_{l}^{(\tau)} \sim \mathcal{N}_{\mathbb{C}}(\mathbf{0},\sigma^2\boldsymbol{I}_{N})$. The elements of the phase-drift  $\boldsymbol{J}_{k,l}^{(\tau)}$ are defined in~\eqref{eq:J_k_l_i_definition}.
\rev{The $n$-th element, ${y}_{l,n}^{(\tau)}$, of ${\boldsymbol{y}}_{l}^{(\tau)}$ in~\eqref{eq:y_kl_FD} is
the frequency-domain sample received on subcarrier $n$. It can be decomposed as~\cite[Eq. (5)]{petrovic2007effects}}
\rev{
\begin{align}
	{y}_{l,n}^{(\tau)} &=  \sum\nolimits_{k=1}^{K} \Big( \sqrt{p_k}  \sum\nolimits_{\jmath=0}^{N-1}  {J_{k,l,n-\jmath}^{(\tau)}}{{h}}_{k,l,\jmath} {{s}}^{(\tau)}_{t_k,\jmath}\Big) + {{w}}_{l,n}^{(\tau)} \nonumber\\
&= \sum\nolimits_{k=1}^{K}\Big(\sqrt{p_k}   {{s}}^{(\tau)}_{t_k,n}   \underbrace{J_{k,l,0}^{(\tau)}{{h}}_{k,l,n}}_{\triangleq {{h}}_{k,l,n}^{(\tau)}}   \nonumber \\ &\quad \quad \quad \quad + \underbrace{\sqrt{p_k} \sum\nolimits_{\substack{\jmath =0\\ \jmath \neq n}}^{N-1} {{s}}^{(\tau)}_{t_k,\jmath} J_{k,l,n-\jmath}^{(\tau)} {{h}}_{k,l,\jmath} }_{\triangleq\zeta_{k,l,n}^{(\tau)}}\Big) +    {{w}}_{l,n}^{(\tau)},\nonumber \\
&= \sum\limits_{k=1}^{K}\Big(\sqrt{p_k}   {{s}}^{(\tau)}_{t_k,n}   {{h}}_{k,l,n}^{(\tau)}   +  \zeta_{k,l,n}^{(\tau)}\Big) +    {{w}}_{l,n}^{(\tau)},
 \label{eq:FD_recv_tau}
\end{align}
}
where
\begin{align} \label{eq:effective_channel}
    {{h}}_{k,l,n}^{(\tau)} \triangleq J_{k,l,0}^{(\tau)}{{h}}_{k,l,n}
\end{align}
is the time-varying \textit{effective channel} and
\begin{align}
 \zeta_{k,l,n}^{(\tau)} = \sqrt{p_k} \sum\limits_{\substack{\jmath =0, \jmath \neq n}}^{N-1} {{s}}^{(\tau)}_{t_k,\jmath} J_{k,l,n-\jmath}^{(\tau)} {{h}}_{k,l,\jmath}
\end{align}
is the \ac{ici} component over subcarrier $n$ and \ac{ofdm} symbol $\tau$. The variance of the effective channel is $\text{Var}({{h}}_{k,l,n}^{(\tau)})= \mathbb{E}\{|J_{k,l,0}^{(\tau)}|^2\} \mathbb{E}\{|h_{k,l,n}|^2\}= B_{k,l,0,0}^{(0)} \beta_{k,l}$. 
Within an arbitrary coherence block $r$, the effective channels are the same only for subcarriers in the same \ac{ofdm} symbol,
\begin{align}
    {{h}}_{k,l,n_1}^{(\tau_1)} &= {{h}}_{k,l,n_2}^{(\tau_1)}, \text{ for } \{n_1,n_2\} \in \mathcal{R}_{r},  \label{eq:effec_channel_relation_1} \\
    {{h}}_{k,l,n_1}^{(\tau_1)} &\neq {{h}}_{k,l,n_2}^{(\tau_2)}, \text{ for } \{n_1,n_2\} \notin \mathcal{R}_{r}, \text{ or } \tau_1\neq \tau_2.
    \label{eq:effec_channel_relation_2}
\end{align}
Note that the \ac{cpe} coefficient $J_{k,l,0}^{(\tau)}$ of \ac{ofdm} symbol $\tau$ is the same for all subcarriers. 

\subsubsection{Signal Model for All Pilots} \label{sec:sig_model_all_pilots}
Focusing on coherence block $r$, i.e., subcarriers $n\in \mathcal{R}_r$, the received $\tau_p$-length pilot vector at AP $l$ can be collected following~\eqref{eq:FD_recv_tau} as
\begin{align}
    {\boldsymbol{y}}_{l}^{\text{p}} &\triangleq [{y}_{l,n_1}^{(\tau_1)},\cdots,{y}_{l,n_{\tau_p}}^{(\tau_p)}]^{\mathsf{T}} \in \mathbb{C}^{\tau_p}, 
    \\
    &= \sum_{k=1}^{K} (\sqrt{p_k} \boldsymbol{s}_{t_k}^\text{p} \odot  \boldsymbol{h}_{k,l}^{\text{E-p}} + \boldsymbol{\zeta}_{k,l}^{\text{p}}) + \boldsymbol{w}_{l}, \label{eq:y_l_pilot_vector}
\end{align}
where  the effective channel vector between UE $k$ and AP $l$ for $\tau_p$ pilots is defined as
\begin{align}
    \boldsymbol{h}_{k,l}^{\text{E-p}}& \triangleq [{h}_{k,l,n_1}^{(\tau_1)},\cdots,{h}_{k,l,n_{\tau_p}}^{(\tau_p)}]^{\mathsf{T}}  = \boldsymbol{J}_{k,l}^{\text{p}} h_{k,l,n},\label{eq:h_eff_k_l}
    \end{align}
    and the \ac{ici} vector for $\tau_p$ pilots is denoted by $\boldsymbol{\zeta}_{k,l}^{\text{p}} = [{\zeta}_{k,l,n_1}^{(\tau_1)}, \cdots, {\zeta}_{k,l,n_{\tau_p}}^{(\tau_p)}]^{\mathsf{T}} \in \mathbb{C}^{\tau_p}$. The effective channel vector for pilots, $\boldsymbol{h}_{k,l}^{\text{E-p}}$,  consists of the CPE vector for $\tau_p$ OFDM symbols with pilots, $\boldsymbol{J}_{k,l}^{\text{p}} \triangleq [J_{k,l,0}^{(\tau_1)}, \cdots, J_{k,l,0}^{(\tau_p)}]^{\mathsf{T}} \in \mathbb{C}^{\tau_p}$, and the channel $h_{k,l,n}$ with $n$ representing an arbitrary subcarrier $\in \mathcal{R}_r$ due to~\eqref{eq:effec_channel_relation_1}. Additionally, for all $\tau_c$ OFDM symbols in the coherence block $r$, the corresponding \ac{cpe} and effective channel vector is  defined as $\boldsymbol{J}_{k,l}^{\text{c}} \triangleq [{J}_{k,l,0}^{(1)},\cdots,{J}_{k,l,0}^{(\tau_c)}]^{\mathsf{T}} \in \mathbb{C}^{\tau_c}$, and $ \boldsymbol{h}_{k,l}^{\text{E-c}} \triangleq [{h}_{k,l,n}^{(1)},\cdots,{h}_{k,l,n_{}}^{(\tau_c)}]^{\mathsf{T}} \in \mathbb{C}^{\tau_c}$, respectively.

The channel vector from UE $k$ to all $L$ APs over subcarrier $n \in \mathcal{R}_r$ is collected as 
\begin{align}
    \boldsymbol{h}_{k,n} \triangleq [h_{k,1,n},\cdots,h_{k,L,n}]^{\mathsf{T}} \in \mathbb{C}^{L}.
\end{align}
Collecting $\boldsymbol{h}_{k,n}$ for all $K$ UEs, the channels between $K$ UEs and $L$ APs over subcarrier $n \in \mathcal{R}_r$ is
\begin{align}
    \boldsymbol{h}_{n} \triangleq [\boldsymbol{h}_{1,n}^{\mathsf{T}},\cdots,\boldsymbol{h}_{K,n}^{\mathsf{T}}]^{\mathsf{T}} \in \mathbb{C}^{LK}. \label{eq:channel_all}
\end{align}
The \ac{cpe} vector for pilot OFDM symbols between all $L$ APs and UE $k$ is 
\begin{align}
    \boldsymbol{J}_{k}^{\text{p}} \triangleq [\boldsymbol{J}_{k,1}^{\text{p} \mathsf{T}},\cdots,\boldsymbol{J}_{k,L}^{\text{p} \mathsf{T}}]^{\mathsf{T}} \in \mathbb{C}^{\tau_p L}, \label{eq:J_CPE_k_tau_p}
\end{align}
and the ICI vector collected from all $L$ APs is $
\boldsymbol{\zeta}_{k}^{\text{p}} \triangleq [\boldsymbol{\zeta}_{k,1}^{\text{p} \mathsf{T}},\cdots,\boldsymbol{\zeta}_{k,L}^{\text{p} \mathsf{T}}]^{\mathsf{T}} \in \mathbb{C}^{\tau_p L}$.
In the case of a common \ac{lo} shared by all APs, the CPE vector~\eqref{eq:J_CPE_k_tau_p} reduces to $\boldsymbol{J}_{k}^{\text{p}} =  \boldsymbol{1}_{L\times 1} \otimes \boldsymbol{J}_{k,1}^{\text{p}}$. 

Finally, the received pilots at all $L$ \acp{ap} are collected as
\begin{align}
    \boldsymbol{y}^{\text{p}} \triangleq [\boldsymbol{y}_{1}^{\text{p}\mathsf{T}},\cdots,\boldsymbol{y}_{L}^{\text{p} \mathsf{T}}]^{\mathsf{T}} \in \mathbb{C}^{L\tau_p}, \label{eq:y_pilot_vector}
\end{align}
which can be expressed as (following~\eqref{eq:y_l_pilot_vector}) 
\begin{align}
    \boldsymbol{y}^{\text{p}} = \sum_{k=1}^{K} \big(\sqrt{p_k} ( \boldsymbol{h}_{k,n}^{\text{}} \otimes \boldsymbol{s}_{t_k}^{\text{p}}) \odot \boldsymbol{J}_{k}^{\text{p}} + \boldsymbol{\zeta}_{k}^{\text{p}} \big) + \boldsymbol{w}, \label{eq:observ_all} 
\end{align}
where $\otimes$ denotes the Kronecker product.

\subsection{Mismatched Signal Model with Phase Noise}
The works~\cite{bjornson2015massive,papazafeiropoulos2021scalable,zheng2023asynchronous,jin2020spectral} analyze the impact of PN for single-carrier systems under the single-carrier time-domain PN model. Applying these solutions to OFDM systems can lead to inaccurate channel estimators, uplink combiners, and downlink precoders, thereby impacting system performance. Also, this model mismatch results in overly optimistic conclusions, such as achievable rate predictions, due to underestimating the impact of PN. Next, we analyze how this PN mismatch in OFDM systems affects the modeling of channel aging effects within a coherence block.

Specifically, applying the time-domain single-carrier PN model to the frequency-domain OFDM signal model leads to the \rev{\emph{mismatched}}  signal model
\begin{align}
    		{\boldsymbol{y}}_{l}^{(\tau)} = \sum\limits_{k=1}^{K} \sqrt{p_k} \text{diag}(e^{j\check{\boldsymbol{\theta}}_{k,l}^{(\tau)}}) ( {\boldsymbol{h}}_{k,l} \odot {\boldsymbol{s}}^{(\tau)}_{t_k} ) + {\boldsymbol{w}}_{l}^{(\tau)}.\label{eq:y_kl_FD_mismatched}
\end{align}
The corresponding mismatched signal model for sample ${{y}}_{l,n}^{(\tau)}$ received over subcarrier $n$ can be decomposed as
\begin{align}
    {y}_{l,n}^{(\tau)} =  \sum\limits_{k=1}^{K}\Big(  \sqrt{p_k}   \underbrace{e^{j\check{\theta}_{k,l,n}^{(\tau)}} {h}_{k,l,n}}_{ \triangleq \ddot{{h}}_{k,l,n}^{(\tau)}} {{s}}^{(\tau)}_{t_k,n}    \Big) +    {{w}}_{l,n}^{(\tau)},
    \label{eq:FD_recv_tau_mismatch}
\end{align}
where the mismatched effective channel, compared to $h_{k,l,n}^{(\tau)}$ in~\eqref{eq:effective_channel}, is defined as
\begin{align}
        \ddot{{h}}_{k,l,n}^{(\tau)} \triangleq e^{j \check{\theta}_{k,l,n}^{(\tau)}} {{h}}_{k,l,n}.
    \label{eq:effective_channel_mismatched}
\end{align}

One critical issue with the mismatched channel model is that it indicates that $\ddot{{h}}_{k,l,n}^{(\tau)}$ varies across different subcarriers even within the same coherence block and OFDM symbol, i.e., 
\begin{align}
    \ddot{{h}}_{k,l,n_1}^{(\tau_1)} \neq  \ddot{{h}}_{k,l,n_2}^{(\tau_1)} \neq \ddot{{h}}_{k,l,n_1}^{(\tau_2)} \text{, for } \{n_1,n_2\} \in \mathcal{R}_{r}, \tau_1 \neq \tau_2,
    \label{eq:effec_channel_mismatch_relations}
\end{align}
which differs from~\eqref{eq:effec_channel_relation_1} and~\eqref{eq:effec_channel_relation_2}. Another critical issue is the incorrect noise statistics due to ignoring the ICI component. This arises from applying the time-domain \ac{pn} model to the frequency-domain signal model. \rev{Consequently, deriving channel estimation schemes based on the mismatched signal model~\eqref{eq:FD_recv_tau_mismatch} leads to mismatched \ac{pn}-aware channel estimators as in~\cite{bjornson2015massive,papazafeiropoulos2021scalable,zheng2023asynchronous,jin2020spectral,ozdogan2019performance} in OFDM systems.} This subsequently results in mismatched uplink combiners and reciprocity-based downlink precoders.

The aforementioned mismatch disappears in a single carrier system (i.e., $N=1$), where the PN model is equivalent in both time and frequency domains, i.e., $e^{j\theta_{k,l,0}^{(\tau)}}=J_{k,l,0}^{(\tau)}$ and $J_{k,l,n}^{(\tau)} = 0$. In this case, the mismatched effective channel $\ddot{{h}}_{k,l,0}^{(\tau)}={{h}}_{k,l,0}^{(\tau)}$, and the ICI component in~\eqref{eq:FD_recv_tau} disappears, reducing~\eqref{eq:FD_recv_tau} to:
\begin{align}
    	{y}_{l,0}^{(\tau)} &=  \sum\limits_{k=1}^{K}&\Big(  \sqrt{p_k}   \underbrace{J^{(\tau)}_{k,l,0} {h}_{k,l,n}}_{={{h}}_{k,l,0}^{(\tau)}=\ddot{{h}}_{k,l,n}^{(\tau)}} {{s}}^{(\tau)}_{t_k,n}    \Big) +    {{w}}_{l,n}^{(\tau)},
 \label{eq:FD_recv_tau_NoICI}
\end{align}
which is equivalent to~\eqref{eq:FD_recv_tau_mismatch}.

\section{PN-aware Distributed Channel Estimation}
Next, we derive a distributed \acl{pna} joint channel and CPE \ac{lmmse} estimator for scenarios with separate LOs between APs. Additionally, we introduce a distributed \acl{pna} \ac{dl} channel estimator that uses PN-impaired pilots for channel estimation. This nonlinear estimator is used to initialize the centralized estimator, which will be detailed in Section~\ref{sec:Central_Estimator}.

\subsection{Distributed Joint Channel and CPE Estimation}
From~\eqref{eq:FD_recv_tau} and~\eqref{eq:effective_channel}, it is evident that despite using a block fading channel model, \ac{pn} causes the effective channel $h_{k,l,n}^{(\tau)}$ to vary with each \ac{ofdm} symbol $\tau$, a phenomenon known as channel aging. Therefore, within each coherence block, the channel estimation needs to be performed for each OFDM symbol (a total of $\tau_c$ times) rather than for each channel use (a total of $\tau_c N$ times) as suggested by the works for single-carrier systems~\cite{bjornson2015massive,papazafeiropoulos2021scalable}, or just a single time as suggested in systems without PN~\cite{bjornson2020scalable,bjornson2019making}.

We now derive an estimator of the effective channel ${{h}}_{k,l,n}^{(\tau)}$ for any subcarrier $n \in \mathcal{R}_r$ and OFDM symbol $\tau \in \{1,\cdots,\tau_c\}$. The conventional \ac{mmse} estimation is challenging because the received observations~\eqref{eq:observ_all} and channels~\eqref{eq:channel_all} are not jointly Gaussian distributed due to the non-Gaussian distributed \ac{cpe}~\eqref{eq:J_CPE_k_tau_p}. Therefore, we derive an \ac{lmmse} estimator based on~\cite{kay1993fundamentals}.
\begin{lemma}
The \ac{lmmse} estimate of ${h}_{k,l,n}^{(\tau)}$ based on \rev{${\boldsymbol{y}}_{l}^{\text{p}}$ from~\eqref{eq:y_l_pilot_vector}} is
\begin{align}\label{eq:LMMSE_estimator_dist}
    \hat{{h}}_{k,l,n}^{(\tau)} = \sqrt{p_k}\beta_{k,l}{\boldsymbol{s}}_{t_k}^{\mathsf{H}} \Acute{\boldsymbol{B}}_{k,l}^{(\tau)} \boldsymbol{\Psi}_{l}^{-1} {\boldsymbol{y}}_{l}^{\text{p}},
\end{align}
where
\begin{align} 
\Acute{\boldsymbol{B}}_{k,l}^{(\tau)}&=\mathrm{diag}\left(\Big[{B}_{k,l,0,0}^{(\tau-\tau_1)} ,\cdots, {B}_{k,l,0,0}^{(\tau-\tau_p)} \Big]^{\mathsf{T}}\right) \label{eq:LMMSE_weights_matrix}
\end{align}
\begin{align}
\boldsymbol{\Psi}_{l} &= \sum\nolimits_{k^{\prime}=1}^{K}p_{k^{\prime}}\beta_{k^{\prime},l} \boldsymbol{\Phi}_{k^{\prime}} + \boldsymbol{Z}^{\text{ICI}}_{l} + \sigma^2 \mathbf{I}_{\tau_p} \label{eq:LMMSE_inverse_matrix} 
\\
[\boldsymbol{\Phi}_{k^{\prime},l}]_{\tau_1, \tau_2} &= {s}^{(\tau_1)}_{t_{k^{\prime}},n_1} {s}^{*(\tau_2)}_{t_{k^{\prime}},n_2} {B}_{k^{\prime},l,0,0}^{(\tau_1-\tau_2)}. \label{eq:Phi_joint}
\end{align}
Here   ${B}_{k,l,0,0}^{(\tau_1-\tau_2)}$ in~\eqref{eq:LMMSE_weights_matrix} and~\eqref{eq:Phi_joint} are defined in~\eqref{eq:B_matrix_compute} and  $\boldsymbol{Z}^{\text{ICI}}_{l}$  is a diagonal matrix defined in~\eqref{eq:Zeta_ICI_full_expression} in Appendix~\ref{Appendix:LMMSE_Joint}. \rev{In \eqref{eq:LMMSE_estimator_dist}, the estimated effective channel $\hat{{h}}_{k,l,n}^{(\tau)}$ is identical across all subcarriers for a given OFDM symbol $\tau$ and coherence block $r$. }
\end{lemma}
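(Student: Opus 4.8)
The plan is to specialize the classical Bayesian linear (LMMSE) estimator of a scalar from a vector observation, as in~\cite{kay1993fundamentals}. Both the effective channel $h_{k,l,n}^{(\tau)}=J_{k,l,0}^{(\tau)}h_{k,l,n}$ and the pilot observation $\boldsymbol{y}_l^{\text{p}}$ in~\eqref{eq:y_l_pilot_vector} are zero-mean, since the channel is zero-mean and statistically independent of the PN process. Hence the LMMSE estimate takes the form
\begin{align}
\hat{h}_{k,l,n}^{(\tau)} = \boldsymbol{r}_{k,l}^{(\tau)}\,\boldsymbol{\Psi}_l^{-1}\,\boldsymbol{y}_l^{\text{p}},
\end{align}
with cross-correlation $\boldsymbol{r}_{k,l}^{(\tau)}\triangleq\mathbb{E}\{h_{k,l,n}^{(\tau)}(\boldsymbol{y}_l^{\text{p}})^{\mathsf{H}}\}$ and observation covariance $\boldsymbol{\Psi}_l\triangleq\mathbb{E}\{\boldsymbol{y}_l^{\text{p}}(\boldsymbol{y}_l^{\text{p}})^{\mathsf{H}}\}$. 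The proof then reduces to evaluating these two second-order moments from the pilot model and the frequency-domain PN statistics, and matching the result to~\eqref{eq:LMMSE_weights_matrix}--\eqref{eq:Phi_joint}.

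First I would compute $\boldsymbol{r}_{k,l}^{(\tau)}$ entrywise. Substituting the $\tau'$-th component of~\eqref{eq:y_l_pilot_vector}, every cross-user term drops out because the channels $\{h_{k',l,n}\}$ are mutually uncorrelated by~\eqref{eq:channel_uncorr} and (with separate LOs) carry independent PN, and the thermal-noise term is independent of the channel. For the surviving contribution of UE $k$, independence of the channel and the CPE lets the expectation factor as $\mathbb{E}\{J_{k,l,0}^{(\tau)}(J_{k,l,0}^{(\tau')})^{*}\}\,\mathbb{E}\{|h_{k,l,n}|^2\}=\beta_{k,l}B_{k,l,0,0}^{(\tau-\tau')}$, where $B_{k,l,0,0}^{(\tau-\tau')}$ is the CPE auto-correlation from~\eqref{eq:B_matrix_compute} and I use $h_{k,l,n_1}=h_{k,l,n_2}$ within a block from~\eqref{eq:effec_channel_relation_1}. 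Pulling out the pilot symbols $s^{(\tau')}_{t_k,n_{\tau'}}$ over the $\tau_p$ pilot positions assembles $\boldsymbol{r}_{k,l}^{(\tau)}=\sqrt{p_k}\beta_{k,l}\boldsymbol{s}_{t_k}^{\mathsf{H}}\Acute{\boldsymbol{B}}_{k,l}^{(\tau)}$ with the diagonal matrix $\Acute{\boldsymbol{B}}_{k,l}^{(\tau)}$ of~\eqref{eq:LMMSE_weights_matrix}.

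Next I would expand $\boldsymbol{\Psi}_l=\mathbb{E}\{\boldsymbol{y}_l^{\text{p}}(\boldsymbol{y}_l^{\text{p}})^{\mathsf{H}}\}$ into pilot-signal, ICI, and noise blocks. Cross-user products again vanish, while the same-user pilot-signal auto-correlation yields, by the identical factorization, the $(\tau_1,\tau_2)$ entry $p_{k'}\beta_{k',l}\,s^{(\tau_1)}_{t_{k'},n_1}s^{*(\tau_2)}_{t_{k'},n_2}B_{k',l,0,0}^{(\tau_1-\tau_2)}$, i.e., the term $\sum_{k'}p_{k'}\beta_{k',l}\boldsymbol{\Phi}_{k',l}$ of~\eqref{eq:LMMSE_inverse_matrix} with $\boldsymbol{\Phi}_{k',l}$ as defined in~\eqref{eq:Phi_joint}; the thermal noise contributes $\sigma^2\mathbf{I}_{\tau_p}$; and the ICI self-correlation is collected into the diagonal matrix $\boldsymbol{Z}^{\text{ICI}}_l$ detailed in Appendix~\ref{Appendix:LMMSE_Joint}. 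The final invariance claim is then immediate: none of $\boldsymbol{s}_{t_k}^{\mathsf{H}}$, $\Acute{\boldsymbol{B}}_{k,l}^{(\tau)}$, $\boldsymbol{\Psi}_l$, or $\boldsymbol{y}_l^{\text{p}}$ depends on the target subcarrier $n$, so $\hat{h}_{k,l,n}^{(\tau)}$ is constant over $n\in\mathcal{R}_r$ for fixed $\tau$, matching the true relation~\eqref{eq:effec_channel_relation_1}.

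The main obstacle is the careful bookkeeping of the self-ICI $\boldsymbol{\zeta}_{k,l}^{\text{p}}$. Unlike the thermal noise and the interference from other users, the ICI of UE $k$ is built from the same channel realization and PN process as the desired effective channel, so it is the only term whose cross-correlation with $h_{k,l,n}^{(\tau)}$ does not vanish exactly; consistent with the clean form~\eqref{eq:LMMSE_weights_matrix}, one treats the ICI as part of the disturbance and neglects this residual cross-correlation, which involves the off-CPE drifts $J_{k,l,i}^{(\tau)}$ with $i\neq0$ and is small relative to the CPE term $B_{k,l,0,0}^{(\cdot)}$. Computing its self-correlation for $\boldsymbol{Z}^{\text{ICI}}_l$ is the only step that requires the full frequency-domain PN correlation $B_{k,l,i_1,i_2}^{(\cdot)}$ of~\eqref{eq:B_matrix_compute} rather than just the CPE auto-correlation, which is why it is deferred to the appendix.
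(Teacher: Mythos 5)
Your proposal is correct and follows essentially the same route as the paper's proof: apply the standard LMMSE formula from~\cite{kay1993fundamentals}, evaluate the cross-correlation $\mathbb{E}\{h_{k,l,n}^{(\tau)}(\boldsymbol{y}_l^{\text{p}})^{\mathsf{H}}\}$ via the factorization $\beta_{k,l}B_{k,l,0,0}^{(\tau-\tau')}$, and decompose $\boldsymbol{\Psi}_l$ into the pilot, ICI, and noise covariances. The only differences are cosmetic — the paper starts from the full multi-AP observation $\boldsymbol{y}^{\text{p}}$ and shows it collapses to the per-AP form (which is what supports Remark 1), whereas you start locally; and your explicit remark that the self-ICI of UE $k$ has a residual (neglected) cross-correlation with the desired effective channel is a careful observation the paper leaves implicit.
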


\begin{proof}
See Appendix \ref{Appendix:LMMSE_Joint}.
\end{proof}

In the ideal case of no \ac{pn}, resulting in $\Acute{\boldsymbol{B}}_{0,0}^{(\tau)}=\boldsymbol{I}_{\tau_p}$ and $\boldsymbol{Z}^{\text{ICI}}_{l}=\boldsymbol{0}$, the received samples and channels are jointly Gaussian distributed. Consequently, the \ac{lmmse} estimator~\eqref{eq:LMMSE_estimator_dist} becomes equivalent to the optimal \ac{mmse} estimator as in~\cite{bjornson2020scalable}.  
With the \ac{lmmse} estimation, the channel estimation $\hat{{h}}_{k,l,n}^{(\tau)}$ is of zero mean and variance $ \epsilon_{k,l,n}^{(\tau)} \triangleq p_k \beta_{k,l}^2 {\boldsymbol{s}}_{t_k,n}^{\mathsf{H}} \boldsymbol{B}_{0,0}^{(\tau)} \boldsymbol{\Psi}_{l}^{-1} \boldsymbol{B}_{0,0}^{\mathsf{H},(\tau)} {\boldsymbol{s}}_{t_k,n}^{\mathsf{}} $, and the channel estimation error  $\tilde{{h}}_{k,l,n}^{(\tau)}\triangleq {{h}}_{k,l,n}^{(\tau)} - \hat{{h}}_{k,l,n}^{(\tau)}$ is of zero mean and variance $ c_{k,l,n}^{(\tau)} \triangleq \beta_{k,l}B_{0,0}^{(0)} - \epsilon_{k,l,n}^{(\tau)} $. 

\begin{remark}
    The uncorrelated and zero-mean channels between each AP and UE make the distributed joint channel and CPE LMMSE estimator~\eqref{eq:LMMSE_estimator_dist} equal to its centralized version, as it cannot exploit PN correlation (see Appendix~\ref{Appendix:LMMSE_Joint}). Hence, despite feeding all observations $\boldsymbol{y}^{\text{p}}$ from all APs, the estimator in Lemma 1 still uses only the specific observations $\boldsymbol{y}^{\text{p}}_l$ for each AP $l$. Note that even with separate LOs, PN at APs is correlated for PN from the same UEs. 
\end{remark}
\begin{remark}
\textcolor{black}{
The PN-aware LMMSE estimator in~\eqref{eq:LMMSE_estimator_dist} is different from the one in~\cite{bjornson2015massive,papazafeiropoulos2021scalable,zheng2023asynchronous,jin2020spectral,ozdogan2019performance} as a multi-carrier PN model is considered instead of a single-carrier PN model, which leads to different impacts of PN statistics in~\eqref{eq:LMMSE_weights_matrix} and~\eqref{eq:LMMSE_inverse_matrix}.
}
\end{remark}

\subsection{Distributed Deep Learning Channel Estimation}
The \ac{lmmse} estimator~\eqref{eq:LMMSE_estimator_dist} is a linear estimator, but under \ac{pn}, a \ac{dl}-based nonlinear estimator can outperform it, as demonstrated for power amplifier nonlinearities~\cite{hu2020deep} and joint channel and PN estimation~\cite{mohammadian2021deep}. However, joint estimation requires true channel and PN data for training, which is practically infeasible to obtain separately. While real channels can be measured, obtaining true PN separately is practically infeasible. To address this, we propose a \ac{pn}-aware \ac{dl} estimator that focuses solely on channel estimation, using PN-impaired received pilots, thus eliminating the need for real PN data.

\rev{There are several \ac{nn} architectures in the literature for channel estimation, including \acl{cnn}~\cite{dong2019deep,mashhadi2021pruning,ma2020data} and fully connected NNs~\cite{hu2020deep}. Here, we focus on a simpler approach using a fully connected \ac{nn} with three dense layers containing $M_1$, $M_2$, and $2K$ neurons, respectively.} Additionally, it incorporates residual connections, a technique known to enhance learning speed in various applications~\cite{he2016deep,WuResidualDPD}. The structure of the DL estimator is shown in Fig.~\ref{fig:NN_structure}. Specifically, an arbitrary AP $l$ is equipped with a DL estimator $f_{\boldsymbol{\alpha}}(\cdot)$ that takes the received pilot vector $\boldsymbol{y}_l^{\text{p}}$ as input and produces the channel estimates $\hat{\boldsymbol{h}}_{l,n}$ between $K$ UEs and AP $l$ at subcarrier $n \in \mathcal{R}_r$ as the output
\begin{align}
    \hat{\boldsymbol{h}}_{l,n} = f_{\boldsymbol{\alpha}}(\boldsymbol{y}_l^{\text{p}}) \label{eq:DL_estimator}.
\end{align}
Here, $f_{\boldsymbol{\alpha}}(\cdot)$ represents the underlying function of the DL estimator, which includes the NN layers and activation functions. The learnable parameters, including the weights and biases, are denoted by $\boldsymbol{\alpha}$, and are trained using \rev{a \ac{mse} loss function between the NN output, $\{\hat{h}_{k,l,n}\}$ and the true channels, $\{{h}_{k,l,n}\}$, given by 
\begin{align}
    \mathcal{L} (\hat{\boldsymbol{h}}_{l,n}, {\boldsymbol{h}}_{l,n}) = \mathbb{E}\left\{ \big|\hat{{h}}_{k,l,n} - {{h}}_{k,l,n}\big|^2\right \},
\end{align}
where the expectation is computed by averaging over all $K$ UEs and all samples in a mini-batch.} 

\begin{remark}
    \rev{We train the DL estimators on a diverse set of channel realizations sampled from a specific distribution before deployment. This approach ensures that they do not rely on a single channel condition and can generalize effectively to new, unseen channels from the same channel distribution, even in rapidly changing environments. Re-training is required if the channel distribution changes.} While this requires extensive training data, fine-tuning a pre-trained estimator for similar scenarios can significantly reduce training overhead, as demonstrated in~\cite{hu2020deep}.  This approach alleviates the re-training issue when the scenario changes, such as variations in UE powers and PN variances.
    Training can be performed either distributively at each AP or centrally at the CPU. In the centralized approach, the CPU distributes trained parameters to each AP, enabling distributed inference. 
\end{remark}
\begin{figure}[!t]
	\vspace*{-0.5\baselineskip}
	\centering
	\begin{tikzpicture}[every node/.style={minimum size=1cm},font=\scriptsize, >=stealth,nd/.style={draw,fill=blue!10,circle,inner sep=0pt},blk/.style={draw,fill=blue!10,minimum height=2. cm, minimum width=0.5 cm, text centered}, x=0.6cm,y=0.5cm]

\coordinate[](u) at (0,0);
  \node(c2r)[draw, blk,fill=blue!0, right=2 of u] {\rotatebox{90}{$\mathbb{C}2\mathbb{R}$}};
  \coordinate[right=0.25 of c2r](residual_1){};
  \node(dense_1)[draw, blk, right=0.6 of c2r] {\rotatebox{90}{Dense($M_1$, ReLu)}};
    \coordinate[below=2.5 of residual_1](residual_2);

    \node(dense_2)[draw, blk, right=0.5 of dense_1] {\rotatebox{90}{Dense($M_2$, ReLu)}};
   \node(dense_3)[draw, blk, right=0.5 of dense_2] {\rotatebox{90}{Dense(2K, Linear)}};
    \coordinate[right=0.25 of dense_3](residual_3){};
    \coordinate[below=2.5 of residual_3](residual_4){};
   \node(r2c)[draw, blk, fill=blue!0, right=0.5 of dense_3] {\rotatebox{90}{$\mathbb{R}2\mathbb{C}$}};
     \node[blk, fill opacity=0.0, dashed, line width=1, minimum width=4.3cm,minimum height=3cm](arden)at ($(c2r)!0.5!(r2c)$){};
     \node[above=-0.3cm of arden](arden_name){$f_{\boldsymbol{\alpha}}(\boldsymbol{y}_l^{\text{p}})$};
     \coordinate[right=2 of r2c](x_p){};

    \draw [->] (u)--node[above left=-0.3cm and -0.4cm]{$\boldsymbol{y}_l^{\text{p}}$}(c2r);
    \draw [->] (c2r)--(dense_1);
    \draw [->] (dense_1)--(dense_2);
    \draw [->] (dense_2)--(dense_3);
    \draw [->] (dense_3)--(r2c);
    \draw [-] (residual_1)--(residual_2);
    \draw [-] (residual_2)--node[below=-0.4cm]{Residual Connection}(residual_4);
    \draw [->] (residual_4)--(residual_3);
    \draw [->] (r2c)--node[above right=-0.3cm and -0.3cm]{$\hat{\boldsymbol{h}}_{l,n}$}(x_p);

\end{tikzpicture}
	\vspace*{-0.5\baselineskip}
	\caption{Structure of the DL channel estimator  $f_{\boldsymbol{\alpha}}(\cdot)$ using a fully-connected NN. The C2R and R2C blocks convert between complex and real signals. }
	\label{fig:NN_structure}
\end{figure}
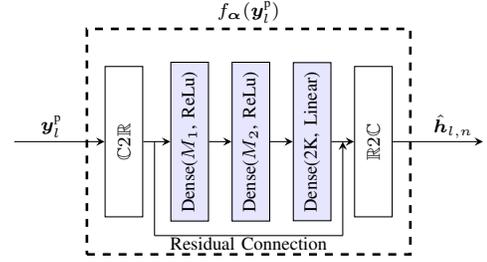

\newcolumntype{Y}{>{\centering\arraybackslash}X}
\newcolumntype{C}[1]{>{\centering\arraybackslash}p{#1}}
\begin{table*}[t] 
\centering
\caption{\rev{Computational complexity per coherence block for different channel and PN estimators for a generic UE $k$ when using the MMSE combining~\eqref{eq:v_k_MMSE}. Only complex multiplications and complex divisions are considered. $\mathcal{M}_k$ is defined in~\eqref{eq:M_k_define}.}}
\label{tab:complexity_estimators}
\rev{
\begin{tabularx}{1\linewidth}{p{0.21\linewidth}|C{0.2\linewidth}|C{0.23\linewidth}|C{0.23\linewidth}}
\hline
\multicolumn{2}{c|}{Estimator} &  Channel estimation & PN estimation \\ \hline
\multicolumn{2}{c|}{PN-unaware distributed channel MMSE~\cite[Eq. (3)]{bjornson2020scalable}}& $(\tau_p+3)K |\mathcal{M}_k|$ & \textemdash \\ \hline
\multicolumn{2}{c|}{Distributed joint channel and PN LMMSE~\cite[Eq.(12)]{papazafeiropoulos2021scalable}}& \multicolumn{2}{c}{$(\tau_p^2+3\tau_p) (\tau_c N_c - \tau_p)K |\mathcal{M}_k| $}   \\ \hline
\multicolumn{2}{c|}{Distributed joint channel and CPE LMMSE~\eqref{eq:LMMSE_estimator_dist}} &   \multicolumn{2}{c}{$(\tau_p^2+3\tau_p) \tau_c K |\mathcal{M}_k|$} \\ \hline 
\multirow{3}{*}{\begin{tabular}[c]{@{}c@{}} Centralized joint channel and CPE \\ estimator in Algorithm~\ref{alg:LMMSE_central} \end{tabular}} &  \multicolumn{1}{c|}{Distributed channel DL~\eqref{eq:DL_estimator}}  & \multicolumn{1}{c|}{$((2\tau_pM_1 + M_1M_2 + 2M_2K)/4) |\mathcal{M}_k| $} & \textemdash    \\ \cline{2-4}
 & \multicolumn{1}{c|}{Distributed channel LMMSE~\eqref{eq:LMMSE_h_knownJ}}   &\multicolumn{1}{c|}{$(\tau_p^2+3\tau_p)K |\mathcal{M}_k| N_{\text{Iter}}$} &  \textemdash \\ \cline{2-4}
 & \multicolumn{1}{c|}{Centralized CPE LMMSE~\eqref{eq:lemma_LMMSE_J_1}} & \textemdash & \multicolumn{1}{c}{$((L^2 \tau_p^2 + L \tau_p) + L \tau_p K \tau_c |\mathcal{M}_k|) N_{\text{Iter}} $}     \\ \cline{1-4}
\end{tabularx} 
}
\end{table*}

\begin{table*}[t] 
\centering
\caption{\rev{Number of complex scalars that a generic AP $l$  needs to exchange with the CPU over the fronthaul in each coherence block for different channel and PN estimators when using the MMSE combining~\eqref{eq:v_k_MMSE}. $\mathcal{D}_l$ is defined in~\eqref{eq:D_l_define}.} }
\label{tab:Overhead_estimators}
\rev{
\begin{tabularx}{1\linewidth}{p{0.21\linewidth}|C{0.2\linewidth}|C{0.23\linewidth}|C{0.23\linewidth}}
\hline
\multicolumn{2}{c|}{Estimator} &  Pilot signals or channel/PN estimates & Data signals  \\ \hline
\multicolumn{2}{c|}{PN-unaware distributed channel MMSE~\cite[Eq. (3)]{bjornson2020scalable}}& $\tau_p$ or $|\mathcal{D}_l|$ & $\tau_c N_c -\tau_p$ \\ \hline
\multicolumn{2}{c|}{Distributed joint channel and PN LMMSE~\cite[Eq.(12)]{papazafeiropoulos2021scalable}}& \multicolumn{1}{c|}{$\tau_p$ or $|\mathcal{D}_l| (\tau_c N_c - \tau_p)$} &  $\tau_c N_c -\tau_p$ \\ \hline
\multicolumn{2}{c|}{Distributed joint channel and CPE LMMSE~\eqref{eq:LMMSE_estimator_dist}} &   \multicolumn{1}{c|}{$\tau_p$ or $|\mathcal{D}_l|\tau_c$} & $\tau_c N_c -\tau_p$ \\ \hline 
\multirow{3}{*}{\begin{tabular}[c]{@{}c@{}} Centralized joint channel and CPE \\ estimator in Algorithm~\ref{alg:LMMSE_central} \end{tabular}} &  \multicolumn{1}{c|}{Distributed channel DL~\eqref{eq:DL_estimator}}  & \multicolumn{1}{c|}{$\tau_p$ or $|\mathcal{D}_l|$} & $\tau_c N_c -\tau_p$    \\ \cline{2-4}
 & \multicolumn{1}{c|}{Distributed channel LMMSE~\eqref{eq:LMMSE_h_knownJ}}   &\multicolumn{1}{c|}{$\tau_p$ or $(|\mathcal{D}_l| +|\mathcal{D}_l|\tau_p)N_{\text{Iter}}$} &  $\tau_c N_c -\tau_p$ \\ \cline{2-4}
 & \multicolumn{1}{c|}{Centralized CPE LMMSE~\eqref{eq:lemma_LMMSE_J_1}} & \multicolumn{1}{c|}{$\tau_p$} & $\tau_c N_c -\tau_p$    \\ \cline{1-4}
\end{tabularx} 
}
\end{table*}

\section{PN-aware Centralized Channel Estimation}\label{sec:Central_Estimator}
We derive a centralized PN-aware channel and CPE estimator that alternates between a distributed channel estimator and a centralized CPE estimator, effectively exploiting phase correlation in shared common LO scenarios.

In cell-free \ac{mmimo} networks, PN can be correlated across APs, due to originating from the same UEs or a shared LO of APs. The correlated PN introduces correlated interference, significantly degrading the performance of centralized combining if the PN correlation is not exploited, as shown in~\cite{bjornson2015massive,papazafeiropoulos2021scalable}.
To address the correlated PN interference issue, we propose a novel centralized estimator that exploits PN correlation between APs. Our approach employs an iterative process that alternates between: 
\begin{enumerate}
\item distributed LMMSE estimation for channel optimization;
\item centralized LMMSE estimation for CPE optimization. 
\end{enumerate}
To enhance convergence speed, we initialize the process using the DL-based channel estimator~\eqref{eq:DL_estimator}. This estimator provides superior initial estimates under PN conditions compared to traditional LMMSE methods, thereby reducing the number of iterations required for 
 the centralized channel and CPE estimation.

\subsection{Centralized LMMSE CPE Estimation given Channels}
Based on the collected pilots  ${\boldsymbol{y}}^{\text{p}}$ at the CPU and the channel estimates $\hat{\boldsymbol{h}}_{n}$ from an arbitrary channel estimator,  we derive a centralized estimator of the CPEs $\{J_{k,l,0}^{(\tau)}\}$ for any UE $k$, AP $l$, subcarrier $n \in \mathcal{R}_r$, and OFDM symbol $\tau \in \{1,\cdots,\tau_c\}$. The conventional \ac{mmse} estimation is difficult to compute since the CPEs are not Gaussian distributed. Moreover, leveraging the fact that the CPE is amplitude constrained ($|J_{k,l,0}^{(\tau)}|< 1$), we propose a \textit{constrained} LMMSE estimator that bounds each estimate~\cite{michaeli2008constrained}. 
\begin{lemma}
        Given arbitrary channel estimates $\hat{\boldsymbol{h}}_{n}$ of~\eqref{eq:channel_all}, the centralized constrained LMMSE estimate of $\{{J}_{k,l,0}^{(\tau)}\}$ in~\eqref{eq:J_CPE_k_tau_p} based on ${\boldsymbol{y}}^{\text{p}}$ in~\eqref{eq:observ_all} is given by 
    \begin{align}
        \hat{{J}}_{k,l,0}^{(\tau)} =  g\Big(\bar{J}_{k,l,0}^{(\tau)} + \boldsymbol{b}_{k,l}^{\rev{\mathsf{T}}(\tau)} \breve{\boldsymbol{\Psi}}^{-1} ({\boldsymbol{y}}^{\text{p}} - \bar{\boldsymbol{y}}^{\text{p}}) \Big) \label{eq:lemma_LMMSE_J_1}
    \end{align}
    where
    \begin{align}
       \bar{\boldsymbol{y}}^{\text{p}} &= \left[ \bar{\boldsymbol{y}}^{\text{p} \mathsf{T}}_1
    ,\cdots, \bar{\boldsymbol{y}}^{\text{p} \mathsf{T}}_L \right]^{\mathsf{T}}, \\
\bar{\boldsymbol{y}}^{\text{p} \mathsf{}}_l &= \sum_{k^{\prime}=1}^{K} \sqrt{p_k^{\prime}} \boldsymbol{s}_{t_k^{\prime}} e^{-\frac{\sigma_{\varphi_k}^2+\sigma_{\phi_l}^2}{2}} \hat{h}_{k^{\prime},l,n}, \\
\big[\boldsymbol{b}_{k,l}^{(\tau)}\big]_{\tau^{\prime} l^{\prime}} &= \sum_{k^{\prime}=0}^{K} 
     \sqrt{p_{k^{\prime}}} \hat{h}_{k^{\prime},l^{\prime},n_{\tau^{\prime}}}^{*}  
    {{s}}^{*(\tau^{\prime})}_{t_{k^{\prime}},n_1} \breve{B}_{k,k^{\prime},l,l^{\prime},0,0}^{(\tau -\tau^{\prime})}. \label{eq:LMMSE_CPE_central_EJy}
    \end{align}
    Here, $\bar{J}_{k,l,0}^{(\tau)}$ and $\breve{\boldsymbol{\Psi}}$ are defined in~\eqref{eq:E_J} and~\eqref{eq:Psi_l_1_l_2}, respectively. The constraint function $g(\cdot)$ with amplitude thresholds $\kappa_{\text{min}}$ and $\kappa_{\text{max}}$ is defined as
    \begin{align}
        g(x) = \frac{x}{|x|} \times \max\left(\kappa_{\text{min}}, \min\left(|x|, \kappa_{\text{max}}\right)\right) \,.\label{eq:constraint_func_def}
    \end{align}
\end{lemma}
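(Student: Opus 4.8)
The plan is to treat the CPE $J_{k,l,0}^{(\tau)}$ as the scalar to be estimated from the stacked pilot observation $\boldsymbol{y}^{\text{p}}$ in \eqref{eq:observ_all}, regarding the channel estimates $\hat{\boldsymbol{h}}_n$ as deterministic quantities and substituting them for the true channels. Conditioned on $\hat{\boldsymbol{h}}_n$, the model \eqref{eq:observ_all} becomes affine in the CPEs, because each effective channel in \eqref{eq:effective_channel} factors as $J_{k,l,0}^{(\tau)} h_{k,l,n}$; this affine structure is exactly what makes an LMMSE treatment tractable even though the CPEs are non-Gaussian. I would start from the standard LMMSE formula \cite{kay1993fundamentals}, which gives the unconstrained estimate as the prior mean plus a cross-covariance times inverse auto-covariance acting on the centered observation, namely $\mathbb{E}\{J_{k,l,0}^{(\tau)}\} + \mathrm{Cov}(J_{k,l,0}^{(\tau)}, \boldsymbol{y}^{\text{p}})\,\mathrm{Cov}(\boldsymbol{y}^{\text{p}})^{-1}(\boldsymbol{y}^{\text{p}} - \mathbb{E}\{\boldsymbol{y}^{\text{p}}\})$, which should match the bracketed argument of $g(\cdot)$ in \eqref{eq:lemma_LMMSE_J_1}.

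Next I would compute the four ingredients in turn. The prior mean of the CPE is $\bar{J}_{k,l,0}^{(\tau)}$ from \eqref{eq:E_J}. The mean observation $\bar{\boldsymbol{y}}^{\text{p}}$ is obtained by taking the expectation of \eqref{eq:observ_all} over the PN while holding $\hat{\boldsymbol{h}}_n$ fixed, so the CPE factors are replaced by their means while the ICI and noise terms vanish in mean, yielding the per-AP form stated in the lemma. The cross-covariance vector $\boldsymbol{b}_{k,l}^{(\tau)}$ follows by correlating $J_{k,l,0}^{(\tau)}$ with each observation entry $y_{l',n_{\tau'}}^{(\tau')}$: only the desired-signal part of the observation contributes, and the CPE--CPE second moments that appear are exactly the $\breve{B}_{k,k',l,l',0,0}^{(\tau-\tau')}$ of \eqref{eq:B_matrix_compute_diff} (and the $B$ of \eqref{eq:B_matrix_compute} when indices coincide), reproducing \eqref{eq:LMMSE_CPE_central_EJy}. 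Finally, the auto-covariance $\breve{\boldsymbol{\Psi}}$ of $\boldsymbol{y}^{\text{p}}$ is assembled from three contributions: the correlated effective-channel (signal) term, the ICI term, and the white-noise term; under a shared LO the cross-AP blocks are nonzero because the CPEs are correlated across APs, which is precisely the correlation this centralized estimator is meant to exploit.

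The amplitude constraint is justified directly from the CPE definition $J_{k,l,0}^{(\tau)} = \tfrac{1}{N}\sum_n e^{j\check{\theta}_{k,l,n}^{(\tau)}}$: since each summand has unit modulus, the triangle inequality gives $|J_{k,l,0}^{(\tau)}| \le 1$. To incorporate this prior knowledge without sacrificing the closed form, I would invoke the constrained-LMMSE projection of \cite{michaeli2008constrained}, which retains the phase of the unconstrained estimate and clips its magnitude to $[\kappa_{\min}, \kappa_{\max}]$; this is exactly the map $g(\cdot)$ of \eqref{eq:constraint_func_def} applied to the unconstrained estimate.

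I expect the main obstacle to be the correct assembly of the auto-covariance $\breve{\boldsymbol{\Psi}}$. Its signal part couples all $L\tau_p$ observation entries through the CPE--CPE correlations, so under a shared LO the off-diagonal (cross-AP) blocks do not vanish and must be built from \eqref{eq:B_matrix_compute} and \eqref{eq:B_matrix_compute_diff} together with the substituted channel estimates, while the ICI term carries its own second-order structure that must be accounted for rather than folded into the white noise, in contrast with the mismatched model. Additional care is needed because the estimate is conditional on $\hat{\boldsymbol{h}}_n$ rather than the true channels, so the derivation is really an LMMSE for the CPE \emph{within} the alternating scheme; checking that substituting estimates for true channels leaves the moment expressions consistent is the delicate bookkeeping step.
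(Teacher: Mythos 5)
Your proposal is correct and follows essentially the same route as the paper's Appendix C: the standard LMMSE formula of \cite{kay1993fundamentals} with the channel estimates substituted as deterministic quantities, the prior mean $\bar{J}_{k,l,0}^{(\tau)}$ from \eqref{eq:E_J}, the cross-correlation vector built from $\breve{B}_{k,k',l,l',0,0}^{(\tau-\tau')}$, the block matrix $\breve{\boldsymbol{\Psi}}$ assembled from the signal, ICI, and noise contributions with nonvanishing cross-AP blocks under a shared LO, and finally the amplitude projection $g(\cdot)$ from the constrained-LMMSE framework of \cite{michaeli2008constrained}. No substantive difference or gap.
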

\begin{proof}
    See Appendix~\ref{Appendix:LMMSE_CPE_Central}.
\end{proof}
Without constraint \( g(\cdot) \), the CPE estimation $\hat{{J}}_{k,l,0}^{(\tau)}$ is of mean $\bar{{J}}_{k,l,0}^{(\tau)}$ and variance $ \breve{\epsilon}_{k,l}^{(\tau)} \triangleq \boldsymbol{b}_{k,l}^{(\tau)}  \breve{\boldsymbol{\Psi}}_{l}^{-1} \boldsymbol{b}_{k,l}^{\mathsf{H}(\tau)}$, and the channel estimation error  $\tilde{{J}}_{k,l,0}^{(\tau)}\triangleq {{J}}_{k,l,0}^{(\tau)} - \hat{{J}}_{k,l,0}^{(\tau)}$ is of zero mean and variance $ \breve{c}_{k,l,n}^{(\tau)} \triangleq B_{0,0}^{(\tau-\tau)} - \breve{\epsilon}_{k,l}^{(\tau)}$.

\begin{remark}
Without the constraint \( g(\cdot) \), the LMMSE estimate of the CPEs is Gaussian distributed with zero mean, which differs from the actual distribution of the CPEs. The additional prior information beyond the mean and variance of CPEs brought by amplitude limitation leads to improved accuracy via imposing an additional constraint related to the amplitude.
\end{remark}
\begin{remark}
The value of the upper and lower thresholds ($\kappa_{\text{min}}$ and $\kappa_{\text{max}}$) of the constraint function \( g(\cdot) \) depends on the PN variance and the OFDM symbol index. Higher PN variance or later OFDM symbols increase the absolute phase of the PN, resulting in many CPEs with smaller amplitudes.
\end{remark}

\subsection{Distributed LMMSE Channel Estimation Given CPEs}
\begin{lemma}
    Given CPE estimates $\hat{\boldsymbol{J}}_{k,l}^{\text{p}}$, the distributed LMMSE estimate of  ${h}_{k,l,n}^{}$ based on ${\boldsymbol{y}}_l^{\text{p}}$ is
    \begin{align}
    \hat{{h}}_{k,l,n}^{} 
    &=\sqrt{p_k}\beta_{k,l}({\boldsymbol{s}}_{t_k} \odot \hat{\boldsymbol{J}}_{k,l}^{\text{p}})^{\mathsf{H}} \tilde{\boldsymbol{\Psi}}_{l}^{-1} {\boldsymbol{y}}_{l}^{\text{p}}, \label{eq:LMMSE_h_knownJ}
\end{align}
where
\begin{align}\label{eq:LMMSE_inverse_matrix_knownJ}
\tilde{\boldsymbol{\Psi}}_{l} &= \sum\nolimits_{k=1}^{K}p_{k}\beta_{k,l} \tilde{\boldsymbol{\Phi}}_{k} + {\boldsymbol{Z}}^{\text{ICI}}_{l} + \sigma^2 \mathbf{I}_{\tau_p} 
\\
[\tilde{\boldsymbol{\Phi}}_{k}]_{\tau_1, \tau_2} &= {s}^{(\tau_1)}_{t_k,n_1} {s}^{*(\tau_2)}_{t_k,n_2} \hat{J}^{(\tau_1)}_{k,l,0} \hat{J}^{*(\tau_2)}_{k,l,0}, \label{eq:Phi_h}
\end{align}
and ${\boldsymbol{Z}}^{\text{ICI}}_{l}$  is defined in~\eqref{eq:Zeta_ICI_full_expression}.
\end{lemma}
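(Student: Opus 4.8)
The plan is to treat this as a textbook \ac{lmmse} estimation problem in which the \ac{cpe} is regarded as a \emph{known} deterministic quantity. First I would take the pilot observation model~\eqref{eq:y_l_pilot_vector}, insert the effective-channel factorization $\boldsymbol{h}_{k,l}^{\text{E-p}} = \boldsymbol{J}_{k,l}^{\text{p}} h_{k,l,n}$ from~\eqref{eq:h_eff_k_l}, and replace the true \ac{cpe} vector $\boldsymbol{J}_{k,l}^{\text{p}}$ by its estimate $\hat{\boldsymbol{J}}_{k,l}^{\text{p}}$. This turns $\boldsymbol{y}_l^{\text{p}}$ into a model that is \emph{linear} in the scalar channel coefficients $\{h_{k',l,n}\}_{k'=1}^K$, namely $\boldsymbol{y}_l^{\text{p}} = \sum_{k'=1}^{K}\sqrt{p_{k'}}(\boldsymbol{s}_{t_{k'}}^{\text{p}}\odot\hat{\boldsymbol{J}}_{k',l}^{\text{p}})\,h_{k',l,n} + \sum_{k'=1}^{K}\boldsymbol{\zeta}_{k',l}^{\text{p}} + \boldsymbol{w}_l$. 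Since $h_{k,l,n}$ is zero-mean and the conditioned observation is zero-mean, the \ac{lmmse} estimate is $\hat{h}_{k,l,n} = \mathbb{E}\{h_{k,l,n}(\boldsymbol{y}_l^{\text{p}})^{\mathsf{H}}\}\,\tilde{\boldsymbol{\Psi}}_l^{-1}\boldsymbol{y}_l^{\text{p}}$ with $\tilde{\boldsymbol{\Psi}}_l = \mathbb{E}\{\boldsymbol{y}_l^{\text{p}}(\boldsymbol{y}_l^{\text{p}})^{\mathsf{H}}\}$, following the same orthogonality-principle argument of~\cite{kay1993fundamentals} already used for Lemma~1.

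Next I would evaluate the two correlation quantities. For the cross-correlation, the inter-\ac{ue} uncorrelatedness~\eqref{eq:channel_uncorr} kills every term except the UE-$k$ signal contribution, and using $\mathbb{E}\{|h_{k,l,n}|^2\}=\beta_{k,l}$ gives $\mathbb{E}\{h_{k,l,n}(\boldsymbol{y}_l^{\text{p}})^{\mathsf{H}}\} = \sqrt{p_k}\beta_{k,l}(\boldsymbol{s}_{t_k}^{\text{p}}\odot\hat{\boldsymbol{J}}_{k,l}^{\text{p}})^{\mathsf{H}}$, which is exactly the prefactor in~\eqref{eq:LMMSE_h_knownJ}. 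For the covariance $\tilde{\boldsymbol{\Psi}}_l$, I would expand $\boldsymbol{y}_l^{\text{p}}(\boldsymbol{y}_l^{\text{p}})^{\mathsf{H}}$ into three blocks: the signal-by-signal block, which (again using~\eqref{eq:channel_uncorr}) collapses to $\sum_{k}p_k\beta_{k,l}(\boldsymbol{s}_{t_k}^{\text{p}}\odot\hat{\boldsymbol{J}}_{k,l}^{\text{p}})(\boldsymbol{s}_{t_k}^{\text{p}}\odot\hat{\boldsymbol{J}}_{k,l}^{\text{p}})^{\mathsf{H}}$ whose $(\tau_1,\tau_2)$ entry matches $[\tilde{\boldsymbol{\Phi}}_k]_{\tau_1,\tau_2}$ in~\eqref{eq:Phi_h}; the thermal-noise block $\sigma^2\mathbf{I}_{\tau_p}$; and the \ac{ici} block, which I would identify with the diagonal matrix $\boldsymbol{Z}^{\text{ICI}}_l$ already computed in the Appendix for Lemma~1. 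Summing the three reproduces $\tilde{\boldsymbol{\Psi}}_l$ in~\eqref{eq:LMMSE_inverse_matrix_knownJ}.

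The hard part will be the \ac{ici} term $\boldsymbol{\zeta}_{k,l}^{\text{p}}$: under block fading the in-block channel taps appearing in the \ac{ici} sum coincide with the desired $h_{k,l,n}$, so a priori the \ac{ici} is correlated both with the quantity being estimated and with the signal block, and a naive treatment as independent additive noise is not automatically justified. The plan is to reuse the treatment of Lemma~1, folding the in-block correlation into the second-order statistics so that the \ac{ici} enters only through its covariance $\boldsymbol{Z}^{\text{ICI}}_l$ (diagonal, from~\eqref{eq:Zeta_ICI_full_expression}), while its cross-correlation with $h_{k,l,n}$ is shown to be carried entirely by the \ac{cpe} ($i=0$) term and hence already accounted for, the higher-order phase drifts $J_{k,l,i}^{(\tau)}$ ($i\neq0$) contributing only to the colored-noise covariance. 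The one structural change relative to Lemma~1 is that, because $\boldsymbol{J}_{k,l}^{\text{p}}$ is now a \emph{known} estimate rather than a random vector to be co-estimated, the expectation over the \ac{cpe} is dropped: the averaged quantity $B_{k,l,0,0}^{(\tau_1-\tau_2)}$ of~\eqref{eq:Phi_joint} is replaced by the deterministic product $\hat{J}^{(\tau_1)}_{k,l,0}\hat{J}^{*(\tau_2)}_{k,l,0}$, which is precisely what distinguishes $\tilde{\boldsymbol{\Phi}}_k$ in~\eqref{eq:Phi_h} from $\boldsymbol{\Phi}_{k',l}$ in~\eqref{eq:Phi_joint}.
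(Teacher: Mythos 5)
Your proposal is correct and follows essentially the same route as the paper's own proof: apply the standard LMMSE formula $\hat{h}_{k,l,n}=\mathbb{E}\{h_{k,l,n}\boldsymbol{y}_l^{\text{p}\mathsf{H}}\}\big(\mathbb{E}\{\boldsymbol{y}_l^{\text{p}}\boldsymbol{y}_l^{\text{p}\mathsf{H}}\}\big)^{-1}\boldsymbol{y}_l^{\text{p}}$, compute the cross-correlation as $\sqrt{p_k}\beta_{k,l}(\boldsymbol{s}_{t_k}^{\text{p}}\odot\hat{\boldsymbol{J}}_{k,l}^{\text{p}})^{\mathsf{H}}$, and obtain $\tilde{\boldsymbol{\Psi}}_l$ by repeating the covariance computation of~\eqref{eq:Psi_Joint_l} with the random CPEs replaced by their (now deterministic) estimates, which is exactly why $B_{k,l,0,0}^{(\tau_1-\tau_2)}$ becomes $\hat{J}_{k,l,0}^{(\tau_1)}\hat{J}_{k,l,0}^{*(\tau_2)}$. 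Your explicit flagging of the ICI--channel cross-correlation under block fading is more careful than the paper, which simply reuses $\boldsymbol{Z}_l^{\text{ICI}}$ from Lemma~1 and omits the details.
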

\begin{proof}
See Appendix \ref{Appendix:LMMSE_h}.
\end{proof}
With the \ac{lmmse} estimation, the channel estimation $\hat{{h}}_{k,l,n}^{}$ is of zero mean and variance $ \acute{\epsilon}_{k,l,n}^{} \triangleq p_k \beta_{k,l}^2 {\boldsymbol{s}}_{t_k,n}^{\mathsf{H}} \hat{\boldsymbol{J}}_{k,l}^{p} \tilde{\boldsymbol{\Psi}}_{l}^{-1} \hat{\boldsymbol{J}}_{k,l}^{\mathsf{H} p} {\boldsymbol{s}}_{t_k,n}^{\mathsf{}}$, and the channel estimation error  $\tilde{{h}}_{k,l,0}^{(\tau)}\triangleq {{h}}_{k,l,n}^{} - \hat{{h}}_{k,l,n}^{}$ is of zero mean and variance $ \acute{c}_{k,l,n}^{} \triangleq \beta_{k,l} - \acute{\epsilon}_{k,l}^{(\tau)}$. Compared to the distributed joint LMMSE estimator~\eqref{eq:LMMSE_estimator_dist}, the estimator~\eqref{eq:LMMSE_h_knownJ} estimates the channel using the CPE estimates $\hat{\boldsymbol{J}}_{k,l}^{\text{p}}$. This introduces the new term $\tilde{\boldsymbol{\Psi}}_{l}$ compared to~\eqref{eq:LMMSE_inverse_matrix}.
The proposed centralized estimator that alternates between CPE estimation and channel estimation is in Algorithm~\ref{alg:LMMSE_central}. \rev{Algorithm~\ref{alg:LMMSE_central} converges to a stationary point of~\eqref{eq:mainproblem_joint} (See proof in Appendix~\ref{Appendix:Alg1_convg}).}

\begin{algorithm}[t]
\caption{{Proposed Centralized Alternating Channel and CPE Estimation}}\label{alg:LMMSE_central}
\KwIn{ $\boldsymbol{y}^{\text{p}}$ in \eqref{eq:observ_all}. }
\KwOut{$\{\hat{\boldsymbol{J}}_{k,l}^{\text{c}}\}$, $\{\hat{\boldsymbol{h}}^{\text{}}_{k,l,n}\}$ for all $k$ and $l$.}
\rev{\textbf{Initialization}}: Set $i=0$. Initial the channel estimate $\{\hat{\boldsymbol{h}}^{\text{}\rev{0}}_{k,l,n}\}$ by distributed LMMSE~\eqref{eq:LMMSE_h_knownJ} or DL estimator~\eqref{eq:DL_estimator}; Initial PN to zero for all $k$, $l$, and $n$. \\
\For{i=1:$N_{\text{Iter}}$}{
    Update $\{\hat{\boldsymbol{J}}_{k,l}^{\text{c},\rev{i}}\}$: Centralized LMMSE CPE estimation by~\eqref{eq:lemma_LMMSE_J_1} given $\{\hat{{h}}_{k,l,n}^{\rev{i-1}}\}$. \\
    Update $\{\hat{\boldsymbol{h}}^{\text{}\rev{i}}_{k,l,n}\}$: distributed LMMSE channel estimation by~\eqref{eq:LMMSE_h_knownJ} given $\{\hat{\boldsymbol{J}}_{k,l}^{\text{p},\rev{i}}\}$ \rev{from step 3.} 
}
\Return $\{\hat{\boldsymbol{J}}_{k,l}^{\text{c}} \hat{h}_{k,l,n}\}$\;
\end{algorithm}

\section{\rev{Uplink Data Transmission}}
\rev{In this section, we present a novel achievable \ac{se} expression for uplink transmission under \ac{pn}. We also analyze the computational complexity and fronthaul overhead associated with various channel and PN estimators when employing uplink MMSE combining. The uplink data transmission model follows the scalable cell-free mMIMO framework described in~\cite{bjornson2020scalable,wu2023phasenoise}, with additional details provided in Appendix~\ref{Appendix:uplink_data_transmission}. 
}

\begin{figure*}[!t]
\begin{equation}
\text{SINR}_{k,n}^{(\tau)} = \frac{p_k \left|\mathbb{E}\left\{\boldsymbol{v}_{k,n}^{\mathsf{H},(\tau)} \boldsymbol{D}_k \boldsymbol{h}_{k,n}^{(\tau)}\right\}\right|^2}
{\sum\nolimits_{i=1 }^{K} p_i  \mathbb{E}\Big\{\left|\boldsymbol{v}_{k,n}^{\mathsf{H},(\tau)} \boldsymbol{D}_k \boldsymbol{h}_{i,n}^{(\tau)}\right|^2\Big\} - p_k \left|\mathbb{E}\Big\{\boldsymbol{v}_{k,n}^{\mathsf{H}, (\tau)} \boldsymbol{D}_k \boldsymbol{h}_{k,n}^{(\tau)}\right\}\Big|^2 + \rho^{\text{ICI},(\tau)}_{k,n} + \sigma^2\mathbb{E}\Big\{\left| \boldsymbol{D}_k \boldsymbol{v}_{k,n}^{\mathsf{H},(\tau)} \right|^2\Big\}
}\label{eq:SINR_lo_UatF}
\end{equation}
\hrule
\end{figure*}

\subsection{\rev{Uplink Spectral Efficiency}}
The ergodic capacity for this setup with \ac{pn} is unknown. We use the \ac{uatf} bound in cellular \ac{mmimo}~\cite[Th. 4.4]{bjornson2017massive} and~\cite{demir2021foundations,nayebi2016performance,bashar2019uplink} for cell-free \ac{mmimo} to derive a novel achievable uplink \ac{se} under \ac{pn}. 
\begin{proposition}
An uplink achievable SE of UE $k$ over subcarrier $n \in \mathcal{N}_d$ is
\begin{align}
    \text{SE}_{k,n}^{\text{}} =\frac{\tau_cN_c-\tau_p}{\tau_cN_c} \times\frac{1}{\tau_c}\sum_{\tau=1}^{\tau_c} \log_2(1+\text{SINR}_{k,n}^{ (\tau)}),\label{eq:SE_UatF_PN_aware}
\end{align}
where $\text{SINR}_{k,n}^{ (\tau)}$ is the eﬀective \ac{sinr} of UE $k$ over subcarrier $n$, given in~\eqref{eq:SINR_lo_UatF} with the ICI term $\rho_{k,n}^{\text{ICI},(\tau)}$ being defined in \eqref{eq:rho_ICI}.
\end{proposition}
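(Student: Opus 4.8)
The plan is to adapt the standard use-and-then-forget (UatF) bound of~\cite{bjornson2017massive} to the per-OFDM-symbol effective-channel model of Section~\ref{sec:signal_model}, treating each pair $(n,\tau)$ with $n\in\mathcal{N}_d$ separately and then averaging over the block. First I would stack the frequency-domain observations~\eqref{eq:FD_recv_tau} over all APs to write the received vector on subcarrier $n$ and symbol $\tau$ as $\boldsymbol{y}_n^{(\tau)}=\sum_{i=1}^{K}\sqrt{p_i}\,\boldsymbol{h}_{i,n}^{(\tau)}s_{i,n}^{(\tau)}+\sum_{i=1}^{K}\boldsymbol{\zeta}_{i,n}^{(\tau)}+\boldsymbol{w}_n^{(\tau)}$, where $\boldsymbol{h}_{i,n}^{(\tau)}$ is the effective-channel vector assembled from~\eqref{eq:effective_channel}. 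Forming the soft estimate $\hat{s}_{k,n}^{(\tau)}=\boldsymbol{v}_{k,n}^{\mathsf{H},(\tau)}\boldsymbol{D}_k\boldsymbol{y}_n^{(\tau)}$ with the DCC matrix $\boldsymbol{D}_k$ and combiner $\boldsymbol{v}_{k,n}^{(\tau)}$, I would decompose it into: (i) the desired term $\sqrt{p_k}\,\mathbb{E}\{\boldsymbol{v}_{k,n}^{\mathsf{H},(\tau)}\boldsymbol{D}_k\boldsymbol{h}_{k,n}^{(\tau)}\}\,s_{k,n}^{(\tau)}$ built from the deterministic average beamforming gain; (ii) the beamforming-gain fluctuation $\sqrt{p_k}(\boldsymbol{v}_{k,n}^{\mathsf{H},(\tau)}\boldsymbol{D}_k\boldsymbol{h}_{k,n}^{(\tau)}-\mathbb{E}\{\boldsymbol{v}_{k,n}^{\mathsf{H},(\tau)}\boldsymbol{D}_k\boldsymbol{h}_{k,n}^{(\tau)}\})s_{k,n}^{(\tau)}$; (iii) the multiuser interference $\sum_{i\neq k}\sqrt{p_i}\,\boldsymbol{v}_{k,n}^{\mathsf{H},(\tau)}\boldsymbol{D}_k\boldsymbol{h}_{i,n}^{(\tau)}s_{i,n}^{(\tau)}$; and (iv) the aggregated ICI-plus-noise term $\boldsymbol{v}_{k,n}^{\mathsf{H},(\tau)}\boldsymbol{D}_k(\sum_i\boldsymbol{\zeta}_{i,n}^{(\tau)}+\boldsymbol{w}_n^{(\tau)})$.

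The key step is to show that the collected effective noise (the sum of terms (ii)--(iv)) is zero-mean and uncorrelated with $s_{k,n}^{(\tau)}$. I would establish this from the independence of the data symbols across UEs and subcarriers, the zero mean of the channel fluctuation in (ii), and the fact that each $\boldsymbol{\zeta}_{i,n}^{(\tau)}$ is built from symbols on subcarriers $\jmath\neq n$, hence independent of $s_{k,n}^{(\tau)}$. Once uncorrelatedness holds, I would invoke the worst-case uncorrelated additive-noise result~\cite[Th. 4.4]{bjornson2017massive}: for an effective scalar channel $\hat{s}=a\,s+(\text{uncorrelated noise})$ with deterministic gain $a$, the rate $\log_2(1+|a|^2 p_k/(\text{effective-noise variance}))$ is achievable. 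Matching $a$ to the desired gain $\mathbb{E}\{\boldsymbol{v}_{k,n}^{\mathsf{H},(\tau)}\boldsymbol{D}_k\boldsymbol{h}_{k,n}^{(\tau)}\}$ and expanding the effective-noise variance via $\mathbb{E}\{|X|^2\}-|\mathbb{E}\{X\}|^2$ for the self-interference, together with the per-UE powers, reproduces~\eqref{eq:SINR_lo_UatF}; here the subtraction $-p_k|\mathbb{E}\{\cdot\}|^2$ removes the desired component from the $i=k$ summand.

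The main obstacle will be the explicit evaluation of the ICI power $\rho_{k,n}^{\text{ICI},(\tau)}=\sum_i p_i\,\mathbb{E}\{|\boldsymbol{v}_{k,n}^{\mathsf{H},(\tau)}\boldsymbol{D}_k\boldsymbol{\zeta}_{i,n}^{(\tau)}|^2\}$ of~\eqref{eq:rho_ICI}. This requires the second-order statistics of the off-diagonal phase drifts $J_{k,l,n-\jmath}^{(\tau)}$ for $\jmath\neq n$, which I would take from~\eqref{eq:B_matrix_compute} and~\eqref{eq:B_matrix_compute_diff}, combined with the block-fading structure~\eqref{eq:effec_channel_relation_1}--\eqref{eq:effec_channel_relation_2} that makes channels on subcarriers inside the same coherence block $\mathcal{R}_r$ identical but independent across blocks. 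Care is needed because the ICI couples every subcarrier to subcarrier $n$ through the circular convolution, so the expectation must separate the intra-block contributions (correlated channel with phase-drift correlation) from the inter-block ones. Finally, because the effective channel $\boldsymbol{h}_{k,n}^{(\tau)}$ depends on $\tau$ only through the CPE $J_{k,l,0}^{(\tau)}$, the achievable rate varies per OFDM symbol (channel aging); averaging $\tfrac{1}{\tau_c}\sum_{\tau=1}^{\tau_c}$ over the block and scaling by the data-symbol fraction $(\tau_c N_c-\tau_p)/(\tau_c N_c)$ to account for the pilot overhead then yields~\eqref{eq:SE_UatF_PN_aware}.
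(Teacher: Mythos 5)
Your proposal follows essentially the same route as the paper's proof: the same add-and-subtract decomposition around the deterministic gain $\mathbb{E}\{\boldsymbol{v}_{k,n}^{\mathsf{H},(\tau)}\boldsymbol{D}_k\boldsymbol{h}_{k,n}^{(\tau)}\}$, the same invocation of the worst-case uncorrelated-noise (UatF) bound per $(n,\tau)$ pair, and the same final averaging over $\tau$ with the pilot-overhead prefactor. The only (minor) divergence is in the ICI power: where you anticipate carefully separating intra-block from inter-block channel correlations in $\rho_{k,n}^{\text{ICI},(\tau)}$, the paper deliberately sidesteps this by treating the channels on subcarriers $\jmath\neq n$ as fresh random variables with variance $\beta_{i,l}$, collapsing the sum to $p_i\beta_{i,l}(1-B_{i,l,0,0}^{(0)})$ in~\eqref{eq:define_lamda} — a simplification, not a correction, so both treatments are valid.
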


\begin{proof}
    See Appendix~\ref{sec:Appendix_SINR_lo_UatF}.
\end{proof}
Note that the effective \ac{sinr} in~\eqref{eq:SE_UatF_PN_aware} varies for each OFDM symbol $\tau$ due to the \ac{cpe} introduced by the \ac{pn}. 
The SE expression in~\eqref{eq:SE_UatF_PN_aware} can be computed numerically for any combiner  $\boldsymbol{v}_{k,n}^{(\tau)}$ using Monte Carlo methods. 
\subsection{\rev{Uplink Combining}}
In the context of cell-free \ac{mmimo}, common combiners are \ac{mr}, \ac{lpmmse}, \ac{mmse}, and \ac{pmmse} combinings, given as~\cite[Eqs. (19),(29), (23), (20)]{bjornson2020scalable}. Among the four combiners, the MMSE combiner consistently demonstrates the highest performance, whereas the MR combiner performs significantly worse in comparison. As noted in~\cite{bjornson2019making}, the MR combiner is not recommended for use in cell-free \ac{mmimo} networks even without PN.
Here we only give the MMSE combiner
\begin{align}
    \boldsymbol{v}_{k,n}^{\text{MMSE},(\tau)} 
    &= p_k \Big(\sum\limits_{i=1}^{K} p_i \hat{\boldsymbol{H}}_{i,n}^{D,(\tau)}  + \boldsymbol{Z}_{i,n}^{(\tau)} \Big)^{\dagger} \boldsymbol{D}_k \hat{\boldsymbol{h}}_{i,n}^{(\tau)},
    \label{eq:v_k_MMSE}
\end{align}
where $\hat{\boldsymbol{H}}_{i,n}^{D,(\tau)} =\boldsymbol{D}_k \hat{\boldsymbol{h}}_{i,n}^{(\tau)} \hat{\boldsymbol{h}}_{i,n}^{\mathsf{H},(\tau)} \boldsymbol{D}_k$ 
and $\boldsymbol{Z}_{i,n}^{(\tau)}=\boldsymbol{D}_k\big(\sum_{i=1}^K p_i \boldsymbol{C}_{i,n}^{(\tau)}+\sigma_{\text{}}^2 \boldsymbol{I}_{L}\big) \boldsymbol{D}_k$. Here, the channel estimates, $\hat{\boldsymbol{h}}_{k,n}^{(\tau)}\triangleq [\hat{h}_{k,1,n}^{(\tau)},\cdots, \hat{h}_{k,L,n}^{(\tau)}]^{\mathsf{T}}$ and the estimation error variance $\{c_{k,l,n}^{(\tau)}\}$ are given by the estimators~\eqref{eq:LMMSE_estimator_dist},~\eqref{eq:lemma_LMMSE_J_1}, and~\eqref{eq:LMMSE_h_knownJ} depending on the underlying scenario. We refer to a combining scheme as \acl{pna} or \acl{pnu}, based on the usage of \acl{pna} or \acl{pnu} channel estimators. 
\subsection{\rev{Computational Complexity and Fronthaul Overhead}}
\rev{
The \ac{mmse} combining in~\eqref{eq:v_k_MMSE} is computed at the \ac{cpu}, either by using the channel and PN estimates provided by each AP or by using pilots forwarded from each AP (then the estimation is operated at the CPU). The computational complexity per coherence block for different channel and PN estimators, measured in complex multiplications and divisions, is presented in Table~\ref{tab:complexity_estimators} for a generic UE $k$. Additionally, the required fronthaul overhead per AP $l$, measured in complex scalars, is summarized in Table~\ref{tab:Overhead_estimators}. A detailed complexity analysis is available in Appendix~\ref{Appendix:complexity}. The results in Tables~\ref{tab:complexity_estimators} and~\ref{tab:Overhead_estimators} highlight that while the computational complexities of the channel and PN estimators are not scalable as $K\rightarrow \infty$, their fronthaul overhead remains scalable in this limit. Furthermore, as shown in~\cite[Table II]{bjornson2020scalable}, employing scalable combining schemes such as \ac{pmmse} or \ac{lpmmse} ensures that all estimators listed in Table~\ref{tab:complexity_estimators} achieve scalability as $K\rightarrow \infty$. }

\section{Numerical Results} 

Numerical results are presented to evaluate and compare the performance of the proposed solutions. Both APs and UEs are equipped with single antennas. The simulation configurations are shown in Table~\ref{tab:scen_details}, along with references. These setups allow for a comprehensive comparison with state-of-the-art methods and an investigation of how different LO configurations affect performance. In particular, two scenarios are considered:
\begin{itemize}[noitemsep, topsep=0pt,leftmargin=*]
    \item \textit{Scenario 1}: Each AP operates with an independent LO, resulting in uncorrelated PN across APs. In this case, we evaluate the proposed distributed joint channel and CPE estimator from equation \eqref{eq:LMMSE_estimator_dist}.
    \item \textit{Scenario 2}: APs share a common LO, resulting in correlated PN among APs. We evaluate the proposed centralized joint channel and CPE estimator, as described in Algorithm \ref{alg:LMMSE_central}.
\end{itemize}
Note that the PN processes at the UEs are independent of the PN processes at the APs in both scenarios. Scenario 1 resembles a conventional cell-free \ac{mmimo} network~\cite{bjornson2020scalable,papazafeiropoulos2021scalable} while Scenario 2 resembles a radio stripe network~\cite{interdonato2019ubiquitous,shaik2021mmse}. 

\subsection{Metrics and 
Baselines}\label{sec:benchmarks}
The uplink achievable \ac{se}~\eqref{eq:SE_UatF_PN_aware} is evaluated with the MMSE~\eqref{eq:v_k_MMSE} combiner using five channel estimators. The evaluation includes two pilot pattern, PP1 and PP2, depicted in Fig.~\ref{fig:coherent_block}. The first four estimators are under the OFDM generative signal model with PN~\eqref{eq:observ_all} and the last one is under the single-carrier generative signal model. We detail them in the following:
\begin{enumerate}[noitemsep, topsep=0pt,leftmargin=*]
    \item \textit{Unaware (Distr.)}: The PN-unaware distributed MMSE estimator from~\cite{bjornson2020scalable,shaik2021mmse,bjornson2019making}, derived from a no-PN signal model but applied to the PN-impaired generative model~\eqref{eq:observ_all}.
    \item \textit{Mismatched (Distr.)}: \rev{The mismatched PN-aware LMMSE estimator from~\cite{bjornson2015massive, papazafeiropoulos2021scalable}}, \rev{originally} derived from a single-carrier PN-impaired signal model but applied to the OFDM model~\eqref{eq:observ_all}, \rev{used as a Bayesian estimator benchmark.} 
    \item \textit{Proposed (Distr.)}: The proposed distributed PN-aware LMMSE estimator~\eqref{eq:LMMSE_estimator_dist}, derived from and applied to the OFDM signal model~\eqref{eq:observ_all}.
    \item \textit{Proposed (Centr.) w/ DL or w/ LMMSE}: The proposed centralized alternating channel and CPE estimator from Algorithm~\ref{alg:LMMSE_central}, derived from and applied to the OFDM signal model~\eqref{eq:observ_all}, initialized with the DL channel estimator~\eqref{eq:DL_estimator} or the LMMSE channel estimator~\eqref{eq:LMMSE_h_knownJ}, respectively. 
    \item \textit{Single-carrier (Distr.)}: \rev{The single-carrier PN-aware LMMSE estimator from~\cite{bjornson2015massive, papazafeiropoulos2021scalable}} applied to the single-carrier PN-impared signal model.
\end{enumerate}
Table~\ref{tab:DL_train_par} presents the training details for the \ac{dl} channel estimator, whose dimensions including the number of layers and neurons match those of the channel and PN estimators in~\cite{mohammadian2021deep}.
\begin{table}[t]
\centering
\caption{Parameter Setups for Deep Learning Channel Estimator}
\label{tab:dl_params}
\begin{tabularx}{0.75\linewidth}{c|c}
\hline
\textbf{Parameter} & \textbf{Value} \\ \hline
Number of neurons ($M_1 = M_2$) & 100 \\ \hline
Training realizations & 3000 \\ \hline
Batch size & 128 \\ \hline
Optimizer & Adam~\cite{kingma2014adam} \\ \hline
Initial learning rate & 0.01 \\ \hline
Learning rate drop factor & 0.2 \\ \hline
Drop frequency & Every 50 epochs \\ \hline
\end{tabularx} \label{tab:DL_train_par}
\end{table}

\newcolumntype{Y}{>{\centering\arraybackslash}X}
\begin{table}[t]
\centering
\caption{Parameter setups for two simulation scenarios.}
\label{tab:simulation_params}
\begin{tabularx}{1\linewidth}{>{\hsize=1.55\hsize}Y|>{\hsize=0.55\hsize}Y|>{\hsize=0.9\hsize}Y}
\hline
\textbf{Parameter} & \textbf{Scenario 1} & \textbf{Scenario 2} \\ \hline
AP LOs configuration & Separate & Shared \\ \hline
\multirow{3}{*}{AP distribution} &  \multirow{4}{*}{ \shortstack{Uniform in \\ 1x1 km$^2$~\cite{bjornson2020scalable}}} & Equidistant on 0.5x0.5 km$^2$ radio stripe~\cite{interdonato2019ubiquitous,shaik2021mmse}  \\ \cline{1-1} \cline{3-3}
UE distribution & & Unif.: 0.4x0.4 km$^2$ \\ \hline
Number of APs: $L$ & $200$ & Varied: $5$ to $100$ \\ \hline
Number of UEs: $K$ & $5$ & $2$ \\ \hline
LO coefficients: $\gamma_{\phi_l} = \gamma_{\varphi_k} \forall k,l$ &  \multicolumn{2}{|c}{Varied: $\geq 10^{-17}$~\cite{bjornson2015massive,papazafeiropoulos2021scalable, krishnan2015linear}} \\ \hline
Channel model & \multicolumn{2}{|c}{Spatially correlated~\cite{bjornson2020scalable}}
\\ \hline
UE power and carrier freq.: $p_k$, $f_c$ &  \multicolumn{2}{|c}{$100$ mW~\cite{bjornson2020scalable}, $2$ GHz~\cite{bjornson2015massive,papazafeiropoulos2021scalable}}\\ \hline
OFDM Setup: $N,\Delta_f$ &   \multicolumn{2}{|c}{$667,15 \text{kHz}$} \\ \hline
BW and symbol time: $W$, $T_s$ &  \multicolumn{2}{|c}{$10$ MHz, 0.1 $\mu$s~\cite{bjornson2015massive,papazafeiropoulos2021scalable}} \\ \hline
Coherence block:
$N_c, \tau_c, W_c, T_c$ &  \multicolumn{2}{|c}{12, 20, 180 kHz, 1.33 ms} \\ \hline
Number of pilots: $\tau_p$ &  \multicolumn{2}{|c}{20~\cite{papazafeiropoulos2021scalable,shaik2021mmse}} \\ \hline
\rev{
Number of Monte Carlo simulations} &  \multicolumn{2}{|c}{\rev{100}}\\ \hline
\end{tabularx} \label{tab:scen_details}
\end{table}


\subsection{Scenario 1: Separate LOs across APs}
In Scenario 1, each AP has a separate LO, and the PN processes at different APs are independent of each other. We evaluate the performance of the proposed distributed joint channel and CPE estimator in \eqref{eq:LMMSE_estimator_dist}. 
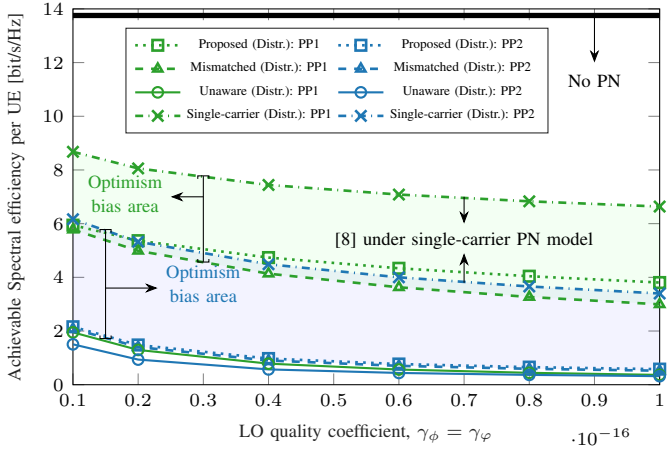
\begin{figure}[t]
    \centering
    \vspace*{-0 \baselineskip}
	\begin{tikzpicture}[font=\scriptsize]
\definecolor{color2}{rgb}{0.12156862745098,0.466666666666667,0.705882352941177}
\definecolor{color0}{rgb}{1,0.498039215686275,0.0549019607843137}
\definecolor{color1}{rgb}{0.172549019607843,0.627450980392157,0.172549019607843}
\definecolor{color3}{rgb}{0.83921568627451,0.152941176470588,0.156862745098039}
\definecolor{color4}{rgb}{0.580392156862745,0.403921568627451,0.741176470588235}
\definecolor{color5}{rgb}{0.549019607843137,0.337254901960784,0.294117647058824}
\definecolor{color6}{rgb}{0.890196078431372,0.466666666666667,0.76078431372549}
\definecolor{color7}{rgb}{0.737254901960784,0.741176470588235,0.133333333333333}

\begin{axis}[%
width=7.8cm,
height=5 cm,
at={(0,0)},
scale only axis,
xmin=1e-17,
xmax=1e-16,
xlabel style={font=\color{white!15!black}},
xlabel={\scriptsize LO quality coefficient, $\gamma_{\phi}=\gamma_{\varphi}$},
ymin=0,
ymax=14,
xmode=linear,
ymode=linear,
ylabel style={font=\color{white!15!black}},
ylabel={\scriptsize Achievable Spectral efficiency per UE [bit/s/Hz]},
axis background/.style={fill=white},
title style={font=\bfseries},
legend style={at={(0.45,0.95)}, font=\tiny, anchor=north, draw=white!15!black},
legend columns=2,
ytick distance={2},
xtick distance={1e-17},
]

\addplot [color=color1, dotted, line width=1.0pt, mark=square, mark options={solid, color1}]
  table[row sep=crcr]{%
1e-17  5.94383330704542\\
2e-17  5.36512989559428\\
4e-17  4.73617014879611\\
6e-17  4.33696867572506\\
8e-17  4.04170815344665\\
1e-16  3.80596786187626\\
};
\addlegendentry{Proposed (Distr.): PP1}

\addplot [color=color2, dotted, line width=1.0pt, mark=square, mark options={solid, color2}]
  table[row sep=crcr]{%
1e-17  2.15802461886262\\
2e-17  1.47981442998564\\
4e-17  0.975194369483064\\
6e-17  0.770135762628047\\
8e-17  0.654560971692418\\
1e-16  0.579724381398968\\
};
\addlegendentry{Proposed (Distr.): PP2}

\addplot [name path=PP1_low, color=color1, dashed, line width=1.pt, mark=triangle, mark options={solid, color1}]
  table[row sep=crcr]{%
1e-17  5.77626737968841\\
2e-17  4.98812642828819\\
4e-17  4.13656962705507\\
6e-17  3.62865193298017\\
8e-17  3.26773805168507\\
1e-16  2.99409182377315\\
};
\addlegendentry{Mismatched (Distr.): PP1}

\addplot [name path=PP2_low, color=color2, dashed, line width=1.pt, mark=triangle, mark options={solid, color2}]
  table[row sep=crcr]{%
1e-17  2.08264671885203\\
2e-17  1.40255425819434\\
4e-17  0.901046529866739\\
6e-17  0.702009870065457\\
8e-17  0.592306283079457\\
1e-16  0.521937696534418\\
};
\addlegendentry{Mismatched (Distr.): PP2}

\addplot [color=color1, line width=0.8pt, mark=o, mark options={solid, color1}]
  table[row sep=crcr]{%
1e-17  1.94996668781227\\
2e-17  1.29551948552723\\
4e-17  0.784990562742006\\
6e-17  0.565380280067559\\
8e-17  0.443219885602333\\
1e-16  0.36594056210677\\
};
\addlegendentry{Unaware (Distr.): PP1}

\addplot [color=color2, line width=0.8pt, mark=o, mark options={solid, color2}]
  table[row sep=crcr]{%
1e-17  1.50429532543392\\
2e-17  0.936006599238622\\
4e-17  0.569908292947348\\
6e-17  0.438132582433418\\
8e-17  0.365563669662293\\
1e-16  0.318561441634778\\
};
\addlegendentry{Unaware (Distr.): PP2}

\addplot [name path=PP1_upp, color=color1, dashdotted, line width=1.0pt, mark=x, mark size=3, mark options={solid, color1}]
  table[row sep=crcr]{%
1e-17  8.67692964281746\\
2e-17  8.05777386679638\\
4e-17  7.44154726813001\\
6e-17  7.08407532835877\\
8e-17  6.83189321156192\\
1e-16  6.63697763455869\\
};
\addlegendentry{Single-carrier (Distr.): PP1}

\addplot [name path=PP2_upp, color=color2, dashdotted, line width=1.0pt, mark=x, mark size=3, mark options={solid, color2}]
  table[row sep=crcr]{%
1e-17  6.164032459958\\
2e-17  5.32565781068932\\
4e-17  4.48699172016262\\
6e-17  4.00043827885411\\
8e-17  3.65925318841897\\
1e-16  3.39812943442964\\
};
\addlegendentry{Single-carrier (Distr.): PP2}

\addplot [color=black, line width=2.0pt, forget plot]
  table[row sep=crcr]{%
1e-17  13.7547541907785\\
2e-17  13.7547541907785\\
4e-17  13.7547541907785\\
6e-17  13.7547541907785\\
8e-17  13.7547541907785\\
1e-16  13.7547541907785\\
};
\draw[-Stealth] (axis cs: 9.e-17, 13.7)   -- (axis cs: 9e-17, 12) node[below=0cm] {\scriptsize \begin{tabular}{c}
    No PN \end{tabular}};

\draw[-Stealth] (axis cs: 7.e-17, 3.8)   -- (axis cs: 7.e-17, 4.9) node[above= -0.1cm] {\scriptsize \begin{tabular}{c}
    \shortstack{ \cite{papazafeiropoulos2021scalable} under single-carrier PN model}   \end{tabular}};
\draw[-Stealth] (axis cs: 7e-17, 7.)   -- (axis cs: 7e-17, 6.);

\draw[Bracket-Bracket] (axis cs: 1.5e-17, 5.8)   -- (axis cs: 1.5e-17, 1.7) node[above= -0.1cm] {\scriptsize \begin{tabular}{c}
    \shortstack{}   \end{tabular}};
\draw[Bracket-Bracket] (axis cs: 3e-17, 4.55)   -- (axis cs: 3e-17, 7.8) node[above= -0.1cm] {\scriptsize \begin{tabular}{c}
    \shortstack{}   \end{tabular}};
\draw[-Stealth] (axis cs: 1.5e-17, 3.6)   -- (axis cs: 2.3e-17, 3.6) node[above= -0.1cm] {\scriptsize \begin{tabular}{c}
    \shortstack{}   \end{tabular}};
\draw[-Stealth] (axis cs: 3e-17, 7)   -- (axis cs: 2.5e-17, 7) node[above= -0.1cm] {\scriptsize \begin{tabular}{c}
    \shortstack{}   \end{tabular}};

\node at (axis cs: 3e-17, 3.65){\scriptsize \begin{tabular}{c}
    \shortstack{ \textcolor{color2}{Optimism} \\\textcolor{color2}{bias area}}   \end{tabular}};

\node at (axis cs: 1.8e-17, 7.){\scriptsize \begin{tabular}{c}
    \shortstack{ \textcolor{color1}{Optimism} \\\textcolor{color1}{bias area}}   \end{tabular}};

\addplot [
    thick,
    color=green,
    fill=green, 
    fill opacity=0.05
] 
fill between[
    of=PP1_low and PP1_upp,
];

\addplot [
    thick,
    color=blue,
    fill=blue, 
    fill opacity=0.05
] 
fill between[
    of=PP2_low and PP2_upp,
];

\end{axis}

\end{tikzpicture}%
    \caption{Scenario 1 with separate LOs and pilot patterns PP1 and PP2. Uplink achieveable SE per UE versus the AP and UE LO quality coefficients $\gamma_{\phi_l}=\gamma_{\varphi_k}$ for the MMSE combining with the proposed distributed \acl{pna} \ac{lmmse}~\eqref{eq:LMMSE_estimator_dist}, single-carrier \acl{pna} \ac{lmmse}~\cite{papazafeiropoulos2021scalable} under OFDM (mismatched) and single-carrier generative models,  and \acl{pnu} \ac{mmse}~\cite{bjornson2020scalable} estimators. }
    \label{fig:SE_vs_LO}
     \vspace*{-0.5cm}
\end{figure}
\begin{figure}[t]
    \centering
    \vspace*{-0 \baselineskip}
    \input{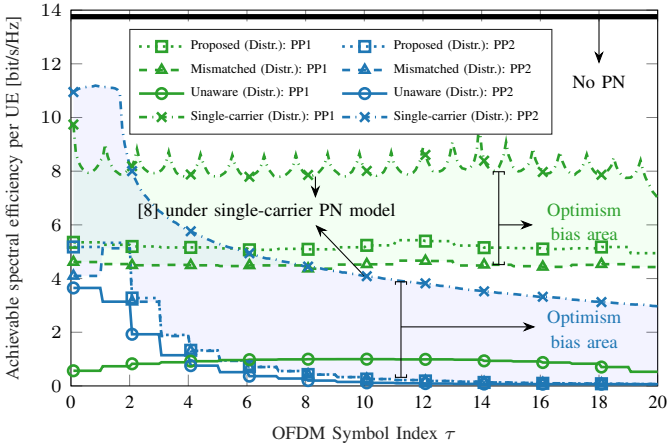}
    \caption{Scenario 1 with separate LOs and pilot patterns PP1 and PP2. Uplink achievable SE per UE versus the OFDM symbol index $\tau$ for the MMSE combining with the proposed distributed \acl{pna} \ac{lmmse}~\eqref{eq:LMMSE_estimator_dist}, single-carrier \acl{pna} \ac{lmmse}~\cite{papazafeiropoulos2021scalable} under OFDM (mismatched) and single-carrier generative signal models, and \acl{pnu} \ac{mmse}~\cite{bjornson2020scalable} estimators. }
    \label{fig:SE_vs_OFDM_Symbol}
\end{figure}

\begin{figure}[t]
    \centering
    \vspace*{-0 \baselineskip}
    \begin{tikzpicture}[font=\scriptsize]
\definecolor{color2}{rgb}{0.12156862745098,0.466666666666667,0.705882352941177}
\definecolor{color0}{rgb}{1,0.498039215686275,0.0549019607843137}
\definecolor{color1}{rgb}{0.172549019607843,0.627450980392157,0.172549019607843}
\definecolor{color3}{rgb}{0.83921568627451,0.152941176470588,0.156862745098039}
\definecolor{color4}{rgb}{0.580392156862745,0.403921568627451,0.741176470588235}
\definecolor{color5}{rgb}{0.549019607843137,0.337254901960784,0.294117647058824}
\definecolor{color6}{rgb}{0.890196078431372,0.466666666666667,0.76078431372549}
\definecolor{color7}{rgb}{0.737254901960784,0.741176470588235,0.133333333333333}

\begin{axis}[%
width=7.8 cm,
height=5 cm,
at={(0,0)},
scale only axis,
xmin=1e-17,
xmax=1e-16,
xlabel style={font=\color{white!15!black}},
xlabel={\scriptsize LO quality coefficient, $\gamma_{\phi_{l}}$ or $\gamma_{\varphi_{k}}$},
ymin=0,
ymax=14.,
ylabel style={font=\color{white!15!black}},
ylabel={\scriptsize Achievable spectral efficiency per UE [bit/s/Hz]},
axis background/.style={fill=white},
title style={font=\bfseries},
xmajorgrids,
legend style={at={(0.45,0.98)}, font=\tiny, anchor=north, legend cell align=left, draw=white!15!black},
legend columns=2,
ytick distance={2},
xtick distance={1e-17},
]

\addplot [color=color1, dotted, line width=1pt, mark=square, mark options={solid, color1}]
  table[row sep=crcr]{%
1e-17  5.73444435624321\\
2e-17  5.15190898396721\\
4e-17  4.54095211099974\\
6e-17  4.15253340343854\\
8e-17  3.85483218826421\\
1e-16  3.60889432491248\\
};
\addlegendentry{Proposed (Distr.): $\gamma_{\phi_{l}}=0$}

\addplot [color=color2, dotted, line width=1pt, mark=square, mark options={solid, color2}]
  table[row sep=crcr]{%
1e-17  8.0609986505516\\
2e-17  7.40756156097271\\
4e-17  6.73605943670597\\
6e-17  6.33029549969894\\
8e-17  6.03378592390553\\
1e-16  5.79784075419601\\
};
\addlegendentry{Proposed (Distr.): $\gamma_{\varphi_{k}}=0$}

\addplot [name path=PP1_low,color=color1, dashed, line width=1pt, mark=triangle, mark options={solid, color1}]
  table[row sep=crcr]{%
1e-17  5.70570501498768\\
2e-17  5.13876992183836\\
4e-17  4.44346093740758\\
6e-17  3.92978776366395\\
8e-17  3.54543551868279\\
1e-16  3.23944487057962\\
};
\addlegendentry{Mismatched (Distr.): $\gamma_{\phi_{l}}=0$}

\addplot [name path=PP2_low,color=color2, dashed, line width=1pt, mark=triangle, mark options={solid, color2}]
  table[row sep=crcr]{%
1e-17  7.29808103359052\\
2e-17  6.42203289004168\\
4e-17  5.51988694410637\\
6e-17  4.98807063247679\\
8e-17  4.61106586417215\\
1e-16  4.31950094570669\\
};
\addlegendentry{Mismatched (Distr.): $\gamma_{\varphi_{k}}=0$}

\addplot [color=color1, line width=1pt, mark=o, mark options={solid, color1}]
  table[row sep=crcr]{%
1e-17  1.80485234564352\\
2e-17  0.889389943397042\\
4e-17  0.384795188190813\\
6e-17  0.233877597055431\\
8e-17  0.159194667873629\\
1e-16  0.121039616048602\\
};
\addlegendentry{Unaware (Distr.): $\gamma_{\phi_{l}}=0$}

\addplot [color=color2, line width=1pt, mark=o, mark options={solid, color2}]
  table[row sep=crcr]{%
1e-17  3.34243381176435\\
2e-17  2.53197262641902\\
4e-17  1.72895216232552\\
6e-17  1.28604913199436\\
8e-17  1.0102447048643\\
1e-16  0.825188473302237\\
};
\addlegendentry{Unaware (Distr.): $\gamma_{\varphi_{k}}=0$}

\addplot [name path=PP1_upp,color=color1, dashdotted, line width=1.0pt, mark size=2.5pt, mark=x, mark options={solid, color1}]
  table[row sep=crcr]{%
0  13.232194833224\\
1e-17  8.70322134263176\\
2e-17  8.05566508085203\\
4e-17  7.41247119827609\\
6e-17  7.04042773121918\\
8e-17  6.77869133952681\\
1e-16  6.57696586912615\\
};
\addlegendentry{Single-carrier (Distr.): $\gamma_{\phi_{l}}=0$}

\addplot [name path=PP2_upp,color=color2, dashdotted, line width=1.0pt, mark size=2.5pt, mark=x, mark options={solid, color2}]
  table[row sep=crcr]{%
0  13.232194833224\\
1e-17  10.436747669971\\
2e-17  9.95551740585719\\
4e-17  9.43983276381894\\
6e-17  9.12259494224902\\
8e-17  8.89083395290684\\
1e-16  8.70740295062821\\
};
\addlegendentry{Single-carrier (Distr.): $\gamma_{\varphi_{k}}=0$}

\addplot [name path=ideal,color=black, solid, line width=2.0pt]
  table[row sep=crcr]{%
0  13.7547541907785\\
1e-16  13.7547541907785\\
};

\draw[-Stealth] (axis cs: 9.5e-17, 13.7)   -- (axis cs: 9.5e-17, 12) node[below=0cm] {\scriptsize \begin{tabular}{c}
    No PN \end{tabular}};
    
\node at (axis cs: 4.e-17, 8.7) {\scriptsize 
\begin{tabular}{c}
    \shortstack{ \cite{papazafeiropoulos2021scalable} under single-carrier PN model}   
\end{tabular}};

\draw[-Stealth] (axis cs: 2.e-17, 10)   -- (axis cs: 3.e-17, 9.2);
\draw[-Stealth] (axis cs: 5.e-17, 7.2)   -- (axis cs: 4e-17, 8.1);

\draw[Bracket-Bracket] (axis cs: 7.5e-17, 8.9)   -- (axis cs: 7.5e-17, 4.72);
\draw[Bracket-Bracket] (axis cs: 8.2e-17, 6.7)   -- (axis cs: 8.2e-17, 3.5);

\draw[-Stealth] (axis cs: 7.5e-17, 8.)   -- (axis cs: 8.2e-17, 8) node[above= -0.1cm] {\scriptsize \begin{tabular}{c}
    \shortstack{}   \end{tabular}};
\draw[-Stealth] (axis cs: 8.2e-17, 4.3)   -- (axis cs: 8.6e-17, 4.3) node[above= -0.1cm] {\scriptsize \begin{tabular}{c}
    \shortstack{}   \end{tabular}};

\node at (axis cs: 9e-17, 7.9){\scriptsize \begin{tabular}{c}
    \shortstack{ \textcolor{color2}{Optimism} \\\textcolor{color2}{bias area}}  \end{tabular}};

\node at (axis cs: 9.2e-17, 4.4){\scriptsize \begin{tabular}{c}
    \shortstack{ \textcolor{color1}{Optimism} \\\textcolor{color1}{bias area}}  \end{tabular}};

\addplot [
    thick,
    color=green,
    fill=green, 
    fill opacity=0.05
] 
fill between[
    of=PP1_low and PP1_upp,
];

\addplot [
    thick,
    color=blue,
    fill=blue, 
    fill opacity=0.05
] 
fill between[
    of=PP2_low and PP2_upp,
];

\end{axis}

\end{tikzpicture}
    \caption{Scenario 1 with separate LOs for perfect UE LOs ($\gamma_{\varphi_k}=0$) versus perfect AP LOs ($\gamma_{\phi_l}=0$). Uplink achievable SE per UE versus AP or UE LO quality coefficient for the MMSE combining with the proposed distributed \acl{pna} \ac{lmmse}~\eqref{eq:LMMSE_estimator_dist}, mismatched \acl{pna} \ac{lmmse}~\cite{papazafeiropoulos2021scalable}, and \acl{pnu} \ac{mmse}~\cite{bjornson2020scalable} estimators. Pilot pattern PP1 is used.  Dash-dotted lines with triangle markers show results for mismatched PNA LMMSE estimator under the single-carrier generative signal model.}
    \label{fig:SE_vs_LO_UE_vs_AP}
\end{figure}
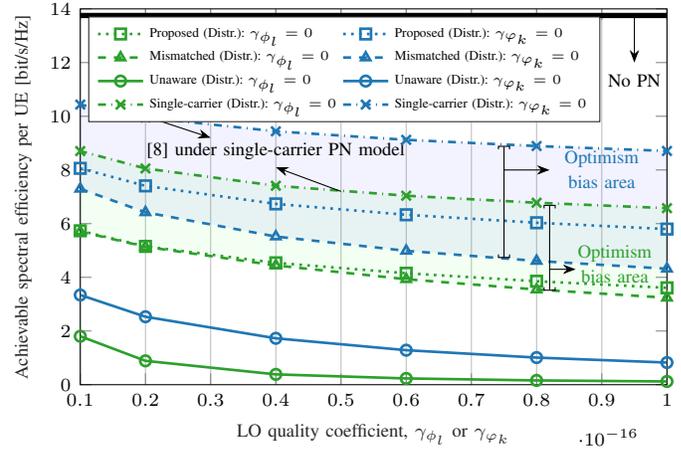%

\subsubsection{Impact of Pilot Patterns and Signal Model Mismatches}
Fig.~\ref{fig:SE_vs_LO} shows the achievable SEs~\eqref{eq:SE_UatF_PN_aware} as a function of the LO quality coefficient (given in~\eqref{eq:LO_coe_define}) for both APs and UEs under two pilot pattern schemes, PP1 and PP2, in the first coherence block. Based on the OFDM generative signal model with PN~\eqref{eq:observ_all}, the achievable SEs are computed using the MMSE combining~\eqref{eq:v_k_MMSE} based on the first three distributed channel estimators outlined in Section~\ref{sec:benchmarks}. Additionally, the single-carrier channel estimator under the single-carrier generative signal model is also shown. The no-\ac{pn} case of the PN-unaware MMSE estimator from~\cite{bjornson2020scalable,shaik2021mmse,bjornson2019making}, under a no-\ac{pn} generative signal model is also shown.

Fig.~\ref{fig:SE_vs_LO} shows the negative impact of LO quality on the achievable SE in cell-free mMIMO, in comparison to the study on cellular \ac{mmimo}~\cite{krishnan2015linear}. The results can also be interpreted as the impact of higher carrier frequencies, with $f_c=4$ GHz and $6$ GHz matching the results of $\gamma_{\phi}=0.2 \times 10^{-16}$ and $0.9 \times 10^{-16}$, respectively. Pilot pattern PP2 yields notably lower achievable SEs compared to PP1, as it lacks measurements for the effective channels of OFDM symbols from $3$ to $20$. The proposed PN-aware estimator, derived from the OFDM signal model, achieves the highest achievable SEs, with its advantage over the mismatched scheme growing as LO quality decreases in PP1. In contrast, the single-carrier estimator~\cite{papazafeiropoulos2021scalable}, now mismatched under the OFDM generative signal model, performs less effectively. \rev{Additionally, when applied to the single-carrier generative signal model, this estimator produces significantly higher achievable SE predictions than under the OFDM generative signal model, showing an \textit{optimism bias} of at least $40\%$ and $300\%$ for PP1 and PP2, respectively, as highlighted by the green and blue areas. The gaps suggest that the single-carrier method overestimates its performance in OFDM systems. In summary, our
findings reveal two key issues with the use of single-carrier methods in OFDM systems: estimators lead to degraded
performance under the correct OFDM PN model, and achievable SE predictions result in overly
optimistic performance predictions under the simplified single-carrier PN model.} 

Extending the results from~Fig.~\ref{fig:SE_vs_LO} with low LO quality ($\gamma_{\phi_l}=\gamma_{\varphi_k}=4\times10^{-17}$) as in~\cite{wu2023phasenoise}, Fig.~\ref{fig:SE_vs_OFDM_Symbol} shows the uplink achievable SE per UE across OFDM symbols in the first coherence block. Under the OFDM generative signal model~\eqref{eq:observ_all}, PP1 with evenly distributed pilots maintains more consistent achievable SEs and resists channel aging. Our distributed estimator outperforms mismatched and unaware estimators across the OFDM symbols. PP2, with pilots in the first two OFDM symbols, achieves the highest initial achievable SE but quickly degrades due to a lack of measurements and channel aging. In PP2, our estimator's advantage is limited to the first two OFDM symbols. Similar to Fig.~\ref{fig:SE_vs_LO}, the mismatched estimator, under a single-carrier model, yields overly optimistic SE predictions, showing an SE optimism bias of at least $60\%$ and $200\%$ for PP1 and PP2, respectively, highlighted by the green and blue areas.

\subsubsection{Different PN Impact from APs and UEs}
Fig.~\ref{fig:SE_vs_LO_UE_vs_AP} shows the impact on the achievable SE of the quality of the LOs in UEs and APs for the pilot pattern PP1. We compare perfect UE LOs ($\gamma_{\varphi_k}=0$) and perfect AP LOs ($\gamma_{\phi_l}=0$). The latter is practical since APs generally have higher-quality LOs than low-cost UEs. The results reveal two key points: (i) The UE LO quality impacts achievable SE far more than the AP LO, leading to lower SEs due to UE PN causing correlated interference between APs which eventually increases the total interference power when using the centralized combining; (ii) The proposed estimator significantly outperforms the mismatched estimator under imperfect AP LOs, as opposed to imperfect UE LOs. This suggests that the proposed algorithm is more effective when LO imperfections occur on the AP side. Similar to Fig.~\ref{fig:SE_vs_LO} and~\ref{fig:SE_vs_OFDM_Symbol}, the single-carrier estimator~\cite{papazafeiropoulos2021scalable} leads to overly optimistic SE predictions under the single-carrier generative signal model, showing an SE optimism bias of at least $40\%$ and $50\%$ for $\gamma_{\phi_l}=0$ and $\gamma_{\varphi_k}=0$, respectively.

\subsubsection{Impact of UE Density}
Fig.~\ref{fig:SE_vs_K} extends the analysis from Fig.~\ref{fig:SE_vs_OFDM_Symbol} by evaluating achievable SEs for pilot pattern PP1 with varying UE numbers, using centralized MMSE combiner. Solid black lines show the case without PN. For schemes with the MMSE combiner, under the OFDM generative signal model~\eqref{eq:observ_all}, SEs decrease dramatically with more UEs due to inter-carrier and inter-user interference. PN-aware MMSE combiner with the proposed distributed estimator shows SE gains for fewer UEs ($K<20$), but this decreases with more UEs due to pilot contamination. Notably, the mismatched estimator's results under the single-carrier signal model show optimistic SE predictions compared to its real performance under the OFDM model for at least $200\%$ due to underestimating the impact of PN. This over-optimism can lead to inaccurate analyses of critical network parameters such as the required number of APs, affordable number of UEs, and required LO quality and bandwidth. 

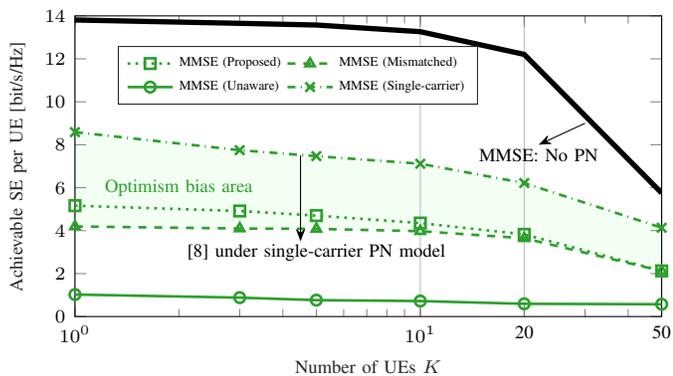
\begin{figure}[t]
    \centering
        \vspace*{-0 \baselineskip}
	\begin{tikzpicture}[font=\scriptsize]
\definecolor{color2}{rgb}{0.12156862745098,0.466666666666667,0.705882352941177}
\definecolor{color0}{rgb}{1,0.498039215686275,0.0549019607843137}
\definecolor{color1}{rgb}{0.172549019607843,0.627450980392157,0.172549019607843}
\definecolor{color3}{rgb}{0.83921568627451,0.152941176470588,0.156862745098039}
\definecolor{color4}{rgb}{0.580392156862745,0.403921568627451,0.741176470588235}
\definecolor{color5}{rgb}{0.549019607843137,0.337254901960784,0.294117647058824}
\definecolor{color6}{rgb}{0.890196078431372,0.466666666666667,0.76078431372549}
\definecolor{color7}{rgb}{0.737254901960784,0.741176470588235,0.133333333333333}

\begin{axis}[%
width=7.8cm,
height=4. cm,
at={(0,0)},
scale only axis,
xmin=1,
xmax=50,
xlabel style={font=\color{white!15!black}},
xlabel={\scriptsize Number of UEs $K$},
ymin=0,
ymax=14,
ylabel style={font=\color{white!15!black}},
ylabel={\scriptsize Achievable SE per UE [bit/s/Hz]},
axis background/.style={fill=white},
title style={font=\bfseries},
xmajorgrids,
minor xtick={1, 2, 3, 4, 5, 6, 7, 8, 9, 10,20,30,40,50},
legend style={at={(0.38,0.9)}, anchor=north, font=\tiny, legend cell align=left, draw=white!15!black},
legend columns=2,
ytick distance={2},
xtick distance={10},
xmode=log,
xtick={1,10,20,50}, 
xticklabels={$10^0$,$10^1$,20,50} 
]

\addplot [color=color1, dotted, line width=1.0pt, mark=square, mark options={solid, color1}]
  table[row sep=crcr]{%
1  5.1627044128957\\
3  4.91405126701253\\
5  4.70138584420239\\
10  4.35186595505371\\
20  3.82653494068536\\
50  2.12583886931534\\
};
\addlegendentry{MMSE (Proposed)}


\addplot [name path=MMSElow, color=color1, dashed, line width=1.0pt, mark=triangle, mark options={solid, color1}]
  table[row sep=crcr]{%
1  4.19524720689378\\
3  4.09915753400677\\
5  4.08307804207238\\
10  3.97445158977429\\
20  3.64596285281484\\
50  2.12126264517682\\
};
\addlegendentry{MMSE (Mismatched)}


\addplot [color=color1, line width=1.0pt, mark=o, mark options={solid, color1}]
  table[row sep=crcr]{%
1  1.02481828046844\\
3  0.881036824254423\\
5  0.763926788774802\\
10  0.721403676531116\\
20  0.592921494623534\\
50  0.573738144941694\\
};
\addlegendentry{MMSE (Unaware)}


\addplot [name path=MMSEupp, color=color1, dashdotted, line width=1.0pt, mark=x,mark size=2.5, mark options={solid, color1}]
  table[row sep=crcr]{%
1  8.58872032831695\\
3  7.74776053646426\\
5  7.4690749089152\\
10  7.11555379935587\\
20  6.21753998209174\\
50  4.13594723796447\\
};
\addlegendentry{MMSE (Single-carrier)}


\addplot [color=black, line width=2.0pt, forget plot]
  table[row sep=crcr]{%
1  13.8093761002373\\
3  13.654051805808\\
5  13.5745679911089\\
10  13.2649547091478\\
20  12.2079082136578\\
50  5.75270581319395\\
};

\addplot [
    thick,
    color=green,
    fill=green, 
    fill opacity=0.05
] 
fill between[
    of=MMSElow and MMSEupp,
soft clip={domain=1:50}
];


\draw[-Stealth] (axis cs: 30, 9.)   -- (axis cs: 22, 8) node[below= -0.1cm] {\scriptsize \begin{tabular}{c}
    MMSE: No PN \end{tabular}};

\node at (axis cs: 2, 6.){\scriptsize \begin{tabular}{c}
    \shortstack{\textcolor{color1}{Optimism bias area}}   \end{tabular}};

\node at (axis cs: 5, 3.)[] {\scriptsize \begin{tabular}{c}
    \shortstack{\cite{papazafeiropoulos2021scalable} under single-carrier PN model}
\end{tabular}};
\draw[-Stealth] (axis cs: 4.5, 7.5)   -- (axis cs:4.5, 3.5); 

\end{axis}
\end{tikzpicture}%
    \caption{Scenario 1 with separate LOs between $L=200$ APs with LO quality $\gamma_{\phi_l}=\gamma_{\varphi_k}= 4\times 10^{-17}$, and pilot pattern PP1. Uplink achievable SE per UE versus the number of \acp{ue} $K$ for the MMSE combiner~\eqref{eq:v_k_MMSE} with the proposed distributed \acl{pna} \ac{lmmse}~\eqref{eq:LMMSE_estimator_dist}, single-carrier \acl{pna} \ac{lmmse}~\cite{papazafeiropoulos2021scalable} under the OFDM (mismatched) and single-carrier generative signal models, and \acl{pnu} \ac{mmse}~\cite{bjornson2020scalable} estimators.}
    \label{fig:SE_vs_K}
    \vspace*{-0. cm}
\end{figure}
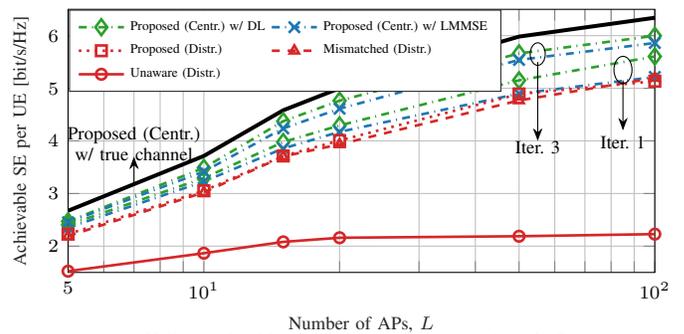
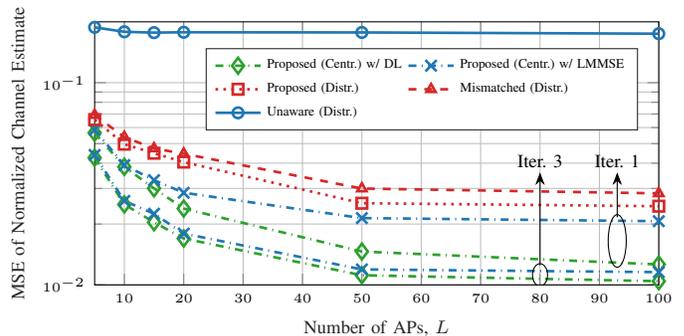
\begin{figure}[t]
    \centering
    \vspace*{-0 \baselineskip}
    \begin{subfigure}{0.5\textwidth}
    \centering
        \begin{tikzpicture}[font=\scriptsize]
\definecolor{color2}{rgb}{0.12156862745098,0.466666666666667,0.705882352941177}
\definecolor{color0}{rgb}{1,0.498039215686275,0.0549019607843137}
\definecolor{color1}{rgb}{0.172549019607843,0.627450980392157,0.172549019607843}
\definecolor{color3}{rgb}{0.83921568627451,0.152941176470588,0.156862745098039}
\definecolor{color4}{rgb}{0.580392156862745,0.403921568627451,0.741176470588235}
\definecolor{color5}{rgb}{0.549019607843137,0.337254901960784,0.294117647058824}
\definecolor{color6}{rgb}{0.890196078431372,0.466666666666667,0.76078431372549}
\definecolor{color7}{rgb}{0.737254901960784,0.741176470588235,0.133333333333333}

\begin{axis}[%
width=7.8cm,
height=3.5 cm,
at={(0,0)},
scale only axis,
xmin=5,
xmax=100,
xlabel style={font=\color{white!15!black}},
xlabel={\scriptsize Number of APs, $L$},
ymin=1.5,
ymax=6.5,
ylabel style={font=\color{white!15!black}},
ylabel={\scriptsize Achievable SE per UE [bit/s/Hz]},
axis background/.style={fill=white},
title style={font=\bfseries},
xmajorgrids,
ymajorgrids,
legend style={at={(0.37,0.999)}, anchor=north, legend cell align=left, font=\tiny, draw=white!15!black},
legend columns=2,
ytick distance={1},
xtick distance={10},
xmode=log,
xtick={5,6,7,8,9,10,20,30,40,50,60,70,80,90,100}, 
xticklabels={$5$,,,,,$10^1$,,,,,,,,,$10^2$}, 
xminorgrids,
]

\addplot [color=color1,dashdotted, line width=1.0pt, mark=diamond, mark size=3pt, mark options={solid, color1}]
  table[row sep=crcr]{%
5  2.40309940842885\\
10  3.29339231374291\\
15  3.97867850915645\\
20  4.29372529651706\\
50  5.15089302709226\\
100  5.60480124906414\\
};
\addlegendentry{Proposed (Centr.) w/ DL}

\addplot [color=color2, dashdotted, line width=1.0pt, mark=x, mark size=3pt, mark options={solid, color2}]
  table[row sep=crcr]{%
5  2.34949045872774\\
10  3.22796338375705\\
15  3.85647925165609\\
20  4.17210269249384\\
50  4.89827074655187\\
100  5.2145371569613\\
};
\addlegendentry{Proposed (Centr.) w/ LMMSE}

\addplot [color=color1, dashdotted, line width=1.0pt, mark=diamond, mark size=3pt, mark options={solid, color1}, forget plot]
  table[row sep=crcr]{%
5  2.47110825693089\\
10  3.48685517439682\\
15  4.37423502346381\\
20  4.76436176342668\\
50  5.66268848082029\\
100  5.99727142387683\\
};
\addplot [color=color2, dashdotted, line width=1.0pt, mark=x, mark size=3pt,mark options={solid, color2}, forget plot]
  table[row sep=crcr]{%
5  2.45347588163167\\
10  3.41746499280076\\
15  4.23742410925436\\
20  4.61152300711988\\
50  5.53942927360393\\
100  5.86111319581809\\
};
\addplot [color=color3, dotted, line width=1.0pt, mark=square, mark options={solid, color3}]
  table[row sep=crcr]{%
5  2.22833767024064\\
10  3.05140301694621\\
15  3.71342095085952\\
20  4.00679906871948\\
50  4.89105280490922\\
100  5.13465881044454\\
};
\addlegendentry{Proposed (Distr.)}

\addplot [color=color3, dashed, line width=1.0pt, mark=triangle, mark options={solid, color3}]
  table[row sep=crcr]{%
5  2.19030345573139\\
10  3.02104866998941\\
15  3.70456686714686\\
20  3.93445692789672\\
50  4.76876255006783\\
100  5.19503974198274\\
};
\addlegendentry{Mismatched (Distr.)}

\addplot [color=color3, line width=1.0pt, mark=o, mark options={solid, color3}]
  table[row sep=crcr]{%
5  1.52340843804768\\
10  1.86507104923039\\
15  2.08032297451503\\
20  2.15888248130858\\
50  2.1877764359247\\
100  2.22782002605332\\};
\addlegendentry{Unaware (Distr.)}

\addplot [color=black, line width=1.5pt, forget plot]
  table[row sep=crcr]{%
5  2.67205176075763\\
10  3.71344711668013\\
15  4.58191691231127\\
20  5.00054648262609\\
50  5.98192474400887\\
100  6.34018309945403\\
};

\draw (axis cs: 85, 5.35) ellipse (0.12cm and 0.18cm);
\draw[-Stealth] (axis cs: 85.,5.15)   -- (axis cs: 85,4.2) node[pos=1.27] {\scriptsize \begin{tabular}{c}
    Iter. 1
\end{tabular}};
\draw (axis cs: 55, 5.65) ellipse (0.1cm and 0.15cm);
\draw[-Stealth] (axis cs: 55,5.50)   -- (axis cs: 55,4.0) node[pos=1.1] {\scriptsize \begin{tabular}{c}
    Iter. 3
\end{tabular}};
\draw[-Stealth] (axis cs: 7, 3.2)   -- (axis cs: 7, 3.7) node[pos=1.4] {\scriptsize \begin{tabular}{c}
    Proposed (Centr.) \\w/ true channel
\end{tabular}};
\end{axis}

\end{tikzpicture}%
            \vspace*{-0.3cm}
        \caption{Uplink achievable SE per UE versus the number of APs}
        \label{fig:SE_vs_AP_shared}
    \end{subfigure}
    \hfill
                \vspace*{-0.5cm}
    \begin{subfigure}{0.5\textwidth}
    \centering
       \begin{tikzpicture}[font=\scriptsize]
\definecolor{color2}{rgb}{0.12156862745098,0.466666666666667,0.705882352941177}
\definecolor{color0}{rgb}{1,0.498039215686275,0.0549019607843137}
\definecolor{color1}{rgb}{0.172549019607843,0.627450980392157,0.172549019607843}
\definecolor{color3}{rgb}{0.83921568627451,0.152941176470588,0.156862745098039}
\definecolor{color4}{rgb}{0.580392156862745,0.403921568627451,0.741176470588235}
\definecolor{color5}{rgb}{0.549019607843137,0.337254901960784,0.294117647058824}
\definecolor{color6}{rgb}{0.890196078431372,0.466666666666667,0.76078431372549}
\definecolor{color7}{rgb}{0.737254901960784,0.741176470588235,0.133333333333333}

\begin{axis}[%
        width=7.5cm,
        height=3.5cm,
        at={(0,0)},
        scale only axis,
        xmin=5,
        xmax=100,
        xlabel style={font=\color{white!15!black}},
        xlabel={\scriptsize Number of APs, $L$},
        ymode=log,
        ymin=0.01,
        ymax=0.2,
        yminorticks=true,
        ylabel style={font=\color{white!15!black}},
        ylabel={\scriptsize MSE of Normalized Channel Estimate },
        axis background/.style={fill=white},
        title style={font=\bfseries},
        xmajorgrids,
        ymajorgrids,
        yminorgrids,
        legend style={at={(0.58,0.90)}, anchor=north, legend cell align=left, draw=white!15!black},
        legend columns=2, font=\tiny,
        xtick distance={10},
    ]

\addplot [color=color1, dashdotted, line width=1.0pt,mark size=3pt, mark=diamond, mark options={solid, color1}]
  table[row sep=crcr]{%
5  0.0565425604581833\\
10  0.0383755899965763\\
15  0.0300272520184517\\
20  0.0238519106060266\\
50  0.0146247418597341\\
100  0.0126242684200406\\
};
\addlegendentry{Proposed (Centr.) w/ DL}

\addplot [color=color2, dashdotted, line width=1.0pt, mark size=3pt,mark=x, mark options={solid, color2}]
  table[row sep=crcr]{%
5  0.0584789964327692\\
10  0.0391489764622509\\
15  0.0329750233915681\\
20  0.0285639268373347\\
50  0.0214024851238899\\
100  0.0206632204528791\\
};
\addlegendentry{Proposed (Centr.) w/ LMMSE}

\addplot [color=color1, dashdotted, line width=1.0pt,mark size=3pt, mark=diamond, mark options={solid, color1}, forget plot]
  table[row sep=crcr]{%
5  0.0425550937652588\\
10  0.0250141099095345\\
15  0.0204161633998156\\
20  0.016973489895463\\
50  0.0111619746312499\\
100  0.0104318214580417\\
};
\addplot [color=color2, dashdotted, line width=1.0pt,mark size=3pt, mark=x, mark options={solid, color2}, forget plot]
  table[row sep=crcr]{%
5  0.0439064019101151\\
10  0.0260216929920535\\
15  0.0224928149665647\\
20  0.017945656368215\\
50  0.0119305079275725\\
100  0.0115648459388985\\
};
\addplot [color=color3, dotted, line width=1.0pt, mark=square, mark options={solid, color3}]
  table[row sep=crcr]{%
5  0.0656923179633802\\
10  0.0496649628585044\\
15  0.0448060818910021\\
20  0.0405090519858632\\
50  0.025352710952048\\
100  0.0245061868087466\\
};
\addlegendentry{Proposed (Distr.)}

\addplot [color=color3, dashed, line width=1.0pt, mark=triangle, mark options={solid, color3}]
  table[row sep=crcr]{%
5  0.0691992107907005\\
10  0.0538342237359443\\
15  0.0473169926805724\\
20  0.0445399186186131\\
50  0.0299622416140163\\
100  0.0283651775035545\\
};
\addlegendentry{Mismatched (Distr.)}

\addplot [color=color2, line width=1.0pt, mark=o, mark options={solid, color2}]
  table[row sep=crcr]{%
5  0.187915781545628\\
10  0.178540952291551\\
15  0.176898465241096\\
20  0.177625706286982\\
50  0.177220001128913\\
100  0.174726039156272\\
};
\addlegendentry{Unaware (Distr.)}

\draw (axis cs: 93, 0.0165) ellipse (0.12cm and 0.35cm);
\draw[-Stealth] (axis cs: 93.,0.0216)   -- (axis cs: 93, 0.036) node[pos=1.2] {\scriptsize \begin{tabular}{c}
    Iter. 1
\end{tabular}};
\draw (axis cs: 80, 0.0112) ellipse (0.1cm and 0.15cm);
\draw[-Stealth] (axis cs: 80, 0.0125)   -- (axis cs: 80, 0.036) node[pos=1.1] {\scriptsize \begin{tabular}{c}
    Iter. 3
\end{tabular}};
\end{axis}

\end{tikzpicture}%
                   \vspace*{-0.3cm}
       \caption{MSE of the channel estimate versus the number of APs}
       \label{fig:Channel_MSE_CLO}
    \end{subfigure}
    \caption{Scenario 2 with a shared common LO between APs. LOs quality coefficients $\gamma_{\phi_l}=\gamma_{\varphi_k}= 10^{-17}$. The comparison includes the MMSE combining with the proposed distributed \acl{pna} \ac{lmmse}~\eqref{eq:LMMSE_estimator_dist}, mismatched \acl{pna} \ac{lmmse}~\cite{papazafeiropoulos2021scalable},  \acl{pnu} \ac{mmse}~\cite{bjornson2020scalable}, and centralized estimators initialized by DL and LMMSE channel estimators.}
    \label{fig:SE&ChannelMSE_vs_AP_shared}
\end{figure}

\subsubsection{Summary of Findings}
The proposed distributed joint channel and CPE estimator for separate LOs among APs is particularly beneficial when using poor-quality LOs (or high carrier frequencies) and a small number of UEs. Performance gains are influenced by the pilot pattern, with distributed pilots across different OFDM symbols significantly outperforming centralized pilots within the same symbol. Single-carrier estimators underestimate the impact of PN in the OFDM model, resulting in overly optimistic SE predictions. 

\subsection{Scenario 2: Common LO shared by APs}
In Scenario 2 where all APs share the same common LO, we evaluate the proposed centralized estimator (detailed in Algorithm~\ref{alg:LMMSE_central}), which can exploit the PN correlation.

\subsubsection{Impact of AP Density}
To evaluate the performance improvement of the proposed centralized estimator exploiting the PN correlation among APs compared to other distributed estimators,
Fig.~\ref{fig:SE_vs_AP_shared} shows the uplink SE per UE as a function of the number of APs in Scenario 2 with a shared LO between APs, using MMSE combining for various estimators: \ac{pnu} MMSE~\cite{bjornson2020scalable} (Unaware), mismatched PN-aware LMMSE~\cite{papazafeiropoulos2021scalable} (Mismatched), the proposed distributed joint LMMSE~\eqref{eq:LMMSE_estimator_dist} (Proposed Distri.), and the proposed centralized alternating estimator initialized with the DL channel estimator~\eqref{eq:DL_estimator} (Proposed Centr. w/ DL) and the LMMSE channel estimator~\eqref{eq:LMMSE_h_knownJ} (Proposed Centr. w/ LMMSE). Additionally, the achievable SE results for the centralized estimator initialized with perfect channels are also depicted by a black solid line.

The results demonstrate that the proposed centralized estimator significantly improves SE, especially as the number of APs increases. This enhancement indicates that centralized CPE estimation effectively exploits the \ac{pn} correlation between APs, with greater gains observed as the number of APs grows. Notably, the centralized estimator initialized with the DL channel estimator outperforms the one initialized with the LMMSE channel estimator~\eqref{eq:LMMSE_h_knownJ}, achieving better performance with the same number of iterations and thereby reducing the overhead. This superior performance is attributed to the DL channel estimator’s ability to provide better channel estimates under the impact of PN, as evidenced by the MSE of normalized channel estimation in Fig.~\ref{fig:Channel_MSE_CLO}. \rev{The centralized estimator with DL-based initialization nearly matches the performance of perfect channel knowledge. This implies that channel estimation isn’t the primary limit on SE, and more complex \ac{nn} structures are unlikely to offer significant gains.}

\subsubsection{Impact of UE Transmit Power}
Based on Scenario 2 with 50 APs sharing a common LO, consistent with the setups in Fig.~\ref{fig:SE_vs_AP_shared} and Fig.~\ref{fig:Channel_MSE_CLO}, we evaluate the impact of the PN under different UE transmit power. To isolate the impact of UE power, the UEs are placed in the center of the radio stripe area, equidistant from all APs. We evaluate the uplink achievable SE per UE by varying the transmit power of the 2 UEs \( p_1=p_2 \). The results, shown in Fig.~\ref{fig:SE_vs_power_CLO}, demonstrate that the centralized estimator significantly improves SE by leveraging PN correlation, particularly at high transmit power levels. The SE gain increases with higher transmit power, compared to distributed estimators. Additionally, the centralized estimator initialized with the DL channel estimator achieves better SEs than when initialized with the LMMSE channel estimator, given the same number of iterations. 
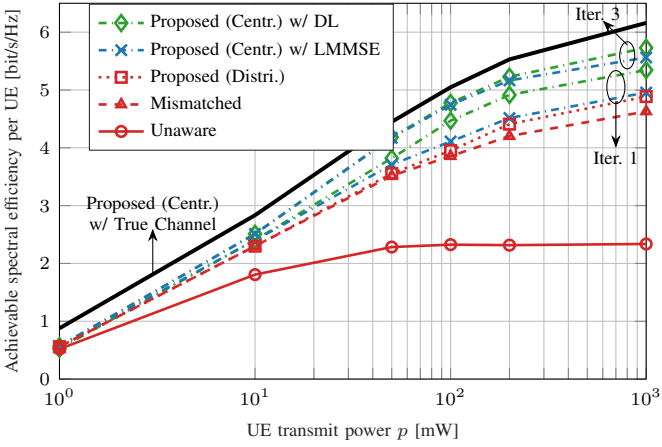
\begin{figure}[t]
    \centering
	\begin{tikzpicture}[font=\scriptsize]
\definecolor{color2}{rgb}{0.12156862745098,0.466666666666667,0.705882352941177}
\definecolor{color0}{rgb}{1,0.498039215686275,0.0549019607843137}
\definecolor{color1}{rgb}{0.172549019607843,0.627450980392157,0.172549019607843}
\definecolor{color3}{rgb}{0.83921568627451,0.152941176470588,0.156862745098039}
\definecolor{color4}{rgb}{0.580392156862745,0.403921568627451,0.741176470588235}
\definecolor{color5}{rgb}{0.549019607843137,0.337254901960784,0.294117647058824}
\definecolor{color6}{rgb}{0.890196078431372,0.466666666666667,0.76078431372549}
\definecolor{color7}{rgb}{0.737254901960784,0.741176470588235,0.133333333333333}

\begin{axis}[%
width=7.8cm,
height=5. cm,
at={(0,0)},
scale only axis,
xmin=1,
xmax=1000,
xlabel style={font=\color{white!15!black}},
xlabel={\scriptsize UE transmit power $p$ [mW]},
ymin=0,
ymax=6.5,
ylabel style={font=\color{white!15!black}},
ylabel={\scriptsize Achievable spectral efficiency per UE [bit/s/Hz]},
axis background/.style={fill=white},
title style={font=\bfseries},
xmajorgrids,
ymajorgrids,
legend style={at={(0.31,0.999)}, anchor=north, legend cell align=left, draw=white!15!black},
legend columns=1,
ytick distance={1},
xtick distance={10},
xmode=log,
xtick={1,10,20,30,40,50,60,70,80,90,100,200,300,400,500,600,700,800,900,1000}, 
xticklabels={$10^0$,$10^1$,,,,,,,,,$10^2$,,,,,,,,,$10^3$}, 
xminorgrids,
]

\addplot [color=color1, dashdotted, line width=1.0pt, mark size=3pt, mark=diamond, mark options={solid, color1}]
  table[row sep=crcr]{%
1  0.525439321842188\\
10  2.40621456379174\\
50  3.81910276802107\\
100  4.46796534838594\\
200  4.91515766143799\\
1000 5.34488\\
};
\addlegendentry{Proposed (Centr.) w/ DL}

\addplot [color=color2, dashdotted, line width=1.0pt, mark size=3pt, mark=x, mark options={solid, color2}]
  table[row sep=crcr]{%
1  0.566294746901324\\
10  2.40241249774746\\
50  3.70849533131813\\
100  4.10915645384868\\
200  4.51466603073865\\
1000 4.95628\\
};
\addlegendentry{Proposed (Centr.) w/ LMMSE}

\addplot [color=color1, dashdotted, line width=1.0pt, mark size=3pt,mark=diamond, mark options={solid, color1}, forget plot]
  table[row sep=crcr]{%
1  0.569356698425103\\
10  2.5128876150636\\
50  4.18503840962646\\
100  4.77665307433125\\
200  5.22820514363688\\
1000   5.7306\\
};
\addplot [color=color2, dashdotted, line width=1.0pt, mark=x, mark size=3pt,mark options={solid, color2}, forget plot]
  table[row sep=crcr]{%
1  0.569650014479115\\
10  2.50990064637007\\
50  4.17186251221869\\
100  4.74422819817652\\
200  5.16550679156001\\
1000 5.5629\\
};
\addplot [color=color3, dotted, line width=1.0pt, mark=square, mark options={solid, color3}]
  table[row sep=crcr]{%
1  0.559482077811965\\
10  2.29906331059284\\
50  3.55443798679154\\
100  3.95136381465999\\
200  4.40979906671559\\
1000 4.8865\\
};
\addlegendentry{Proposed (Distri.)}

\addplot [color=color3, dashed, line width=1.0pt, mark=triangle, mark options={solid, color3}]
  table[row sep=crcr]{%
1  0.55927551051287\\
10  2.2996753546431\\
50  3.51731073912257\\
100  3.85537858671205\\
200  4.20224530612084\\
1000 4.6294\\
};
\addlegendentry{Mismatched}

\addplot [color=color3, line width=1.0pt, mark=o, mark options={solid, color3}]
  table[row sep=crcr]{%
1  0.519758907624441\\
10  1.80674038897441\\
50  2.28351194155783\\
100  2.32514166136668\\
200  2.31629239203417\\
1000 2.3367 \\
};
\addlegendentry{Unaware}

\addplot [color=black, line width=1.5pt, forget plot]
  table[row sep=crcr]{%
1  0.877528442654905\\
10  2.834370044737\\
50  4.44956465727753\\
100  5.04736864903349\\
200  5.52945859514336\\
1000  6.1603\\
};

\draw (axis cs: 700, 5.05) ellipse (0.12cm and 0.22cm);
\draw[-Stealth] (axis cs: 700.,4.8)   -- (axis cs: 700, 4.) node[pos=1.27] {\scriptsize \begin{tabular}{c}
    Iter. 1
\end{tabular}};
\draw (axis cs: 800, 5.6) ellipse (0.1cm and 0.18cm);
\draw[-Stealth] (axis cs: 800, 5.78)   -- (axis cs: 600,6.2) node[pos=1.2] {\scriptsize \begin{tabular}{c}
    Iter. 3
\end{tabular}};


\draw[-Stealth] (axis cs: 3, 1.8)   -- (axis cs: 3, 2.6) node[pos=1.3] {\scriptsize \begin{tabular}{c}
    Proposed (Centr.) \\w/ True Channel
\end{tabular}};

\end{axis}

\end{tikzpicture}%
    \caption{Uplink SE versus UE transmit power in scenario 2, with \( L=50 \) APs sharing a common LO ($\gamma_{\phi_l}=10^{-17}$),  \( K=2 \) equidistant UEs (($\gamma_{\varphi_k}=10^{-17}$)), and pilot pattern PP1.  }
    \label{fig:SE_vs_power_CLO}
\end{figure}

\rev{\subsubsection{Impact of CPE Constraint Function}}
\rev{
We now evaluate how the amplitude constraint function~\eqref{eq:constraint_func_def} affects the centralized LMMSE CPE estimator~\eqref{eq:lemma_LMMSE_J_1}. Based on scenario 2 with $20$ APs and $2$ UEs, the evaluation considers CPE estimates at different OFDM symbols under situations of different LO quality.    Fig.~\ref{fig:CPE_threshold_1} and Fig.~\ref{fig:CPE_threshold_2} present two representative snapshots of the CPE estimates, both with and without the amplitude constraint function~\eqref{eq:constraint_func_def}, together with the true CPE values. In Fig.~\ref{fig:CPE_threshold_1}, both the APs and UEs use high-quality LOs ($\gamma_{\varphi_k}=\gamma_{\phi_l}=10^{-17}$) and the CPE results are for the first OFDM symbol. In contrast,  Fig.~\ref{fig:CPE_threshold_2}  focuses on the $20$-th OFDM symbol, with UEs employing low-quality LOs ($\gamma_{\varphi_k}=4\times 10^{-17}$).}

\rev{Fig.~\ref{fig:CPE_threshold_1} and Fig.~\ref{fig:CPE_threshold_2} show that the true CPE points lie near the unit circle with amplitudes below $1$.  When the LMMSE estimator is used without an amplitude constraint, its estimates often exceed $1$ in amplitude. By adding the amplitude constraint function, these estimates are kept closer to the unit circle, and the \ac{mse} of the CPE estimation drops from $0.041$ to $0.022$ in Fig.~\ref{fig:CPE_threshold_1}, and from $0.050$ to $0.031$  Fig.~\ref{fig:CPE_threshold_2}. Specifically, 
in Fig.~\ref{fig:CPE_threshold_1}, the LOs have better quality, so the CPEs are more tightly clustered near the unit circle. This justifies using a higher minimum threshold,  $\kappa_{\min}=0.98$. In contrast, Fig.~\ref{fig:CPE_threshold_2} involves lower-quality LOs, giving a broader distribution of the CPEs around the circle and prompting a lower minimum threshold, $\kappa_{\min}=0.90$. In practice, one can use this strategy to adjust the minimum threshold of the amplitude constraint to improve CPE estimation performance. }

\rev{
\begin{figure}
\begin{subfigure}{0.48\textwidth}
    \centering
\input{Results/CPE_threshold_tau1}
    \caption{\rev{CPEs at OFDM symbol $\tau=1$, $J_{k,l,0}^{(1)}$ with good quality LO at UEs and APs,  $\gamma_{\varphi_k}=\gamma_{\phi_l}=10^{-17}$. The amplitude constraint function improves the estimation accuracy with reduced CPE MSE from $0.041$ to $0.022$.}}
\label{fig:CPE_threshold_1}
\end{subfigure}
\begin{subfigure}{0.48\textwidth}
    \centering
\input{Results/CPE_threshold_tau20_4e17}
    \caption{\rev{CPEs at OFDM symbol $\tau=20$, $J_{k,l,0}^{(20)}$ with low quality LO at UEs,  $\gamma_{\varphi_k}=4 \times 10^{-17}$, and good quality LO at APs,  $\gamma_{\phi_l}=10^{-17}$. The constraint function improves the estimation accuracy with reduced CPE MSE from $0.050$ to $0.031$.}}
\label{fig:CPE_threshold_2}
\end{subfigure}
\caption{\rev{Results of the centralized CPE estimator~\eqref{eq:lemma_LMMSE_J_1} with and without constraint function $g(\cdot)$ in~\eqref{eq:constraint_func_def} in scenario 2 with $20$ APs and $2$ UEs.}  }
\end{figure}
}
\nopagebreak[4]
\subsubsection{Summary of Findings}
The proposed centralized alternating channel and CPE estimator for a shared LO among APs is most effective with a large number of APs and across a wide range of UE transmission powers. The DL channel estimator is more effective than the LMMSE channel estimator in enhancing the performance of the centralized estimator. 

\section{Conclusion}
This paper investigated the impact of \ac{pn} on uplink cell-free \ac{mmimo} \ac{ofdm} networks, considering both separate and shared \acp{lo} between \acp{ap}. We developed an accurate \ac{pn}-impaired cell-free \ac{mmimo} OFDM signal model and derived a novel uplink achievable \ac{se} expression. To address the challenges of uncorrelated and correlated \ac{pn} between APs, we proposed two novel \ac{pn}-aware channel estimators: i) a distributed \ac{lmmse} joint channel and \ac{cpe} estimator for the separate \acp{lo} setup, and ii) a centralized channel and \ac{cpe} estimator for the shared \acp{lo} setup. The latter estimator alternates between distributed channel estimations and centralized \ac{cpe} estimation to exploit \ac{pn} correlation. Additionally, we introduced a \ac{dl}-based channel estimator to improve the initialization of the centralized estimator, enhancing channel estimates and reducing the required number of iterations. \rev{Alternative neural network architectures for the DL estimator in the centralized design remain an area for future exploration.} Key highlights include: distributing pilots across different OFDM symbols rather than one OFDM symbol significantly improves performance. The proposed distributed estimator under OFDM signal model is suitable for cell-free networks with separate, poor-quality AP LOs, high carrier frequencies, and fewer UEs, while single-carrier channel estimation and uplink combining methods tend to underestimate PN effects, leading to overly optimistic performance predictions. The proposed centralized estimator performs better with a large number of APs and high UE power, with its performance further enhanced by the DL channel estimator compared to the LMMSE estimator. \rev{Future work could extend to multi-antenna configurations for both APs and UEs, where shared LOs may be employed for each AP or UE. In this scenario, the proposed alternating channel and CPE estimation algorithm can be used both locally (at each AP/UE) and centrally (at the CPU) to exploit PN correlation.}


\appendices{}
\section{Proof of Lemma 1}
\label{Appendix:LMMSE_Joint}
The general expression for the \ac{lmmse}  estimator is~\cite{kay1993fundamentals}
\begin{align}   
\hat{{h}}_{k,l,n}^{(\tau)}=\mathbb{E}  \{{{h}}_{k,l,n}^{(\tau)} {\boldsymbol{y}}_{}^{\text{p} \mathsf{H}} \} \big(\mathbb{E}\{ {\boldsymbol{y}}_{}^{\text{p}} {\boldsymbol{y}}_{}^{\text{p}\mathsf{H}}\}\big)^{-1} {\boldsymbol{y}}_{}^{\text{p}}, \label{eq:general_LMMSE_dist_1}
\end{align} 
where we have 
\begin{align}
    \mathbb{E}  \{{{h}}_{k,l,n}^{(\tau)}  {\boldsymbol{y}}_{}^{\text{p}\mathsf{H}} \}= \big[0,\cdots, \mathbb{E}  \{{{h}}_{k,l,n}^{(\tau)}  {\boldsymbol{y}}_{l}^{\text{p}\mathsf{H}} \},\cdots,0\big], \label{eq:E_h_y_zero}
    \end{align}
Here, $\mathbb{E}  \{{{h}}_{k,l,n}^{(\tau)} {\boldsymbol{y}}_{l_1}^{\mathsf{H}} \}=0$ for $l_1\neq l$  because $\mathbb{E}  \{{{h}}_{k,l,n}^{} {{h}}_{k,l_1,n}^{*} \}=0$ as assumed in~\eqref{eq:channel_uncorr}. Thus,~\eqref{eq:general_LMMSE_dist_1} is reduced to
\begin{align}
    \hat{{h}}_{k,l,n}^{(\tau)}=\mathbb{E}  \{{{h}}_{k,l,n}^{(\tau)} {\boldsymbol{y}}_{l}^{\text{p} \mathsf{H}} \} \big(\mathbb{E}\{ {\boldsymbol{y}}_{l}^{\text{p}} {\boldsymbol{y}}_{l}^{\text{p}\mathsf{H}}\}\big)^{-1} {\boldsymbol{y}}_{l}^{\text{p}}, \label{eq:general_LMMSE_dist_reduced}
\end{align}
where $\mathbb{E}  \{{{h}}_{k,l,n}^{(\tau)} {\boldsymbol{y}}_{l}^{\text{p} \mathsf{H}} \}$ is given by
    \begin{align}
    \mathbb{E} & \{{{h}}_{k,l,n}^{(\tau)} {\boldsymbol{y}}_{l}^{\text{p} \mathsf{H}} \} 
         =\sqrt{p_k}\beta_{k,l}  \nonumber \\
    &\times \big[{{s}}^{*(\tau_1)}_{t_k,n_1}\mathbb{E}\{J_{k,l,0}^{(\tau)}J_{k,l,0}^{*(\tau_1)}\},\cdots,{{s}}^{*(\tau_p)}_{t_k,n_{\tau_p}}\mathbb{E}\{J_{k,l,0}^{(\tau)}J_{k,l,0}^{*(\tau_p)}\}   \big]\nonumber\\
    &\hspace{17mm} = \sqrt{p_k}\beta_{k,l}{\boldsymbol{s}}_{t_k}^{\mathsf{H}} \Acute{\boldsymbol{B}}_{k,l}^{(\tau)},
\end{align}
where $\Acute{\boldsymbol{B}}_{k,l}^{(\tau)}=\text{diag}[\mathbb{E}\{J_{k,l,0}^{(\tau)}J_{k,l,0}^{*(\tau_1)}\},\cdots, \mathbb{E}\{J_{k,l,0}^{(\tau)}J_{k,l,0}^{*(\tau_{p})}\}]$ and its $\tau^{\prime}$ element ${B}_{k,l,0,0}^{(\tau-\tau^{\prime})} = \mathbb{E}\{J_{k,l,0}^{(\tau)}J_{k,l,0}^{*(\tau^{\prime})}\}$ is given by~\eqref{eq:B_matrix_compute}. 

Furthermore, we have
\begin{align}
    &\mathbb{E}  \left\{{\boldsymbol{y}}_{l} {\boldsymbol{y}}_{l}^{\mathsf{H}}
    \right\}= \sigma^2 \boldsymbol{I}_{\tau_p}  \nonumber\\
&+ \mathbb{E} \big\{ \sum\nolimits_{k^{\prime}=1}^{K}\sqrt{p_{k^{\prime}}} \big[
 {{s}}^{(\tau_1)}_{t_{k^{\prime}},n_1}J_{k^{\prime},l,0}^{(\tau_1)},\cdots, {{s}}^{(\tau_p)}_{t_{k^{\prime}},n_{\tau_p}}J_{k^{\prime},l,0}^{(\tau_p)}
 \big]^{\mathsf{T}}
 {{h}}_{k^{\prime},l,n}  \nonumber\\
&\,\,\,\,\,\,\,\, \times {{h}}_{k^{\prime},l,n}^{*} \sqrt{p_{k}^{\prime}} \big[
 {{s}}^{(\tau_1)}_{t_{k^{\prime}},n_1}J_{k^{\prime},l,0}^{*(\tau_1)},\cdots, {{s}}^{(\tau_p)}_{t_{k^{\prime}},n_{\tau_p}}J_{k^{\prime},l,0}^{*(\tau_p)}
 \big]
 \big\}  
 \end{align}
 \begin{align}
 &+ \mathbb{E} \Big \{ \underbrace{\Big[\sum\nolimits_{k^{\prime}=1}^{K}\zeta_{k^{\prime},l,n_1}^{(\tau_1)},\cdots, \sum\nolimits_{k^{\prime}=1}^{K}\zeta_{k^{\prime},l,n_{\tau_p}}^{(\tau_p)}\Big]^{\mathsf{T}}}_{\triangleq \boldsymbol{\zeta}_{k^{\prime},l}} \boldsymbol{\zeta}_{k^{\prime},l}^{\mathsf{H}} 
 \Big\} \nonumber
 \nonumber\\
 &= \underbrace{\sum\nolimits_{k^{\prime}=1}^{K}p_{k^{\prime}}\beta_{k^{\prime},l} \boldsymbol{\Phi}_{k^{\prime}} + \boldsymbol{Z}^{\text{ICI}}_{l} + \sigma^2 \boldsymbol{I}_{\tau_p}}_{\triangleq \boldsymbol{\Psi}_{l}}, \label{eq:Psi_Joint_l}
\end{align}
where  the $(\tau_1,\tau_2)$ element of $ \boldsymbol{\Phi}_{k^{\prime},l} \in \mathbb{C}^{\tau_p \times \tau_p}$ is given in~\eqref{eq:Phi_joint}.
The $(\tau_1, \tau_2)$ element of the ICI component $\boldsymbol{Z}^{\text{ICI}}_{l} \in \mathbb{C}^{\tau_p \times \tau_p}$ is 
\begin{align}
    &[\boldsymbol{Z}^{\text{ICI}}_{l}]_{\tau_1, \tau_2}  \nonumber \\
    &= \mathbb{E}\Big\{\sum_{k^{\prime}=1}^K \zeta_{k^{\prime},l,n_1}^{(\tau_1)} \sum_{k^{\prime\prime}=1}^K \zeta_{k^{\prime\prime},l,n_{2}}^{*(\tau_2)}\Big\} =
    \sum\limits_{k^{\prime}=1}^K \mathbb{E}\left\{ \zeta_{k^{\prime},l,n_1}^{(\tau_1)} \zeta_{k^{\prime},l,n_2}^{*(\tau_2)}\right\}\nonumber \\
    &=
    \sum_{k^{\prime}=1}^K p_{k^{\prime}} \beta_{k^{\prime},l} \sum_{\substack{\jmath_1 \neq n_1}}^{N-1} \sum_{\substack{\jmath_2 \neq n_2}}^{N-1}\mathbb{E} \{ {{s}}^{(\tau_1)}_{t_{k^{\prime}},\jmath_1} {{s}}^{*(\tau_2)}_{t_{k^{\prime}},\jmath_2} J^{(\tau_1)}_{k^{\prime},l,n_1-\jmath_1} J^{*(\tau_2)}_{k^{\prime},l,n_2-\jmath_2} \}  \nonumber 
        \end{align}
    \begin{align}
    &=\begin{cases} 
0 & \tau_1 \neq \tau_2 \\
\sum\limits_{k^{\prime}=1}^K p_{k^{\prime}} \beta_{k^{\prime},l} \sum\limits_{\substack{\jmath \neq n}}^{N-1}  B_{k^{\prime},l,n-\jmath, n-\jmath}^{(0)}   & \shortstack{ $\tau_1 = \tau_2$, \\ $n_1=n_2=n$}  \\
\sum\limits_{k^{\prime}=1}^K p_{k^{\prime}} \beta_{k^{\prime},l} \sum\limits_{\substack{\jmath \neq n_1,n_2}}^{N-1}  
B_{k^{\prime},l,n_1-\jmath, n_2-\jmath}^{(0)}  \\ \quad \quad  +s_{t_{k^{\prime}},n_2}^{(\tau_1)} s_{t_{k^{\prime}},n_1}^{*(\tau_1)} B_{k^{\prime},l,n_1-n_2, n_1-n_2}^{(0)}  & \shortstack{$\tau_1 = \tau_2$,\\ $n_2 \neq n_1$.}
\end{cases}
\label{eq:Zeta_ICI_full_expression}
\end{align}
where the correlation term ${B}_{k^{\prime},l,n_1-\jmath_1,n_2-\jmath_2}^{(0)}$ is given by \eqref{eq:B_matrix_compute}.

\rev{
\section{Computational Complexity and Fronthaul Overhead Analysis}\label{Appendix:complexity}}
\rev{
Assuming all statistical matrices are precomputed and remain constant during communication, the computational complexity for the considered channel and PN estimators is quantified in terms of complex multiplications and divisions, as these operations dominate the total complexity~\cite[Sec. B.1.1]{bjornson2017massive}. Additionally, the required fronthaul resources are calculated for each estimator, expressed either by the number of pilot and data symbols or the number of channel and PN estimates. The results are summarized in Table~\ref{tab:complexity_estimators} and~\ref{tab:Overhead_estimators}.
Note that both complexity and overhead calculation assume the use of MMSE combining in~\eqref{eq:v_k_MMSE}. Specifically, computational complexity is calculated per coherence block for a generic UE $k$, and fronthaul overhead is calculated per coherence block for a generic AP $l$. 
\subsection{Distributed Joint Channel and CPE LMMSE Estimator~\eqref{eq:LMMSE_estimator_dist}}\label{Appendix:complexity_joint_distrib}
Using Lemma B.2 in~\cite{bjornson2017massive}, computing $\boldsymbol{\Psi}_l^{-1} \boldsymbol{y}_l^p$ in~\eqref{eq:LMMSE_estimator_dist} requires $\tau_p^2$ complex multiplications and $\tau_p$ complex divisions. In addition, multiplying by  $\boldsymbol{s}_{t_k}^{\mathsf{H}}$ and the diagonal matrix $ \boldsymbol{B}_{k,l}^{(\tau)}$ add another  $2\tau_p$ complex multiplications. In total, the computation requires $\tau_p^2+3\tau_p$ complex multiplications and divisions. This process is repeated for each of the $\tau_c$ OFDM symbols in a coherence block. The MMSE combining~\eqref{eq:v_k_MMSE} requires channel estimates for all $K$ UEs and for the APs serving UE $k$, i.e., $l \in \mathcal{M}_k$, yielding the results in Table~\ref{tab:complexity_estimators}.
}

\rev{
Regarding the fronthaul overhead, each AP $l$ must either forward $\tau_p$ pilot symbols to the CPU, which then performs the channel and PN estimation, or compute local estimates for $|\mathcal{D}_l|$ UEs and $\tau_c$ OFDM symbols in a coherence block and forward these estimates to the CPU. Additionally, for the data signals, each AP must send each uplink data symbol to the CPU and subsequently receive each downlink precoding coefficient from the CPU, thus in total $(\tau_cN_c - \tau_p)$ complex scalars.}

\rev{
\subsection{Distributed Channel DL Estimator~\eqref{eq:DL_estimator}} \label{Appendix:complexity_DL}
Each prediction requires $(2\tau_pM_1 + M_1 M_2 + 2 M_2K)$ real-valued multiplications to generate channel predictions for all $K$ UEs, which is equivalent to dividing by $4$ to approximate the cost in complex multiplications. The prediction is performed for APs serving UE $k$, i.e., $l \in \mathcal{M}_k$, for the MMSE combining~\eqref{eq:v_k_MMSE}.}

\rev{Regarding the fronthaul overhead, each AP $l$ must either forward $\tau_p$ pilot symbols to the CPU, which then performs the DL channel estimation, or compute local estimates for $|\mathcal{D}_l|$ UEs and $1$ OFDM symbol and forward these estimates to the CPU. The overhead for data signals is the same as in Appendix~\ref{Appendix:complexity_joint_distrib}.}

\rev{
\subsection{Distributed Channel LMMSE Estimator~\eqref{eq:LMMSE_h_knownJ}} \label{Appendix:complexity_h_knownJ}
The complexity analysis is similar to that in Appendix~\ref{Appendix:complexity_joint_distrib} for~\eqref{eq:LMMSE_estimator_dist}, except that only one channel estimate is computed per coherence block. The prediction is performed for all $K$ UEs and the APs ($l \in \mathcal{M}_k$) for the MMSE combining~\eqref{eq:v_k_MMSE}.
}

\rev{
Regarding the fronthaul overhead for each iteration in Algorithm~\ref{alg:LMMSE_central}, each AP $l$ must either forward $\tau_p$ pilot symbols to the CPU for channel estimation or locally compute channel estimates for $|\mathcal{D}_l|$ UEs over one OFDM symbol via~\eqref{eq:LMMSE_h_knownJ}. Each local estimation requires $\tau_p$ CPE estimates,  $\hat{\boldsymbol{J}}_{k,l}^{\text{p},i}$, to be sent from the CPU, resulting in a total of $|\mathcal{D}_l| \tau_p$ complex scalars transmitted. Finally, each AP $l$ must send its computed $|\mathcal{D}_l|$ channel estimates back to the CPU for the subsequent centralized CPE estimation step in~\eqref{eq:lemma_LMMSE_J_1}. The data signal overhead is the same as in Appendix~\ref{Appendix:complexity_joint_distrib}.
}
\rev{
\subsection{Centralized CPE LMMSE estimator~\eqref{eq:lemma_LMMSE_J_1}} \label{Appendix:complexity_J_central}
In~\eqref{eq:lemma_LMMSE_J_1}, the computation of $\breve{\boldsymbol{\Psi}}^{-1}(\boldsymbol{y}^p-\bar{\boldsymbol{y}}^p)$ is performed once for all $K$ UEs and $L$ APs, requiring $(L^2\tau_p^2+L\tau_p)$ complex multiplication and divisions. When using the MMSE combining~\eqref{eq:v_k_MMSE}, the multiplication by $\boldsymbol{b}_{k,l}^{\mathsf{T}(\tau)}$ must be executed for each of the $\tau_c$ OFDM symbols, for all $K$ UEs and and for the APs serving UE $k$, $l \in \mathcal{M}_k$.
}

\rev{
The centralized CPE estimator~\eqref{eq:lemma_LMMSE_J_1} requires each AP $l$ to forward $\tau_p$ pilot symbols to the CPU. The overhead associated with the required distributed channel estimates in~\eqref{eq:lemma_LMMSE_J_1} is already accounted for, provided in Appendix~\ref{Appendix:complexity_h_knownJ}. The data signal overhead is the same as in Appendix~\ref{Appendix:complexity_joint_distrib}.
}

\section{Proof of Lemma 2}\label{Appendix:LMMSE_CPE_Central}
Leveraging the LMMSE theorem\cite{kay1993fundamentals} and given the channel estimates $\hat{\boldsymbol{h}}_n$, the LMMSE estimate of ${{J}}_{k,l,0}^{(\tau)}$ based on pilots at the CPU $\boldsymbol{y}^{\text{p}}$ is given by 
\begin{align}
    \hat{{J}}_{k,l,0}^{(\tau)} 
    &= \mathbb{E}  \{{{J}}_{k,l,0}^{(\tau)}\} + \mathbb{E}  \{{{J}}_{k,l,0}^{(\tau)} {\boldsymbol{y}}_{}^{\text{p}\mathsf{H}} \} \big(\mathbb{E}\{ {\boldsymbol{y}}^{\text{p}} {\boldsymbol{y}}_{}^{\text{p} \mathsf{H}}\}\big)^{-1} ({\boldsymbol{y}}^{\text{p}} - \mathbb{E}  \{{\boldsymbol{y}}_{}^{\text{p}}\} ), \label{eq:LMMSE_general_CPE}
\end{align}
where $\mathbb{E}  \{{{J}}_{k,l,0}^{(\tau)}\}=\bar{{J}}_{k,l,0}^{(\tau)}$ defined in~\eqref{eq:E_J}, and $\mathbb{E}  \{{\boldsymbol{y}}_{}^{\text{p}}\} \triangleq 
\bar{\boldsymbol{y}}_{}^{\text{p}}$ with its $l\tau_p$-th to $(l+1)\tau_p$-th elements $\mathbb{E} \{\boldsymbol{y}_l^{\text{p}}\} = 
\sum_{k=1}^{K} \sqrt{p_k} \boldsymbol{s}_{t_k}^{\text{p}} \bar{\boldsymbol{J}}_{k,l}^{\text{p}} \hat{h}_{k,l,n}$. Also, we have
\begin{align}
    \mathbb{E}  \{{{J}}_{k,l,0}^{(\tau)} {\boldsymbol{y}}_{}^{\text{p}\mathsf{H}}\} \triangleq \boldsymbol{b}_{k,l}^{\rev{\mathsf{T}}(\tau)} = \big[\mathbb{E}  \{{{J}}_{k,l,0}^{(\tau)} {\boldsymbol{y}}_{1}^{\mathsf{H}} \}, \cdots, \mathbb{E}  \{{{J}}_{k,l,0}^{(\tau)} {\boldsymbol{y}}_{L}^{\mathsf{H}} \}\big], \label{eq:E_Jy_CPE}
\end{align}
with its $\tau^{\prime}l^{\prime}$-th element $\mathbb{E} \{{{J}}_{k,l,0}^{(\tau)} [{\boldsymbol{y}}_{l^{\prime}}^{\mathsf{H}}]_{\tau^{\prime}} \}$ defined in~\eqref{eq:LMMSE_CPE_central_EJy}. 

Furthermore, in~\eqref{eq:LMMSE_general_CPE} we have $\mathbb{E}\{ {\boldsymbol{y}}_{}^{\text{p}} {\boldsymbol{y}}_{}^{{\text{p}}\mathsf{H}}\} \triangleq \breve{\boldsymbol{\Psi}} \in \mathbb{C}^{\tau_pL \times \tau_p L} $, and its submatrix $\breve{\boldsymbol{\Psi}}_{l_1,l_2}$ consisting of the elements from rows $l_1\tau_p$ to $(l_1+1)\tau_p$ and columns $l_1\tau_p$ to $(l_1+1)\tau_p$ can be calculated similar to~\eqref{eq:Psi_Joint_l} as
\begin{align}
    &\breve{\boldsymbol{\Psi}}_{l_1,l_2} = \mathbb{E}  \left\{{\boldsymbol{y}}_{l_1} {\boldsymbol{y}}_{l_2}^{\mathsf{H}}
    \right\} = \sigma^2 \boldsymbol{I}_{\tau_p} +\mathbb{E} \big\{ \sum_{k=1}^{K}\sqrt{p_k} \big[
 {{s}}^{(\tau_1)}_{t_k,n_1} {J}_{k,l_1,0}^{(\tau_1)}, \nonumber\\
 &\cdots, {{s}}^{(\tau_p)}_{t_k,n_{\tau_p}} {J}_{k,l_1,0}^{(\tau_p)}
 \big]^{\mathsf{T}}
 {\hat{h}}_{k,l_1,n}  \times {\hat{h}}_{k,l_2,n}^{*} \sqrt{p_k} \big[
 {{s}}^{(\tau_1)}_{t_k,n_1} {J}_{k,l_2,0}^{*(\tau_1)}, \nonumber \\ 
 &\cdots, {{s}}^{(\tau_p)}_{t_k,n_{\tau_p}} {J}_{k,l_2,0}^{*(\tau_p)}
 \big]
 \big\} + \mathbb{E} \big \{ \boldsymbol{\zeta}_{k,l_1} \boldsymbol{\zeta}_{k,l_2}^{\mathsf{H}} 
 \big\} \nonumber\\
 &={\sum\nolimits_{k=1}^{K}p_{k}\hat{h}_{k,l_1,n}  \hat{h}_{k,l_2,n}^{*}\breve{\boldsymbol{\Phi}}_{k,l_1,l_2} + \breve{\boldsymbol{Z}}^{\text{ICI}}_{l_1,l_2} + \sigma^2 \boldsymbol{I}_{\tau_p}}.\label{eq:Psi_l_1_l_2}
\end{align}
Here, the $(\tau_1,\tau_2)$ element of $ \breve{\boldsymbol{\Phi}}_{k,l_1,L2} \in \mathbb{C}^{\tau_p \times \tau_p}$ is 
\begin{align}
    [\breve{\boldsymbol{\Phi}}_{k,l_1,l_2}]_{\tau_1, \tau_2} = {s}^{(\tau_1)}_{t_k,n_1} {s}^{*(\tau_2)}_{t_k,n_2} \breve{B}_{k,k,l_1,l_2,0,0}^{(\tau_1-\tau_2)},
\end{align} 
where $\breve{B}_{k,k,l_1,l_2,0,0}^{(\tau_1-\tau_2)}$ is calculated following~\eqref{eq:B_matrix_compute_diff}. In~\eqref{eq:Psi_l_1_l_2}, the $(\tau_1, \tau_2)$ element of the ICI component $\breve{\boldsymbol{Z}}^{\text{ICI}}_{l_1,l_2} \in \mathbb{C}^{\tau_p \times \tau_p}$ is given in~\eqref{eq:ICI_CPE_central}. Note that the channel estimates $\{\hat{h}_{k^{},l^{},n}^{}\}$  are used in~\eqref{eq:E_Jy_CPE},~\eqref{eq:Psi_l_1_l_2}, and~\eqref{eq:ICI_CPE_central}.

\begin{figure*}[!t]
\scriptsize
\begin{align}
    [\boldsymbol{Z}^{\text{ICI}}_{l_1,l_2}]_{\tau_1, \tau_2}  &= \mathbb{E}\Big\{\sum_{k=1}^K \zeta_{k,l_1,n_1}^{(\tau_1)} \sum_{k^{\prime}=1}^K \zeta_{k^{\prime},l_2,n_{2}}^{*(\tau_2)}\Big\} =
    \sum\limits_{k=1}^K \mathbb{E}\left\{ \zeta_{k,l_1,n_1}^{(\tau_1)} \zeta_{k^{},l_2,n_2}^{*(\tau_2)}\right\}
    =\sum_{k=1}^K p_k  \sum_{\substack{\jmath_1 \neq n_1}}^{N-1} \sum_{\substack{\jmath_2 \neq n_2}}^{N-1}\mathbb{E} \{ \hat{h}_{k,l_1,\jmath_1} \hat{h}_{k,l_2,\jmath_2}^{*} {{s}}^{(\tau_1)}_{t_k,\jmath_1} {{s}}^{*(\tau_2)}_{t_k,\jmath_2} J^{(\tau_1)}_{k,l_1,n_1-\jmath_1} J^{*(\tau_2)}_{k,l_2,n_2-\jmath_2} \} 
    \nonumber\\
    &= \begin{cases} 
0 & \tau_1 \neq \tau_2 \\
\sum_{k=1}^K p_k  \Big( \sum\limits_{\substack{\jmath_1 \neq n \\ \jmath_1 \in \mathcal{N}_r }}^{N-1}  \hat{h}_{k,l,\jmath_1} \hat{h}_{k,l,\jmath_1}^{*} \mathbb{E}\{|s|^2\} \mathbb{E}\{J_{k,l,n-\jmath_1}^{(\tau)} J_{k,l,n-\jmath_1}^{* (\tau)} \} + \sum\limits_{\substack{\jmath_2 \neq n \\ \jmath_2 \notin \mathcal{N}_r}}^{N-1} \beta_{k,l} \mathbb{E}\{|s|^2\} \mathbb{E}\{J_{k,l,n-\jmath_2}^{(\tau)} J_{k,l,n-\jmath_2}^{* (\tau)} \}    \Big)& \shortstack{ $\tau_1 = \tau_2 = \tau$, \\ $l_1=l_2=l$,\\ $n_1=n_2=n$} \\
\sum_{k=1}^K p_k  \Big( \sum\limits_{\substack{\jmath_1 \neq n_1 \\ \jmath_1 \in \mathcal{N}_p }}^{N-1}   \hat{h}_{k,l_1,\jmath_1} \hat{h}_{k,l_2,\jmath_1}^{*} \mathbb{E}\{|s|^2\} \mathbb{E}\{J_{k,l_1,n-\jmath_1}^{(\tau)} J_{k,l_2,n-\jmath_1}^{* (\tau)} \}   \Big)  & \shortstack{$l_1 \neq l_2$, \\ $\tau_1 = \tau_2=\tau$,\\ $n_1=n_2=n$} 
\end{cases}
\label{eq:ICI_CPE_central}
\end{align}
\hrule
\end{figure*}

\section{Proof of Lemma 3}\label{Appendix:LMMSE_h}
Given the CPE estimates $\{\hat{J}_{k,l,0}^{(\tau)}\}$ and received pilots $\boldsymbol{y}_l^{\text{p}}$, the distributed LMMSE estimates of the channel $\{{h}_{k,l,n}^{}\}$ can be formulated as $\hat{{h}}_{k,l,n}^{}=\mathbb{E}  \{{{h}}_{k,l,n}^{} {\boldsymbol{y}}_{l}^{\text{p}\mathsf{H}} \} \big(\mathbb{E}\{ {\boldsymbol{y}}_{l}^{\text{p}} {\boldsymbol{y}}_{l}^{\text{p}\mathsf{H}}\}\big)^{-1} {\boldsymbol{y}}_{l}^{\text{p}}$, where we have
     $\mathbb{E}  \{{{h}}_{k,l,n}^{} {\boldsymbol{y}}_{l}^{\text{p}\mathsf{H}}\} =  \sqrt{p_k}\beta_{k,l} (\boldsymbol{s}_{t_k}^{\text{p}})^{\mathsf{H}} \boldsymbol{J}_{k,l}^{\text{p}}$ 
and the term $\mathbb{E}\{ {\boldsymbol{y}}_{l}^{\text{p}} {\boldsymbol{y}}_{l}^{\text{p}\mathsf{H}}\} \triangleq \tilde{\boldsymbol{\Psi}}_l$ can be calculated similarly to~\eqref{eq:Psi_Joint_l} by replacing the unknown CPEs with CPE estimates $\hat{\boldsymbol{J}}_{k,l}^{\text{p}}$ for all pilot OFDM symbols. The details are omitted for simplicity. In the end, comparing  $\tilde{\boldsymbol{\Psi}}_l$ to ${\boldsymbol{\Psi}}_l$ in~\eqref{eq:Psi_Joint_l}, we have a different term $\tilde{\boldsymbol{\Phi}}_k$ given in~\eqref{eq:Phi_h}.


\rev{
\section{Convergence Analysis of Algorithm 1} \label{Appendix:Alg1_convg}
}
\rev{
The LMMSE optimization problem of effective channels $\{{h}_{k,l,n}^{(\tau)}\} = \{{{J}}_{k,l,0}^{(\tau)} {h}_{k,l,n}\}$  is formulated as 
\begin{align}
\begin{aligned}
    \hat{h}_{k,l,n}^{(\tau)} &= \arg \underset{{h}_{k,l,n}^{(\tau)}}{\min} \mathbb{E}\left\{\left\|{h}_{k,l,n}^{(\tau)} - \hat{h}_{k,l,n}^{(\tau)} \right\|^2 \right\},  \\
    \text{ s.t.  }  {h}_{k,l,n}^{(\tau)} &= \boldsymbol{A}_{k,l,n}^{(\tau)}\boldsymbol{y}^{p} + b_{k,l,n}^{(\tau)},
    \label{eq:mainproblem_joint},
    \end{aligned}
\end{align}
where the constraint in~\eqref{eq:mainproblem_joint} imposes a linear estimator with parameters $\boldsymbol{A}_{k,l,n}^{(\tau)}$ and $b_{k,l,n}^{(\tau)}$.
}

\rev{
In Algorithm 1, we decompose the problem~\eqref{eq:mainproblem_joint} into two subproblems, each optimizing over CPEs, $\{{{J}}_{k,l,0}^{(\tau)}\}$, and channels $\{{h}_{k,l,n}\}$ while keeping the other variable fixed in an alternating fashion.} 
\rev{
The LMMSE subproblem for CPE estimation at the $i$-th iteration, $\hat{{J}}_{k,l,0}^{(\tau),i}$, can be cast as
\begin{align}
\begin{aligned}
    \hat{{J}}_{k,l,0}^{(\tau),i} &= \arg \underset{{{J}}_{k,l,0}^{(\tau),i}}{\min} \mathbb{E}\left\{\left\|{{J}}_{k,l,0}^{(\tau),i} - \hat{{J}}_{k,l,0}^{(\tau),i} \right\|^2 \right\}, \label{eq:subproblem_CPE}
    \\
    \text{ s.t.  }  {J}_{k,l,n}^{(\tau),i} &= \breve{\boldsymbol{A}}_{k,l,n}^{(\tau),i}\boldsymbol{y}^{p} + \breve{b}_{k,l,n}^{(\tau),i},
    \end{aligned}
\end{align}
which is solved in closed-form by the constraint LMMSE estimator \eqref{eq:lemma_LMMSE_J_1}.}

\rev{
The LMMSE subproblem for channel estimation at the $i$-th iteration, $\hat{{h}}_{k,l,n}^{i}$, can be cast as
\begin{align}
\begin{aligned}
    \hat{{h}}_{k,l,n}^{i} &= \arg \underset{{{h}}_{k,l,n}^{i}}{\min} \mathbb{E}\left\{\left\|{{h}}_{k,l,n}^{i} - \hat{{h}}_{k,l,n}^{i} \right\|^2 \right\}, \label{eq:subproblem_h}
    \\
    \text{ s.t.  }  {h}_{k,l,n}^{i} &= \tilde{\boldsymbol{A}}_{k,l,n}^{i}\boldsymbol{y}^{p} + \tilde{b}_{k,l,n}^{i},
    \end{aligned}
\end{align}
which is solved in closed-form by the LMMSE estimator \eqref{eq:LMMSE_h_knownJ}.}

\rev{
\begin{proposition}
    Algorithm~\ref{alg:LMMSE_central}, which solves~\eqref{eq:subproblem_CPE} and~\eqref{eq:subproblem_h} in an alternating fashion, converges to a stationary point of the optimization problem in~\eqref{eq:mainproblem_joint}. 
\end{proposition}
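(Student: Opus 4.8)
The plan is to recognize Algorithm~\ref{alg:LMMSE_central} as a two-block Gauss--Seidel (block coordinate descent) scheme applied to the mean-square-error objective in~\eqref{eq:mainproblem_joint}, and to establish convergence through the standard monotone-descent-plus-regularity argument. Writing the joint cost as $f(\{J_{k,l,0}^{(\tau)}\}, \{h_{k,l,n}\}) = \sum \mathbb{E}\{|h_{k,l,n}^{(\tau)} - \hat{h}_{k,l,n}^{(\tau)}|^2\}$ with the product structure $h_{k,l,n}^{(\tau)} = J_{k,l,0}^{(\tau)} h_{k,l,n}$, the cost is bilinear in the two blocks: for fixed channels it is quadratic (hence convex) in the CPE-block linear-estimator parameters, and symmetrically for the channel block. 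The first step is therefore to argue that each half-iteration solves its block subproblem to global optimality: the constrained LMMSE estimator~\eqref{eq:lemma_LMMSE_J_1} is, by construction, the minimum-MSE estimator for~\eqref{eq:subproblem_CPE}, and~\eqref{eq:LMMSE_h_knownJ} is the minimum-MSE estimator for~\eqref{eq:subproblem_h}, each coupled to the other block only through the current estimates used to form the correlation matrices.

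From optimal block updates it follows that the objective is monotonically non-increasing along the iterations, $f^{(i+1)} \le f^{(i)}$. Since $f$ is an expected squared norm it is bounded below by zero, so the sequence $\{f^{(i)}\}$ converges. The second step is to upgrade convergence of the objective values to stationarity of the iterates. For this I would invoke the two-block convergence result for block coordinate descent (e.g., the Grippo--Sciandrone two-block theorem, which for exactly two blocks does not require the joint objective to be convex): provided each block subproblem admits a \emph{unique} minimizer and the objective is continuously differentiable, every limit point of the iterate sequence is a coordinate-wise minimum, which for two blocks coincides with a stationary point of~\eqref{eq:mainproblem_joint}. Uniqueness of each block minimizer would be inherited from positive definiteness of the relevant correlation matrices $\breve{\boldsymbol{\Psi}}$ and $\tilde{\boldsymbol{\Psi}}_l$, guaranteed by the additive $\sigma^2 \mathbf{I}$ term, so the LMMSE solutions are unique.

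The hard part will be handling the amplitude constraint $g(\cdot)$ in~\eqref{eq:constraint_func_def} within this framework, because the feasible annulus $\kappa_{\min} \le |x| \le \kappa_{\max}$ is \emph{non-convex}, so the CPE subproblem is no longer a convex program and $g(\cdot)$ acts as a radial projection rather than an exact minimizer over a convex set. I would address this by either (i) folding the projection into the objective and proving descent for the composite map, verifying that the radial clipping is continuous and non-expansive so that the limit-point argument survives, or (ii) establishing the result for the unconstrained LMMSE recursion first and then arguing separately that the projection merely confines the iterates to a compact set, which has the additional benefit of guaranteeing the existence of limit points. A secondary subtlety to check is that the bilinear coupling makes $f$ globally non-convex, so the conclusion is necessarily limited to \emph{stationary} points rather than the global MMSE solution; the write-up should state this explicitly and confirm that the differentiability and boundedness hypotheses of the cited block-descent theorem genuinely hold for the Gaussian-channel and Wiener-phase-noise statistics used here.
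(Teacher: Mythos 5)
Your proposal is correct and follows essentially the same route as the paper: the paper's Appendix on the convergence of Algorithm~\ref{alg:LMMSE_central} likewise splits \eqref{eq:mainproblem_joint} into the two block subproblems \eqref{eq:subproblem_CPE} and \eqref{eq:subproblem_h}, observes that each is solved exactly and uniquely in closed form by \eqref{eq:lemma_LMMSE_J_1} and \eqref{eq:LMMSE_h_knownJ}, and then invokes a standard two-block alternating-optimization convergence result (the paper cites \cite[Prop.~1]{aubry2018new} where you would cite the Grippo--Sciandrone two-block theorem; the descent-plus-uniqueness mechanics are the same). Your explicit treatment of the amplitude constraint $g(\cdot)$ in \eqref{eq:constraint_func_def} --- noting that the annulus $\kappa_{\min}\le|x|\le\kappa_{\max}$ is non-convex so the clipped update is a radial projection rather than an unconstrained block minimizer --- is in fact more careful than the paper, which simply asserts that the constrained estimator ``provides a unique solution'' of \eqref{eq:subproblem_CPE} without addressing this point.
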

\begin{proof}
    The closed-form solutions given by~\eqref{eq:lemma_LMMSE_J_1} provide a unique solution of the CPE estimation subproblem in~\eqref{eq:subproblem_CPE}.  The channel estimation subproblem in~\eqref{eq:LMMSE_h_knownJ} is an optimal solution of the channel estimation subproblem in~\eqref{eq:subproblem_h}. It then follows from~\cite[Proposition 1]{aubry2018new} that Algorithm~\ref{alg:LMMSE_central} converges to a stationary point of~\eqref{eq:mainproblem_joint}. Step 1 in Algorithm~\ref{alg:LMMSE_central} uses either the LMMSE~\eqref{eq:LMMSE_h_knownJ} or DL estimator~\eqref{eq:DL_estimator} to compute a reliable initial guess for the iterative process, rather than relying on random initialization.
\end{proof}
}
\rev{
\section{Scalable Uplink Data Transmission} \label{Appendix:uplink_data_transmission}
Following the scalable uplink data transmission protocol in~\cite{bjornson2020scalable},  each UE $k$ is served by a subset of APs, defined as~\cite{bjornson2020scalable,buzzi2017cell}
\begin{align}
    \mathcal{M}_k = \{l: d_{k,l}=1, l\in \{1,\cdots,L\}\}
    \label{eq:M_k_define},
\end{align}
where $d_{k,l} \in \{0,1\}$ indicates if UE $k$ and AP $l$ communicate, based on the \ac{dcc} framework~\cite{bjornson2011optimality}. Similarly, we define the set of UEs that are served by AP $l$ by
\begin{align}
    \mathcal{D}_{l} = \{k : d_{k,l}=1, k \in \{1,\cdots, K\}\}
    \label{eq:D_l_define}.
\end{align}
The uplink data transmission signal model follows~\cite[Eq. (15)]{wu2023phasenoise}, and we omit it for simplicity. At the receiver, the CPU selects $v_{k,l,n}^{(\tau)}$ for UE $k$ and AP $l$, then estimates $s_{k,n}^{(\tau)}$ by
\begin{align}
    \hat{s}_{k,n}^{(\tau)}
    &= \underbrace{\boldsymbol{v}_{k,n}^{\mathsf{H},(\tau)} \boldsymbol{D}_k \boldsymbol{h}_{k,n}^{(\tau)} s_{k,n}^{(\tau)}}_{\text{Desired signal}} + \underbrace{\sum\nolimits_{\substack{i\neq k}}^{K}\boldsymbol{v}_{k,n}^{\mathsf{H},(\tau)} \boldsymbol{D}_k \boldsymbol{h}_{i,n}^{(\tau)} s_{i,n}^{(\tau)}}_{\text{Inter-user interference}}  \nonumber\\
    &+ \underbrace{\sum\nolimits_{i=1}^{K} \boldsymbol{v}_{k,n}^{\mathsf{H},(\tau)} \boldsymbol{D}_k \boldsymbol{\zeta}_{i,n}^{(\tau)}}_{\text{ICI}} 
    + \boldsymbol{v}_{k,n}^{\mathsf{H},(\tau)} \boldsymbol{D}_k \boldsymbol{w}_{n}^{(\tau)},
    \label{eq:s_k_n_level4}
\end{align}
where $\boldsymbol{v}_{k,n}^{(\tau)}=[v_{k,1,n}^{(\tau)} ,\cdots, v_{k,L,n}^{(\tau)}]^{\mathsf{T}}$ denotes the collective combining vector, and $\boldsymbol{D}_k = \text{diag}([d_{k,1},\cdots d_{k,L}]^{\mathsf{T}})$ is a diagonal matrix representing the connections between UE $k$ and each of $L$ APs. Here $\boldsymbol{h}_{k,n}^{(\tau)}=
[h_{k,1,n}^{(\tau)},\cdots, h_{k,L,n}^{(\tau)}]^{\mathsf{T}}$, $\boldsymbol{\zeta}_{k,n}^{(\tau)}=[\zeta_{k,1,n}^{(\tau)},\cdots, \zeta_{k,L,n}^{(\tau)}]^{\mathsf{T}}$, and $\boldsymbol{w}_{n}^{(\tau)}  \sim \mathcal{N}_{\mathbb{C}}(\mathbf{0},\sigma^2\boldsymbol{I}_{L})$.
}

\section{Proof of Proposition 1}\label{sec:Appendix_SINR_lo_UatF}

Since the effective channels vary with each \ac{ofdm} symbol $\tau$, we follow the reference~\cite[Thm. 4.4]{demir2021foundations} using the \ac{uatf} bound to obtain an achievable SE for data subcarrier $n \in \mathcal{N}_d$ at  \ac{ofdm} symbol $\tau \in \{1 ,\cdots, \tau_c\}$. These achievable SEs are averaged over all $\tau_c$ \ac{ofdm} symbols to obtain~\eqref{eq:SE_UatF_PN_aware}. 

Specifically, by adding and subtracting the average effective channel $\mathbb{E}\left\{\boldsymbol{v}_{k,n}^{\mathsf{H},(\tau)} \boldsymbol{D}_k \boldsymbol{h}_{k,n}^{(\tau)}\right\}$, \eqref{eq:s_k_n_level4} is rewritten as
\begin{align}
    \hat{s}_{k,n}^{(\tau)} &= \mathbb{E}\left\{\boldsymbol{v}_{k,n}^{\mathsf{H},(\tau)} \boldsymbol{D}_k \boldsymbol{h}_{k,n}^{(\tau)}\right\}  s_{k,n}^{(\tau)}   + \nu_{k,n}^{(\tau)},
\end{align}
where the interference term is
\begin{align}
    &\nu_{k,n}^{(\tau)} = \Big(\boldsymbol{v}_{k,n}^{\mathsf{H},(\tau)} \boldsymbol{D}_k \boldsymbol{h}_{k,n}^{(\tau)}  - \mathbb{E}\left\{\boldsymbol{v}_{k,n}^{\mathsf{H},(\tau)} \boldsymbol{D}_k \boldsymbol{h}_{k,n}^{(\tau)}\right\}\Big)s_{k,n}^{(\tau)} +  \\& {\sum_{\substack{i=1 \\i\neq k}}^{K}\boldsymbol{v}_{k,n}^{\mathsf{H},(\tau)} \boldsymbol{D}_k \boldsymbol{h}_{i,n}^{(\tau)} s_{i,n}^{(\tau)}} 
    + {\sum_{i=1}^{K} \boldsymbol{v}_{k,n}^{\mathsf{H},(\tau)} \boldsymbol{D}_k \boldsymbol{\zeta}_{i,n}^{(\tau)}}
    + \boldsymbol{v}_{k,n}^{\mathsf{H},(\tau)} \boldsymbol{D}_k \boldsymbol{w}_{n}^{(\tau)}.\nonumber 
\end{align}
This can be viewed as a deterministic channel with a gain $\mathbb{E}\{\boldsymbol{v}_{k,n}^{\mathsf{H},(\tau)} \boldsymbol{D}_k \boldsymbol{h}_{k,n}^{(\tau)}\}$ and additive interference plus noise term $\nu_{k,n}^{(\tau)}$ that has zero mean. Note that $\nu_{k,n}^{(\tau)}$ is uncorrelated with the desired signal $s_{k,n}^{(\tau)}$ due to the independence  between each of the zero-mean symbols $s_{k,n}^{(\tau)}$, i.e., $\mathbb{E}\{s_{i,n}^{(\tau)} s_{j,n}^{(\tau)} \}=\mathbb{E}\{s_{i,n}^{(\tau)} s_{i,\jmath}^{(\tau)} \}=0$ for $n,j\in\mathcal{N}_d$.
The denominator of the SINR is obtained by
\begin{align}
    \mathbb{E}&\{|\nu_{k,n}^{(\tau)}|^2\} 
    = \sum\nolimits_{i=1 }^{K} p_i  \mathbb{E}\Big\{\left|\boldsymbol{v}_{k,n}^{\mathsf{H},(\tau)} \boldsymbol{D}_k \boldsymbol{h}_{i,n}^{(\tau)}\right|^2\Big\}   +{{\rho}^{\text{ICI},(\tau)}_{k,n}}  \nonumber\\
    &\;\;\;  - p_k \left|\mathbb{E}\Big\{\boldsymbol{v}_{k,n}^{\mathsf{H}, (\tau)} \boldsymbol{D}_k \boldsymbol{h}_{k,n}^{(\tau)}\right\}\Big|^2 + \sigma^2\mathbb{E}\Big\{\left|\boldsymbol{v}_{k,n}^{(\tau)} \boldsymbol{D}_k \right|^2\Big\}.
\end{align}
Here, the ICI term ${\rho}^{\text{ICI},(\tau)}_{k,n}$ is computed as
\begin{align}
    \rho^{\text{ICI},(\tau)}_{k,n} \nonumber 
    &= \sum\nolimits_{i=1}^{K} \mathbb{E}\left\{ \boldsymbol{v}_{k,n}^{\mathsf{H},(\tau)} \boldsymbol{D}_k \boldsymbol{\zeta}_{i,n}^{(\tau)}    \boldsymbol{\zeta}_{i,n}^{\mathsf{H},(\tau)} \boldsymbol{D}_k^{} \boldsymbol{v}_{k,n}^{(\tau)} \right\} \nonumber\\
    &= \sum\nolimits_{i=1}^{K} \mathbb{E}\left\{ \boldsymbol{v}_{k,n}^{\mathsf{H},(\tau)} \boldsymbol{D}_k  \text{diag}(\boldsymbol{\lambda}_{i,n}^{(\tau)}) \boldsymbol{D}_k^{} \boldsymbol{v}_{k,n}^{(\tau)} \right\},\label{eq:rho_ICI}
\end{align}
where the $l$-th element of the ICI power $\boldsymbol{\lambda}_{i,n}^{(\tau)}\in \mathbb{C}^{L}$ is given by
\begin{align}
{\lambda}_{i,n,l}^{(\tau)} &= \sum\nolimits_{\substack{ \jmath \neq n}}^{N-1} \mathbb{E}\{|s^{(\tau)}_{i,\jmath}|^2\} \mathbb{E}\{|{h}_{i,l,\jmath}|^2\}  \mathbb{E}\{|{{J}}_{i,l,n-\jmath}^{(\tau)}|^2\}   \nonumber \\
& = p_i \beta_{i,l}  \sum\nolimits_{\substack{ \jmath \neq n}}^{N-1} {B}_{i,l,n-j,n-j}^{(0)}
    =p_i \beta_{i,l} (1-{B}_{i,l,0,0}^{(0)}),\label{eq:define_lamda}
\end{align}
where we view the unknown channels over subcarriers other than $n$ as random variables instead of realizations to reduce computation complexity.  In the ideal case of no \ac{pn}, ${\lambda}_{i,n,l}^{(\tau)}=0$ due to ${B}_{i,l,0,0}^{(0)}=1$  and $\rho_{k,n}^{\text{ICI},(\tau)}=0$, which turns the \ac{sinr} and \ac{se} expressions in~\eqref{eq:SINR_lo_UatF} and~\eqref{eq:SE_UatF_PN_aware} to be the same as in~\cite[Eq. (27), (28)]{bjornson2020scalable}. 

\bibliographystyle{IEEEtran}
\bibliography{reference_list}
\end{document}